
\typeout{IJCAI--ECAI 26 Instructions for Authors}


\documentclass{article}
\pdfpagewidth=8.5in
\pdfpageheight=11in

\usepackage{ijcai26}

\usepackage{times}
\usepackage{soul}
\usepackage{url}
\usepackage[hidelinks]{hyperref}
\usepackage[utf8]{inputenc}
\usepackage[small]{caption}
\usepackage{graphicx}
\usepackage{amsmath}
\usepackage{amsthm}
\usepackage{amsfonts}
\usepackage{amssymb}
\usepackage{booktabs}
\usepackage{algorithm}
\usepackage{algorithmic}
\usepackage[switch]{lineno}
\usepackage{subcaption}
\usepackage{tikz}
\usepackage[capitalize,noabbrev,nameinlink]{cleveref}
\usepackage{todonotes}
\usepackage{natbib}
\usepackage{comment}
\usepackage{thmtools}
\usepackage{mathtools}
\usepackage{multirow}
\usetikzlibrary{positioning, matrix, shapes.geometric, trees}  

\urlstyle{same}


\newtheorem{definition}{Definition}
\newtheorem{proposition}{Proposition}

\DeclareMathOperator*{\argmax}{arg\,max}

\includecomment{appendixcontent} 





\pdfinfo{
/TemplateVersion (IJCAI.2026.0)
}

\title{Equilibrium Refinements Improve Subgame Solving\\ in Imperfect-Information Games}

 
\author{
Ondřej Kubíček$^{1, 2}$
\and
Viliam Lisý$^{1}$\And
Tuomas Sandholm$^{2, 3, 4}$\\
\affiliations
$^1$Czech Technical University in Prague\\
$^2$Carnegie Mellon University\\
$^3$Strategy Robot, Inc.\\
$^4$Strategic Machine, Inc.\\
$^5$Optimized Markets, Inc.\\
\emails
kubicon3@fel.cvut.cz, viliam.lisy@agents.fel.cvut.cz,  sandholm@cs.cmu.edu
} 

\includeonly{
appendix 
}

\begin{document}

\newcommand{\RealNumbers}[0]{\mathbb{R}}
\newcommand{\BinaryNumbers}[0]{\mathbb{B}}
\newcommand{\NaturalNumbers}[0]{\mathbb{N}}
\newcommand{\Expectation}[0]{\mathbb{E}}

\newcommand{\Simplex}{\Delta}

\newcommand{\FOSGame}[0]{\mathcal{G}}
\newcommand{\Subgame}[0]{\FOSGame^{\PublicState}}
\newcommand{\GadgetGame}[1]{\FOSGame^{G,\PublicState}_{#1}}

\newcommand{\PublicIndex}[0]{0}
\newcommand{\InitialIndex}[0]{\text{INIT}}
\newcommand{\BRIndex}[0]{BR}

\newcommand{\Players}[0]{\mathcal{N}}
\newcommand{\Player}[0]{i}
\newcommand{\ChancePlayer}[0]{c}
\newcommand{\OtherPlayer}[0]{j}
\newcommand{\LastPlayer}[0]{N}
\newcommand{\PlayerFunction}[0]{p}

\newcommand{\WorldStates}[0]{\mathcal{W}}
\newcommand{\WorldState}[0]{w}
\newcommand{\InitWorldState}[0]{\WorldState^{\InitialIndex}}

\newcommand{\Histories}[0]{\mathcal{H}}
\newcommand{\History}[0]{h}
\newcommand{\InitHistory}[0]{\History^{\InitialIndex}}
\newcommand{\HistoryExtend}{\sqsubseteq}

\newcommand{\Actions}[1]{\mathcal{A}_{#1}}
\newcommand{\Action}[1]{a_{#1}}
\newcommand{\ActionLogit}[1]{A_{#1}}

\newcommand{\Observations}[1]{\mathcal{O}_{#1}}
\newcommand{\Observation}[1]{o_{#1}}
\newcommand{\ObservationSet}[1]{\mathbb{O}_{#1}}
\newcommand{\PublicObservations}[0]{\Observations{\PublicIndex}}
\newcommand{\PublicObservation}[0]{\Observation{\PublicIndex}}
\newcommand{\PublicObservationSet}[0]{\ObservationSet{\PublicIndex}}

\newcommand{\Rewards}[1]{\mathcal{R}_{#1}}
\newcommand{\Reward}[1]{r_{#1}}
\newcommand{\WeightedReward}[2]{R_{#1}^{#2}}

\newcommand{\Utility}[1]{u_{#1}}

\newcommand{\Transitions}[0]{\mathcal{T}}

\newcommand{\Infosets}[1]{\mathcal{S}_{#1}}
\newcommand{\Infoset}[1]{s_{#1}}
\newcommand{\PublicStates}[0]{\Infosets{\PublicIndex}}
\newcommand{\PublicState}[0]{\Infoset{\PublicIndex}}
\newcommand{\Abstracted}[1]{\overline{#1}}

\newcommand{\AuxiliaryInfoset}[1]{\Infoset{#1}^{G}}
\newcommand{\AuxiliaryInfosets}[1]{\Infosets{#1}^{G}}

\newcommand{\Trajectory}[0]{\tau}
\newcommand{\TrajectoryLength}[0]{l}
\newcommand{\TrajectoryStep}[0]{t}

\newcommand{\TimeStep}[0]{t}
\newcommand{\UnrollStep}[0]{k}

\newcommand{\Strategy}[1]{\pi_{#1}}
\newcommand{\Strategies}[1]{\Pi_{#1}}
\newcommand{\SamplingStrategy}[1]{\mu_{#1}}
\newcommand{\Nash}[1]{\Strategy{#1}^*}
\newcommand{\EpsilonNash}[1]{\Strategy{#1}^{\epsilon}}
\newcommand{\BRStrategy}[1]{\Strategy{#1}^{\BRIndex}}
\newcommand{\BRSet}[1]{\BRIndex_{#1}} 
\newcommand{\BlueprintStrategy}[1]{\overline{\Strategy{#1}}}
\newcommand{\BRBlueprintStrategy}[1]{\overline{\Strategy{#1}^{\BRIndex}}}

\newcommand{\BeforeStrategy}[1]{\hat{\Strategy{#1}}}
\newcommand{\OptimalContinuationStrategy}[1]{\hat{\Strategy{}}_{#1}^*}

\newcommand{\StrategyUtility}[2]{\Utility{#1}^{#2}}
\newcommand{\CounterfactualValue}[2]{v_{#1}^{#2}}
\newcommand{\Regret}[1]{r_{#1}}

\newcommand{\Exploitability}[0]{\mathcal{E}}

\newcommand{\Reach}[1]{P^{#1}}
\newcommand{\PlayerReach}[2]{\Reach{#2}_{#1}}

\newcommand{\StrategyPrior}[1]{\Strategy{#1}^P}

\newcommand{\Sequence}[1]{\sigma_{#1}}
\newcommand{\Sequences}[1]{\Sigma_{#1}}

\newcommand{\SQFStrategy}[1]{\boldsymbol{\Strategy{#1}}}
\newcommand{\SQFValue}[1]{\boldsymbol{v_{#1}}}
\newcommand{\SQFUtility}[0]{\boldsymbol{A}}
\newcommand{\SQFInitRealization}[1]{\boldsymbol{f_{#1}}}
\newcommand{\SQFRealizationConstraints}[1]{\boldsymbol{F_{#1}}}

\newcommand{\SQFPrior}[0]{\boldsymbol{p}}

\newcommand{\PriorEpsilon}[0]{\epsilon}

\newcommand{\PerturbationBasis}[0]{\boldsymbol{B}}

\newcommand{\Transposition}[0]{\top}
\maketitle

\begin{abstract} 
Subgame solving is a technique for scaling algorithms to large games by locally refining a precomputed blueprint strategy during gameplay. While straightforward in perfect-information games where search starts from the current state, subgame solving in imperfect-information games must account for hidden states and uncertainty about the opponent's past strategy. \textit{Gadget games} were developed to ensure that the improved subgame strategy is robust against any possible opponent's strategy in a zero-sum game. Gadget games typically contain infinitely many Nash equilibria. We demonstrate that while these equilibria are equivalent in the gadget game, they yield vastly different performance in the full game, even when facing a rational opponent. We propose \textit{gadget game sequential equilibria} as the preferred solution concept. We introduce modifications to the sequence-form linear program and counterfactual regret minimization that converge to these refined solutions with only mild additional computational cost. Additionally, we provide several new insights into the surprising superiority of the \textit{resolving gadget game} over the \textit{max-margin gadget game}. Our experiments compare different Nash equilibria of gadget games in several standard benchmark games, showing that our refined equilibria consistently outperform unrefined Nash equilibria, and can reduce the exploitability of the overall strategy by more than 50\%.

\end{abstract}

\section{Introduction}
Settings where parties such as people, companies, or nations interact are ubiquitous, and game theory provides the solution concepts of how the parties should act rationally. It is therefore of significant real-world importance to be able to compute (approximate) game-theoretic solutions to large games. Recreational games have often served as computational benchmarks for this purpose. 

In perfect-information games, \textit{real-time subgame solving} techniques have been the core of game-solving algorithms since the beginning of AI.  Algorithms like minimax, alpha-beta search, or Monte Carlo tree search, explore the subgame tree from the single current state of the game forward~\citep{shoham2008multiagent,alphazero2018}. 

However, most real-world settings are imperfect-information games. They present challenges that those algorithms cannot address. There are several real states consistent with a single decision point. Consider poker as an example. The player can see their own cards but not the opponent's. So, each possible opponent hand defines a different state, yet they appear the same from the player's perspective. Furthermore, changing a part of an agent's strategy at some point of a game typically changes what that agent's optimal strategy is later \textit{and earlier} in the game. Thus, a subgame cannot be solved based solely on information from that subgame. 

Therefore, unlike in perfect-information games where subgame solving was a core technique from the outset, imperfect-information games where approached as a monolithic whole for the longest time. Real-time subgame solving started relatively recently~\citep{gilpin2006poker}. Since then, a host of better and better subgame solving techniques have been developed~\citep{gilpin2007abstraction,cfrd2014,ganzfried2015endgame,maxmargin2016,deepstack2017,reachmaxmargin2017,libratus2018,pluribus2019,zhang2021subgame,valuefunctions2023,studentofgames2023,obscuro2025}. In fact, subgame solving has made the central difference in achieving superhuman play in games such as two-player no-limit Texas hold'em poker~\citep{libratus2018}, multi-player no-limit Texas hold'em~\citep{pluribus2019}, and dark chess (aka. Fog-of-War chess)~\citep{obscuro2025}.

A common approach is to compute a coarse \textit{blueprint strategy} for each agent for the entire game, and then to conduct subgame solving on the fly to refine the blueprint strategy.
%
\textit{Gadget games} were developed to compute robust strategies in the subgames. These are modifications of the subgame that allow the opponent to adversarially choose a strategy it could have played in the past. The resulting optimal strategy of the player in such a gadget game is robust against any past strategy of the opponent~\citep{cfrd2014,maxmargin2016,reachmaxmargin2017}. However, existing gadget constructions, such as \textit{resolving} and \textit{max-margin}, only ensure that the player's strategy, in the worst case, is no worse than the blueprint. As a result, if the blueprint is poor, all of these methods may yield poor results.

We make the following contributions. 
1. We show that different equilibria, which all yield the same value in the gadget game, have vastly different performance (exploitability) in the full game, when combined with the blueprint. We introduce a weaker notion of sequential equilibrium, \textit{gadget game sequential equilibrium}, which is easier to compute, and propose that gadget games should be solved according to this concept. 2. We disprove the widely held belief that max-margin gadget games are strictly superior to resolving gadget games. This provides a new understanding of some prior works that have used the resolving gadget game for subgame solving rather than max-margin~\citep{deepstack2017,obscuro2025}. 3. We introduce efficient modifications to the \textit{sequence-form linear program (SQF)} and \textit{counterfactual regret minimization (CFR)} that converge to these refined equilibria with minimal additional computational overhead. 4. We empirically evaluate different types of Nash equilibria in several standard benchmark games. The results show that converging to gadget game sequential equilibria can reduce exploitability by more than 50\% compared to vanilla SQF and CFR. Moreover, the results show that in resolving gadget games, these equilibria exhibit similar performance gains as using unsafe subgame solving, but provably without the risk of performing worse than blueprint. 

\section{Background and Notation}
Two-player zero-sum factored-observation stochastic game is $\FOSGame = (\Players,\WorldStates, \InitWorldState, \PlayerFunction, \Actions{}, \Transitions, \Rewards{1},  \Observations{})$, where $\Players = \{1, 2, c\}$ is a set of players and $\ChancePlayer$ is a chance player, which plays fixed strategy and it is used to model stochasticity of the environment. $\WorldStates$ is the set of world states and $\InitWorldState \in \WorldStates$ is an initial world state. $\PlayerFunction: \WorldStates \to 2^\Players$ is a function that assigns acting players to a given world state $\WorldState$. $\Actions{} = \prod_{\Player \in \Players} \Actions{\Player}$ is the set of joint actions, we also use $\Actions{\Player}(\WorldState) \subseteq \Actions{\Player}$ to denote legal actions of $\Player$ in state $\WorldState$. $\Transitions:\WorldStates \times \Actions{} \to \WorldStates$ is a transition function and $\Rewards{1}: \WorldStates \times \Actions{} \to \RealNumbers $ is a reward function of player $1$. In a zero-sum setting following holds $\Rewards{2}(\WorldState, \Action{}) = -\Rewards{1}(\WorldState, \Action{})$. $\Observations{}: \WorldStates \times \Actions{} \times \WorldStates \to \ObservationSet{}$ is the observation function and $\ObservationSet{}: \ObservationSet{\PublicIndex} \times \ObservationSet{1} \times \ObservationSet{2}$ is the set of joint public observations and private observations. Similarly, observation function can be factored as $\Observations{} = (\Observations{\PublicIndex}, \Observations{1}, \Observations{2})$, where $\Observations{\PublicIndex}$ is a public observation given to each player and $\Observations{1}, \Observations{2}$ are private observations available to their respective players \citep{kovarik2022fosg}.

History $\History = \WorldState^0 \Action{}^0 \dots \Action{}^{\TrajectoryLength-1} \WorldState^{\TrajectoryLength} \in (\WorldStates \Actions{})^* \WorldStates$ is a finite sequence of world states and actions, which starts in the initial state $\WorldState^0 = \InitWorldState$ and each timestep $\TimeStep \in \{0, \dots, \TrajectoryLength - 1\}$ is a valid transition $\WorldState^{\TimeStep +1} = \Transitions(\WorldState^{\TimeStep}, \Action{}^{\TimeStep})$. $\Histories$ is a set of all possible histories in a game $\FOSGame$. We use $\History \HistoryExtend \History'$ to denote $\History'$ that contains $\History$ as a prefix and we will denote the initial history as $\InitHistory := \InitWorldState$. Every history ends with some world state $\WorldState^{\TrajectoryLength}$, we will often use history in the game functions instead of that world state, for example $\Actions{}(\History) := \Actions{}(\WorldState^{\TrajectoryLength})$. The players do not observe the whole world state at each timestep, but they only observe public and their corresponding private observations. As a result, the player may not distinguish several different histories. $\Infoset{\Player}$ is an information set of the player, which is a set of all histories consistent with the observations of player $\Player$. $\Infosets{\Player}$ is then set of all information sets. We will overload the notation and use $\Infoset{\Player}(\History)$ to denote the information set corresponding to history $\History$ and $\Histories(\Infoset{\Player})$ as all histories consistent with information set $\Infoset{\Player}$. Similarly, $\PublicState \in \PublicStates$ is a public state, which is an information set of the external player, that does not have any private observations, so it contains all the histories consistent with public observations. As a result, each public state contains one or more information sets of each player and we will use $\PublicState(\Infoset{\Player})$ to denote the public state which contains information set $\Infoset{\Player}$ and $\Infosets{\Player}(\PublicState)$ as a set of all information sets consistent with $\PublicState$.

The behavioral strategy of the player $i$ is $\Strategy{\Player} : \Infosets{\Player} \to \Simplex \Actions{\Player}$ a mapping from information sets to probability distribution over actions. We will often use $\Strategy{\Player}(\Infoset{\Player}, \Action{\Player})$ as a probability that $\Action{\Player}$ is played in $\Infoset{\Player}$, when following $\Strategy{\Player}$. The strategy profile $\Strategy{} = (\Strategy{1}, \Strategy{2})$ is a joint strategy profile of both players. For any two histories $\History, \History'$, such that $\History \HistoryExtend \History'$, the probability of reaching $\History'$ from $\History$ under strategy profile $\Strategy{}$ is $\Reach{\Strategy{}}(\History' | \History) = \prod_{\History'' \Action{} \WorldState \HistoryExtend \History'} \prod_{\Player \in \Players} \Strategy{\Player}(\Infoset{\Player}(\History''), \Action{\Player})$. Any reach probability can be factored into the components of the individual players $\Reach{\Strategy{}}(\History' | \History) = \prod_{\Player \in \Players} \PlayerReach{\Player}{\Strategy{}}(\History' | \History)$. Sometimes we use $\Reach{\Strategy{}}(\History) := \Reach{\Strategy{}}(\History | \InitHistory)$. 

The expected utility of history $\History$ if all players follow strategy profile $\Strategy{}$ is $\Utility{\Player}^{\Strategy{}}(\History) = \sum_{\History \HistoryExtend \History' \Action{}} \Reach{\Strategy{}}(\History'|\History)\Rewards{\Player}(\History', \Action{}) \prod_{\Player \in \Players} \Strategy{\Player}(\Infoset{\Player}(\History'), \Action{\Player})$. Moreover, $\CounterfactualValue{\Player}{\Strategy{}}(\Infoset{\OtherPlayer}) = \frac{\sum_{\History \in \Histories(\Infoset{\OtherPlayer})} \PlayerReach{-\OtherPlayer}{\Strategy{}}(\History) \Utility{\Player}^{\Strategy{}}(\History)}{\sum_{\History \in \Histories(\Infoset{\OtherPlayer})} \PlayerReach{-\OtherPlayer}{\Strategy{}}(\History)}$ are a counterfactual values. If $\sum_{\History \in \Histories(\Infoset{\OtherPlayer})} \PlayerReach{-\OtherPlayer}{\Strategy{}}(\History) = 0$, then $\CounterfactualValue{\Player}{\Strategy{}}(\Infoset{\OtherPlayer}) = 0$. The best response against a strategy $\Strategy{\Player}$ is a strategy $\BRStrategy{-\Player} \in \BRSet{-\Player}$ that maximizes the opponents utility $\BRStrategy{-\Player} = \argmax_{\Strategy{-\Player}} \Utility{-\Player}^{(\Strategy{\Player}, \Strategy{-\Player})}(\InitHistory)$. The subscript $-\Player$ is used to identify the other player then $\Player$. When each player follows a best response strategy to each other, the resulting strategy profile is the Nash equilibrium $\Nash{}$ \citep{nash1950}.  In two-player zero-sum games, the Nash equilibrium is the desired solution concept and we use exploitability as a metric to evaluate quality of a strategy $\Exploitability(\Strategy{\Player}) = \Utility{-\Player}^{\Strategy{\Player}, \BRStrategy{-\Player}} (\InitHistory) - \Utility{-\Player}^{\Nash{}}(\InitHistory)$, which is how much can the opponent gain if it plays a best response compared to the value received by both players following Nash equilibrium. The exploitability is always non-negative and is zero if and only if the $\Strategy{\Player}$ is a part of Nash equilibrium.

\subsection{Subgame Solving}

There are two main approaches to subgame solving in imperfect-information games. The first, common-knowledge subgame solving, considers all histories that share the public state~\citep{cfrd2014,deepstack2017,libratus2018,studentofgames2023,sepot2024}. The second, knowledge-limited subgame solving, considers only a subset of public state histories~\citep{zhang2021subgame,liu2023opponent,obscuro2025}. Although our work focuses on common-knowledge subgame solving, the principles apply to both paradigms.

A subgame $\Subgame$ of a game $\FOSGame$ is defined the same as the original game, but it has different initial world state $\WorldState^{\InitialIndex, \PublicState}$ in which the chance player chooses between all the histories $\History \in \Histories(\PublicState)$, where $\PublicState$ is the public state which contains current decision. Assume all players followed a precomputed blueprint strategy $\BlueprintStrategy{}$, the chance distribution in the initial state of the subgame is proportional to $\Reach{\BlueprintStrategy{}}(\History)$. This results in a valid game that can be solved with any algorithm for imperfect-information games. 

Solving such a subgame $\Subgame$ results in a strategy profile $\Strategy{}^{\Subgame}$. We denote combined strategy as $\BlueprintStrategy{\Player} \gets \Strategy{\Player}^{\Subgame}$, where player $\Player$ follows the new strategy $\Strategy{\Player}^{\Subgame}$ within the subgame and the blueprint elsewhere $\BlueprintStrategy{}$ . This approach is known as unsafe subgame solving. While it often performs suprisingly well in practice~\citep{gilpin2006poker,gilpin2007abstraction,ganzfried2015endgame,reachmaxmargin2017}, it lacks theoretical guarantees~\citep{cfrd2014,ganzfried2015endgame,valuefunctions2023}. The unsafety stems from assuming the opponent followed a fixed strategy in the past. However, if the opponent has played according to a different strategy, the game will look exactly the same, but the computed solution would perform vastly differently.

To rectify this, gadget games were developed to compute robust strategies \citep{cfrd2014,maxmargin2016,reachmaxmargin2017}. The gadget game $\GadgetGame{\Player}$ is similar to  $\Subgame$, but instead of assuming the opponent followed the blueprint $\BlueprintStrategy{}$, the gadget game introduces auxiliary information sets $\AuxiliaryInfosets{}$ that allow the opponent to choose their reach probabilities adversarially.

\begin{definition}
    Let $\Subgame$ be a subgame and $\GadgetGame{\Player}$ be a gadget game of player $\Player$ corresponding to $\Subgame$. \emph{Auxiliary information sets} $\AuxiliaryInfosets{}$ are all information sets that are in $\GadgetGame{\Player}$ but not in $\Subgame$.
\end{definition}

Different types of gadget games introduce different auxiliary information sets $\AuxiliaryInfosets{}$. The \textit{resolving gadget game} contains an auxiliary information set for each opponent's information set $\Infoset{-\Player} \in \Infosets{-\Player}(\PublicState)$ in the root of the subgame, while the \textit{max-margin gadget game} contains only a single auxiliary information set. In \cref{fig:seq} we show the construction of resolving and max-margin gadget games for a specific game. The \textit{reach max-margin gadget game} has the same construction as max-margin, but uses different utility shifts~\citep{reachmaxmargin2017}.

The strategy resulting after solving the gadget game $\BlueprintStrategy{} \gets \Strategy{\Player}^{\GadgetGame{\Player}}$ is guaranteed to be safe, which means it does not increase the exploitability compared to that of the agent's blueprint strategy: $\Exploitability(\BlueprintStrategy{} \gets \Strategy{\Player}^{\GadgetGame{\Player}}) \leq \Exploitability(\BlueprintStrategy{})$.
\section{Example}

\begin{figure*}
    \centering    
    \begin{subfigure}[b]{0.32\textwidth}
        \centering
        \begin{tikzpicture}[
    level 1/.style={sibling distance=2.5cm, level distance=1cm},
    level 2/.style={sibling distance=1cm, level distance=1cm},
    p2node/.style={ 
      regular polygon,
      regular polygon sides=3,
      rotate=180, 
      draw,
      fill=blue!40,
      inner sep=1pt,
      minimum size=15pt
    },
    p1node/.style={
      regular polygon,
      regular polygon sides=3, 
      draw,
      fill=red!40,
      inner sep=1pt,
      minimum size=15pt
    },
    payoff/.style={
      font=\normalsize
    }
  ]

  \node [p2node] (root) {}
    child {
      node [p1node] (p1_left) {} 
      child { node [payoff] {-1} edge from parent node [left] {F} }
      child { node [payoff] {1}  edge from parent node [right] {H} } 
      child { node [payoff] {0} edge from parent node [right] {T} }
      edge from parent node [above left] {H} 
    }
    child {
      node [p1node] (p1_right) {} 
      child { node [payoff] {-1}  edge from parent node [left] {F} }
      child { node [payoff] {0}  edge from parent node [right] {H} } 
      child { node [payoff] {1} edge from parent node [right] {T} }
      edge from parent node [above right] {T} 
    };

  \draw [red, dashed, thick] (p1_left) -- (p1_right)
        node [midway, above, yshift=2mm, fill=white, inner sep=1pt, font=\small] {$\Infoset{1}$};

\end{tikzpicture}
        \caption{Original game}
        \label{fig:seq_ex}
    \end{subfigure}
    \hfill 
    \begin{subfigure}[b]{0.32\textwidth}
        \centering
        \begin{tikzpicture}[
    level 1/.style={sibling distance=2.5cm, level distance=1cm},
    level 2/.style={sibling distance=1cm, level distance=1cm},
    level 3/.style={sibling distance=1cm, level distance=1cm},
    p2node/.style={ 
      regular polygon,
      regular polygon sides=3,
      rotate=180, 
      draw,
      fill=blue!40,
      inner sep=1pt,
      minimum size=15pt
    },
    p1node/.style={
      regular polygon,
      regular polygon sides=3, 
      draw,
      fill=red!40,
      inner sep=1pt,
      minimum size=15pt
    },
    chancenode/.style={circle, draw, fill=gray!30, minimum size=5mm, inner sep=1pt},
    terminal/.style={inner sep=2pt},
    payoff/.style={
      font=\normalsize
    },
    edge_label/.style={midway, fill=white, inner sep=1pt}
]

\node[chancenode] (S) {} 
    child {
        node[p2node] (P2-L) {}
        child[grow=-150] { 
            node[payoff] (T1) {0}
            edge from parent
            node[edge_label, above left] {T}
        }
        child[grow=-90] { 
            node[p1node] (P1-L) {}
            child {
                node[payoff] (T2) {-1}
                edge from parent
                node[edge_label, above left] {F}
            }
            child {
                node[payoff] (T3) {1}
                edge from parent
                node[edge_label, right] {H}
            }
            child {
                node[payoff] (T4) {0}
                edge from parent
                node[edge_label, above right] {T}
            }
            edge from parent
            node[edge_label, right] {C}
        }
        edge from parent
        node[edge_label, above left] {1.0}
    }
    child {
        node[p2node] (P2-R) {}
        child[grow=-150] { 
            node[payoff] (T5) {-0.5}
            edge from parent
            node[edge_label, above left] {T}
        }
        child[grow=-90] { 
            node[p1node] (P1-R) {}
            child {
                node[payoff] (T6) {-1}
                edge from parent
                node[edge_label, above left] {F}
            }
            child {
                node[payoff] (T7) {0}
                edge from parent
                node[edge_label, right] {H}
            }
            child {
                node[payoff] (T8) {1}
                edge from parent
                node[edge_label, above right] {T}
            }
            edge from parent
            node[edge_label, right] {C}
        }
        edge from parent
        node[edge_label, above right] {1.0}
    };
  \draw [red, dashed, thick] (P1-R) -- (P1-L)
        node [midway, above left, yshift=2mm, fill=white, inner sep=1pt, font=\small] {$\Infoset{1}$};

\end{tikzpicture}
        \caption{Resolving gadget game}
        \label{fig:seq_resolve}
    \end{subfigure}
    \hfill 
    \begin{subfigure}[b]{0.32\textwidth}
        \centering
        \begin{tikzpicture}[
    level 1/.style={sibling distance=2.5cm, level distance=1cm},
    level 2/.style={sibling distance=1cm, level distance=1cm},
    p2node/.style={ 
      regular polygon,
      regular polygon sides=3,
      rotate=180, 
      draw,
      fill=blue!40,
      inner sep=1pt,
      minimum size=15pt
    },
    p1node/.style={
      regular polygon,
      regular polygon sides=3, 
      draw,
      fill=red!40,
      inner sep=1pt,
      minimum size=15pt
    },
    chancenode/.style={circle, draw, fill=gray!30, minimum size=5mm, inner sep=1pt},
    payoff/.style={
      font=\normalsize
    }
  ]

  \node [p2node] (root) {}
    child {
        node [chancenode] (chance_left) {}
        child {
          node [p1node] (p1_left) {} 
          child { node [payoff] {-1} edge from parent node [left] {F} }
          child { node [payoff] {1}  edge from parent node [right] {H} } 
          child { node [payoff] {0} edge from parent node [right] {T} }
        edge from parent node [left] {1.0} 
      }
      edge from parent node [above left] {P1} 
    }
    child {
        node [chancenode] (chance_right) {}
        child {
          node [p1node] (p1_right) {} 
          child { node [payoff] {-0.5}  edge from parent node [left] {F} }
          child { node [payoff] {0.5}  edge from parent node [right] {H} } 
          child { node [payoff] {1.5} edge from parent node [right] {T} }
        edge from parent node [right] {1.0} 
        }
      edge from parent node [above right] {P2} 
    };

  \draw [red, dashed, thick] (p1_left) -- (p1_right)
        node [midway, above, yshift=2mm, fill=white, inner sep=1pt, font=\small] {$\Infoset{1}$};

\end{tikzpicture}
        \caption{Max-margin gadget game}
        \label{fig:seq_maxmargin}
    \end{subfigure}
    \caption{Example game, that highlights the problem with sequential rationality in subgame solving. Player 1 (red) is maximizing the value, while Player 2 (blue) is minimizing. Chance player (grey) is playing according to a shown fixed "probability" distribution.}
    \label{fig:seq} 
\end{figure*}
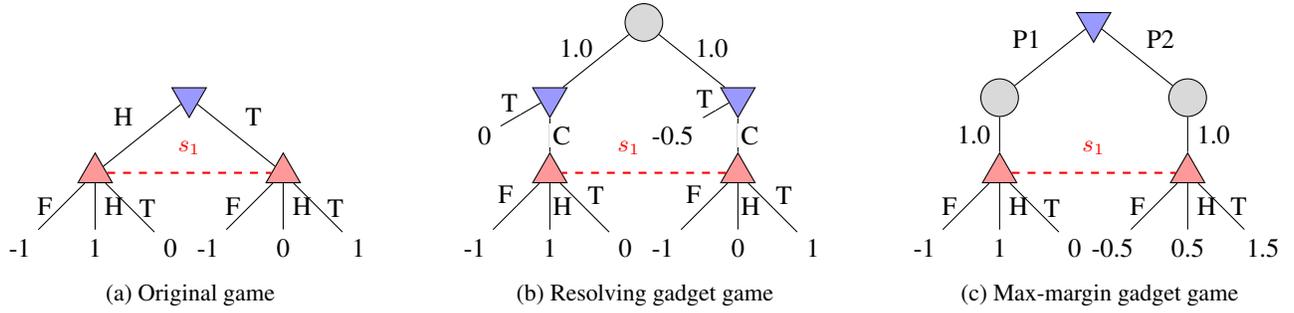

To demonstrate the issues with Nash equilibria in gadget games, we present a modified version of Matching Pennies in \cref{fig:seq_ex}. The game begins with Player 2 choosing either heads (H) or tails (T). Player 1, unaware of this choice, must choose among three actions: Forfeit (F), Heads (H), and Tails (T). By forfeiting, player 1 loses with a reward of -1. Otherwise, Player 1 receives a reward of 1 if they match Player 2's action and 0 if they do not. This game has a unique Nash equilibrium in which both players play H and T with probability 0.5.

Assume a blueprint strategy for player 1, where $\BlueprintStrategy{1} = [0.5, 0.5, 0]$ for actions F, H, T, respectively. Under this blueprint, the counterfactual values of the best response in the left and right branches of the tree are $\CounterfactualValue{1}{\BlueprintStrategy{1}}(H) = 0$, $\CounterfactualValue{1}{\BlueprintStrategy{1}}(T) =-0.5$.

In a resolving gadget game, Player 2 has the option to either enter the subgame (C) or terminate and receive the counterfactual best response value to the blueprint (T). The initial state of the subgame is decided by a chance node with probabilities proportional to the reaches $\PlayerReach{1}{\BlueprintStrategy{1}} \cdot \PlayerReach{c}{}$. These ``probabilities'' do not necessarily sum to 1. \cref{fig:seq_resolve} illustrates the extensive-form representation of this gadget game. Any strategy of Player 1 that forces Player 2 to play into a terminating action is a Nash equilibrium. Crucially, this includes even the blueprint, which plays suboptimal action F, which is strictly dominated.

The max-margin gadget game allows Player 2 first to select an information set before the chance node determines the specific state based on the reaches $\PlayerReach{1}{\BlueprintStrategy{1}} \cdot \PlayerReach{c}{}$. This construction also subtracts the counterfactual best response values from each terminal utility. \cref{fig:seq_maxmargin} shows the extensive-form representation of this game. In this specific case, there is a unique Nash equilibrium $\Strategy{1} = [0, 0.75, 0.25]$, which is not part of a Nash equilibrium of the whole game.

This example already hints at several of our results, which we discuss in more detail with more complicated examples in \cref{app:proofs}. First, gadget games can have infinitely many equilibria that yield different values in the original game (\cref{thm:nonoptimality}). Second, with some blueprint strategy, one might not be able to construct the optimal \textit{continuation strategy} (\cref{thm:nonexistence}). Third, some equilibria in resolving gadget games can outperform all Nash equilibria of a max-margin gadget game (\cref{thm:maxmarginworse}).
\section{Gadget Game Sequential Equilibria}

The goal of subgame solving is to find a local strategy $\Strategy{\Player}^{\Subgame}$ that ensures the updated strategy improves upon the blueprint strategy $\BlueprintStrategy{\Player}$. This improvement is formally characterized in \cref{thm:gadgetvalue} (inspired by \cite{deepstack2017}), which shows that the resulting exploitability depends on the counterfactual best response values in the root information sets of the opponent. 

\begin{restatable}[]{theorem}{gadgetvalue}
\label{thm:gadgetvalue}
Given a blueprint strategy $\BlueprintStrategy{\Player}$, a subgame $\Subgame{}$, and a subgame strategy $\Strategy{\Player}^{\Subgame{}}$, let $\Strategy{\Player} = \BlueprintStrategy{\Player} \gets \Strategy{\Player}^{\Subgame{}}$. Let $\BRBlueprintStrategy{-\Player} \in \BRSet{-\Player}(\BlueprintStrategy{\Player})$ and $\BRStrategy{-\Player} \in \BRSet{-\Player}(\Strategy{\Player})$. If $\CounterfactualValue{\Player}{(\Strategy{\Player}, \BRStrategy{-\Player})}(\Infoset{-\Player}) \geq \CounterfactualValue{\Player}{(\BlueprintStrategy{\Player}, \BRBlueprintStrategy{-\Player})}(\Infoset{-\Player})$ for each $\Infoset{-\Player} \in \Infosets{-\Player}(\PublicState)$, then $\StrategyUtility{\Player}{(\Strategy{\Player}, \BRStrategy{-\Player})}(\InitHistory) - \StrategyUtility{\Player}{(\BlueprintStrategy{\Player}, \BRBlueprintStrategy{-\Player})}(\InitHistory) \geq \sum_{\Infoset{-\Player} \in \Infosets{-\Player}(\PublicState)} \PlayerReach{-\Player}{\BRStrategy{-\Player}}(\Infoset{-\Player}) \big(\CounterfactualValue{\Player}{(\Strategy{\Player}, \BRStrategy{-\Player})}(\Infoset{-\Player}) - \CounterfactualValue{\Player}{(\BlueprintStrategy{\Player}, \BRBlueprintStrategy{-\Player})}(\Infoset{-\Player})\big) $ 
\end{restatable}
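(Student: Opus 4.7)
The plan is to split $\StrategyUtility{\Player}{\Strategy{}}(\InitHistory)$ for each relevant strategy profile into an \emph{outside} contribution $U^{\neg\PublicState}(\Strategy{})$ collected from terminal histories that never pass through $\PublicState$ and an \emph{inside} contribution $U^{\PublicState}(\Strategy{})$ collected from those that do, and then to link the blueprint side of the bound to the new-strategy side through a carefully chosen hybrid opponent. Because $\Strategy{\Player} = \BlueprintStrategy{\Player} \gets \Strategy{\Player}^{\Subgame{}}$ coincides with $\BlueprintStrategy{\Player}$ at every information set outside the subgame, $U^{\neg\PublicState}(\Strategy{\Player}, \Strategy{-\Player}) = U^{\neg\PublicState}(\BlueprintStrategy{\Player}, \Strategy{-\Player})$ for every opponent strategy $\Strategy{-\Player}$. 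Grouping the histories contributing to $U^{\PublicState}$ by the opponent's root information set $\Infoset{-\Player} \in \Infosets{-\Player}(\PublicState)$ and factoring out $\PlayerReach{-\Player}{\Strategy{-\Player}}(\Infoset{-\Player})$ (constant across $\Histories(\Infoset{-\Player})$ by perfect recall), the definition of the counterfactual value turns each class into a term proportional to $\PlayerReach{-\Player}{\Strategy{-\Player}}(\Infoset{-\Player}) \cdot \CounterfactualValue{\Player}{\Strategy{}}(\Infoset{-\Player})$, with a $\Player$-and-chance normalization that depends only on behaviour outside the subgame and is therefore identical under $\Strategy{\Player}$ and $\BlueprintStrategy{\Player}$.

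The central step is to introduce a hybrid opponent strategy $\Strategy{-\Player}^{\text{hyb}}$ that plays $\BRStrategy{-\Player}$ at every information set outside $\PublicState$ and $\BRBlueprintStrategy{-\Player}$ at every information set inside the subgame. Since $\BRBlueprintStrategy{-\Player}$ is a best response against $\BlueprintStrategy{\Player}$, substituting any other opponent strategy can only raise $\Player$'s utility, hence $\StrategyUtility{\Player}{(\BlueprintStrategy{\Player}, \BRBlueprintStrategy{-\Player})}(\InitHistory) \leq \StrategyUtility{\Player}{(\BlueprintStrategy{\Player}, \Strategy{-\Player}^{\text{hyb}})}(\InitHistory)$. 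Expanding the right-hand side through the decomposition, its outside part equals $U^{\neg\PublicState}(\BlueprintStrategy{\Player}, \BRStrategy{-\Player})$, the opponent reach to each $\Infoset{-\Player}$ is $\PlayerReach{-\Player}{\BRStrategy{-\Player}}(\Infoset{-\Player})$ (as $\Strategy{-\Player}^{\text{hyb}}$ agrees with $\BRStrategy{-\Player}$ outside the subgame), and the inside counterfactual value equals $\CounterfactualValue{\Player}{(\BlueprintStrategy{\Player}, \BRBlueprintStrategy{-\Player})}(\Infoset{-\Player})$ because both players' continuations inside the subgame match those of $(\BlueprintStrategy{\Player}, \BRBlueprintStrategy{-\Player})$.

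Expanding $\StrategyUtility{\Player}{(\Strategy{\Player}, \BRStrategy{-\Player})}(\InitHistory)$ through the same decomposition and subtracting the hybrid inequality, the outside contributions cancel and the inside sums line up term by term at the opponent's root information sets, producing exactly the claimed lower bound with weight $\PlayerReach{-\Player}{\BRStrategy{-\Player}}(\Infoset{-\Player})$ on each difference $\CounterfactualValue{\Player}{(\Strategy{\Player}, \BRStrategy{-\Player})}(\Infoset{-\Player}) - \CounterfactualValue{\Player}{(\BlueprintStrategy{\Player}, \BRBlueprintStrategy{-\Player})}(\Infoset{-\Player})$. The main obstacle is the bookkeeping around the $\Player$-and-chance normalization buried in the normalized counterfactual-value definition: the proof works precisely because that factor is invariant under the swap between $\Strategy{\Player}$ and $\BlueprintStrategy{\Player}$ (both coincide outside the subgame), so it appears identically on both sides of the subtraction and cancels out; information sets of zero $\Player$-and-chance reach contribute nothing on either side by the convention $\CounterfactualValue{\Player}{\Strategy{}}(\Infoset{-\Player}) = 0$. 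Notably, the hypothesis $\CounterfactualValue{\Player}{(\Strategy{\Player}, \BRStrategy{-\Player})}(\Infoset{-\Player}) \geq \CounterfactualValue{\Player}{(\BlueprintStrategy{\Player}, \BRBlueprintStrategy{-\Player})}(\Infoset{-\Player})$ is not required for the inequality itself but is exactly what makes the resulting lower bound nonnegative, so that the theorem certifies a genuine improvement over the blueprint.
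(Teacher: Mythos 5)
Your proposal is correct and follows essentially the same route as the paper's own proof: the identical inside/outside decomposition of the utility, the identical hybrid opponent (the paper's $\Strategy{-\Player}'$, which plays $\BRStrategy{-\Player}$ outside the subgame and $\BRBlueprintStrategy{-\Player}$ inside), the same use of the best-response property of $\BRBlueprintStrategy{-\Player}$ against $\BlueprintStrategy{\Player}$ to obtain the key inequality, and the same regrouping by the opponent's root information sets with the opponent reach factored out. Your closing observation that the hypothesis is needed only to make the resulting bound nonnegative is likewise consistent with the paper's derivation, which never invokes it in the chain of (in)equalities.
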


The proof is in \cref{app:proofs}. Any Nash equilibrium of the resolving gadget game ensures that exploitability cannot increase by satisfying $\CounterfactualValue{\Player}{(\Strategy{\Player}, \BRStrategy{-\Player})}(\Infoset{-\Player}) - \CounterfactualValue{\Player}{(\BlueprintStrategy{\Player}, \BRBlueprintStrategy{-\Player})}(\Infoset{-\Player})\geq 0$. In a max-margin gadget game, any Nash equilibrium maximizes the smallest improvement in these counterfactual values.
\begin{equation*}
    \max_{\Strategy{\Player}^{\Subgame{}}} \min_{\Infoset{-\Player}' \in \Infosets{-\Player}}{\CounterfactualValue{\Player}{(\BlueprintStrategy{\Player} \gets \Strategy{\Player}^{\Subgame{}}, \BRStrategy{-\Player})}(\Infoset{-\Player}') - \CounterfactualValue{\Player}{(\BlueprintStrategy{\Player}, \BRBlueprintStrategy{-\Player})}(\Infoset{-\Player}')}
\end{equation*}

Because this minimal improvement is always at least zero, every solution of the max-margin gadget game is also a solution of the corresponding resolving gadget game.

\begin{proposition}
    Every Nash equilibrium of max-margin gadget game is also a Nash equilibrium of the corresponding resolving gadget game.
\end{proposition}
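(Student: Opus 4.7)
The plan is to show that if $\Strategy{\Player}^*$ is the subgame-strategy component of a Nash equilibrium of the max-margin gadget game, then $\Strategy{\Player}^*$ together with the opponent's ``always terminate'' strategy forms a Nash equilibrium of the resolving gadget game. The key observation, already foreshadowed in the text, is that a max-margin NE achieves non-negative counterfactual improvement at every opponent information set, which is precisely the condition that makes terminating (weakly) optimal for the opponent in the resolving gadget.

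First I would establish that the max-min value of the max-margin gadget game is non-negative. This follows by exhibiting the blueprint itself as a feasible choice: if player $\Player$ plays $\BlueprintStrategy{\Player}$ inside the subgame, then $\BlueprintStrategy{\Player} \gets \Strategy{\Player}^{\Subgame} = \BlueprintStrategy{\Player}$, so the counterfactual improvement at every $\Infoset{-\Player}$ is exactly zero. Hence any max-margin NE $\Strategy{\Player}^*$ satisfies $\CounterfactualValue{\Player}{(\BlueprintStrategy{\Player} \gets \Strategy{\Player}^*, \BRStrategy{-\Player})}(\Infoset{-\Player}) \geq \CounterfactualValue{\Player}{(\BlueprintStrategy{\Player}, \BRBlueprintStrategy{-\Player})}(\Infoset{-\Player})$ for every $\Infoset{-\Player} \in \Infosets{-\Player}(\PublicState)$, matching the hypothesis of \cref{thm:gadgetvalue}.

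Next I would construct the candidate resolving-gadget strategy for the opponent: at each auxiliary information set play the Terminate action, and inside the subgame play any best response to $\Strategy{\Player}^*$. To verify this is a best response, note that Terminate yields player $\Player$ the value $\CounterfactualValue{\Player}{(\BlueprintStrategy{\Player}, \BRBlueprintStrategy{-\Player})}(\Infoset{-\Player})$, while Continue (followed by the opponent's own best response in the subgame) yields $\CounterfactualValue{\Player}{(\BlueprintStrategy{\Player} \gets \Strategy{\Player}^*, \BRStrategy{-\Player})}(\Infoset{-\Player})$. Because the opponent minimizes player $\Player$'s value and the latter is no smaller than the former, Terminate is weakly optimal at every auxiliary information set. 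Conversely, $\Strategy{\Player}^*$ is a best response against this opponent strategy trivially: every auxiliary information set is terminated with probability one, so no history inside the subgame proper is on the equilibrium path, and player $\Player$'s in-subgame actions have zero marginal effect on expected utility.

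The main subtlety is that the two gadgets have genuinely different opponent strategy spaces (a single ``choose an entry point'' decision in max-margin versus a per-information-set T/C choice in resolving), so the statement is really that $\Strategy{\Player}^*$ can be completed to a resolving NE by a canonical opponent strategy. Once the non-negative-improvement property of the max-margin NE is in hand, both best-response checks above are immediate, and the proposition follows.
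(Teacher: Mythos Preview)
Your proposal is correct and follows the same logic the paper uses. The paper does not give a separate formal proof of this proposition; it is justified in the surrounding text by the single observation that the minimal counterfactual improvement of a max-margin equilibrium is always at least zero (since the blueprint itself achieves zero), and that non-negative improvement at every $\Infoset{-\Player}$ is precisely what characterizes resolving-gadget equilibria. Your plan makes the same argument explicit: you establish non-negativity via the blueprint, then complete $\Strategy{\Player}^*$ to a resolving NE by having the opponent terminate everywhere, and verify both best-response conditions. You also correctly flag the subtlety that the two gadgets have different opponent strategy spaces, so the proposition should be read as ``$\Strategy{\Player}^*$ can be completed to a resolving NE''---a point the paper leaves implicit.
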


The max-margin gadget game can be further improved by allowing the player to pay back the ``gifts'' that the opponent has given through mistakes so far~\citep{reachmaxmargin2017}. However, this typically requires knowledge outside of the current subgame. We demonstrate that strategies can be improved without additional information by accounting for sequential rationality.

\begin{restatable}[]{observation}{gadgetequilibria}
\label{thm:nonoptimality}
    There exists a game $\FOSGame$, a subgame $\Subgame{}$ of $\FOSGame$,  and blueprint $\BlueprintStrategy{\Player}$ of player $\Player$, with corresponding resolving or maxmargin gadget game $\GadgetGame{}$, in which different Nash equilibria result in different exploitability in the original game. 
\end{restatable}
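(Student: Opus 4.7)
The plan is to instantiate the claim using the modified Matching Pennies game and blueprint $\BlueprintStrategy{1}=[0.5,0.5,0]$ on actions $F,H,T$ already described in \cref{fig:seq}. Focusing on the resolving gadget game of \cref{fig:seq_resolve}, I would exhibit two Nash equilibria that induce different exploitabilities in the original game; a parallel but slightly larger construction for the max-margin case can be deferred to \cref{app:proofs}.

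First I would record the counterfactual best-response values that define the terminate payoffs in the gadget: under the blueprint, Player~1's expected payoff conditional on Player~2 choosing $H$ is $0.5(-1)+0.5(1)=0$, and conditional on $T$ is $0.5(-1)+0.5(0)=-0.5$. Next I would propose two candidate subgame strategies for Player~1: (i) the blueprint itself, $[0.5,0.5,0]$, and (ii) the refined profile $[0,0.5,0.5]$. For (i), Player~2 is exactly indifferent between continuing and terminating in both branches (the continue values equal $0$ and $-0.5$), so ``terminate everywhere'' is a best response; conversely, once Player~2 terminates with probability one, $\Infoset{1}$ is off-path and every Player~1 strategy is trivially a best response, so the profile is a Nash equilibrium. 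For (ii), continuing yields Player~1 a value of $0.5$ in each branch, strictly above the terminate values $0$ and $-0.5$, so terminating is again the unique best response, and Player~1's off-path choice is once more irrelevant.

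Then I would compare exploitabilities in the original game, whose unique Nash value for Player~1 is $0.5$ (attained by $[0,0.5,0.5]$). Combining the blueprint with candidate (i) leaves $[0.5,0.5,0]$ unchanged; Player~2 best-responds with $T$, giving Player~1 value $-0.5$ and exploitability $1$. Combining with candidate (ii) yields $[0,0.5,0.5]$, against which both $H$ and $T$ give Player~1 value $0.5$, hence exploitability $0$. Two Nash equilibria of the gadget game thus induce different exploitabilities, which establishes the observation.

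The main obstacle is conceptual rather than computational: one must be careful that each candidate is a genuine Nash equilibrium and not merely an $\epsilon$-equilibrium. The subtlety is that whenever Player~2 terminates with probability one, Player~1's entire decision at $\Infoset{1}$ is off-path, so any action — including the strictly dominated $F$ — is trivially a best response. This is exactly the pathology that motivates the gadget game sequential equilibrium refinement developed in the remainder of the paper.
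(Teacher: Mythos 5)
Your proof is correct for the resolving gadget game, and the computations all check out: $(\BlueprintStrategy{1},\text{terminate})$ and $([0,0.5,0.5],\text{terminate})$ are both genuine Nash equilibria of the gadget in \cref{fig:seq_resolve} (in the first, Player~2 is indifferent between C and T so terminating is a weak best response; in the second, terminating is strict; in both, $\Infoset{1}$ is off-path so Player~1's choice is unconstrained), and they yield exploitabilities $1$ and $0$ respectively in the original game. This essentially formalizes the sketch the paper gives in its main-text Example section. The paper's appendix proof, however, takes a different route: it builds a three-state game (a chance node over one modified and two standard Matching Pennies games, with overlapping information sets for the two players) precisely so that a \emph{single} example witnesses the observation for \emph{both} the resolving and the max-margin gadget game — there, Player~1's strategy in $\Infoset{1}^1$ is unconstrained in every equilibrium of either gadget, so both the blueprint and the true Nash strategy are equilibria of both gadgets with exploitabilities $\frac{4}{9}$ and $0$. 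Your deferral of the max-margin case is not merely an omitted detail: for your example the paper explicitly notes that the max-margin gadget has a \emph{unique} Nash equilibrium $[0,0.75,0.25]$, so your construction cannot be extended to cover max-margin and a genuinely different game would be needed. Under the literal disjunctive reading of the statement ("resolving or maxmargin") your argument suffices; if the intent is to establish the phenomenon for both gadget types, you would need to supply the larger construction you allude to.
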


We provide an example for this and all future observations in \cref{app:proofs}. In two-player zero-sum games, all Nash equilibrium yield  the same unique value of the game \citep{gtessentials1952}. This is true even in a gadget game. However, thanks to \cref{thm:nonoptimality}, we see that in the gadget game, not all equilibria are equal, as some lead to lower exploitability when used in the whole game. This is because the exploitability of the global strategy does not depend on the value in the root of the gadget game $\GadgetGame{}$, but on the values in the information sets where the subgame $\Subgame$ begins.

Some Nash equilibria may play irrationally in parts of the gadget game that have zero reach. However, those same parts may be reached with high probability in the original game even if the opponent is playing an equilibrium of the whole game. Therefore, ensuring rational play even in unreachable nodes would be an improvement.

There has been a long line of work studying different Nash equilibria refinements in extensive-form games that ensure rationality in unreachable parts of the game \citep{kreps1982sequential,bielefeld1988reexamination,van1984relation,van2012refinements}. The common problem with many of those refinements is that their computation necessitates playing each action with a small non-zero probability, and that the mistakes by the players should not be correlated. These conditions often make the computation numerically unstable. We observe that for gadget games, we do not need the full complexity of these refinements. Since the discrepancy between the gadget and the original game is in the auxiliary information sets, we only need to ensure rationality at the transition from the gadget game to the subgame.

Both the max-margin and resolving gadget game only add a single auxiliary information set to each trajectory, so it is enough to force at most one decision by the opponent to be played with non-zero probability in each trajectory. This means that the desired solution concept in gadget games is a weaker notion than some other refinements studied in the past~\citep{van2012refinements}, which simplifies their computation.

\begin{definition}
\label{def:prior}
    Let $\AuxiliaryInfosets{-\Player}$ be the opponent's auxiliary information sets in the gadget game $\GadgetGame{\Player}$, and let $\StrategyPrior{}(\AuxiliaryInfoset{-\Player}, \Action{-\Player}) > 0$ be a prior strategy $\AuxiliaryInfoset{-\Player} \in \AuxiliaryInfosets{-\Player}$. \emph{Gadget game sequential equilibrium (GGSE)} is a Nash equilibrium of a Gadget game $\GadgetGame{\Player}$, where every action $\Action{-\Player} \in \Actions{-\Player}(\AuxiliaryInfoset{-\Player})$ is played with a probability greater than $\epsilon \cdot \StrategyPrior{}(\AuxiliaryInfoset{-\Player}, \Action{-\Player})$ as $\epsilon \to 0^{+}$. 
\end{definition}

Gadget game sequential equilibria are a subset of Nash equilibria. They play rationally also in information sets to which the opponent plays with probability 0 from the auxiliary information sets.

\begin{restatable}[]{theorem}{ggse}
\label{thm:ggse}
    Assume a subgame $\Subgame$ and the corresponding gadget game $\GadgetGame{\Player}$. Let $\BlueprintStrategy{}$ be a blueprint strategy. For every Nash equilibrium $\Nash{}$ in the gadget game $\GadgetGame{\Player}$, there exists a gadget game sequential equilibrium $\Strategy{}^G$, that does not have lower counterfactual best response values in each root information set $\Infoset{-\Player} \in \Infosets{-\Player}$:
    \begin{align*}
        \min_{ \Strategy{-\Player}} \CounterfactualValue{\Player}{\BlueprintStrategy{\Player} \gets \Nash{\Player}, \Strategy{-\Player}}(\Infoset{-\Player}) \leq \min_{ \Strategy{-\Player}} \CounterfactualValue{\Player}{\BlueprintStrategy{\Player} \gets \Strategy{\Player}^G, \Strategy{-\Player}}(\Infoset{-\Player})
    \end{align*} 
    
\end{restatable}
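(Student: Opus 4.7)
The plan is to prove the theorem via a perturbation-and-limit construction analogous to the classical approach for trembling-hand and sequential equilibria, but simplified because the GGSE notion only requires trembles at the opponent's auxiliary information sets. For $\epsilon > 0$, I would define the restricted gadget game $\GadgetGame{\Player}^\epsilon$ by adding the constraint $\Strategy{-\Player}(\AuxiliaryInfoset{-\Player}, \Action{-\Player}) \geq \epsilon \cdot \StrategyPrior{}(\AuxiliaryInfoset{-\Player}, \Action{-\Player})$ at every opponent auxiliary information set. The opponent's strategy set remains a compact convex polytope and payoffs are multilinear, so by Kakutani a Nash equilibrium $(\Strategy{\Player}^\epsilon, \Strategy{-\Player}^\epsilon)$ of $\GadgetGame{\Player}^\epsilon$ exists. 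Taking a convergent subsequence as $\epsilon \to 0^+$, the limit profile $\Strategy{}^G$ satisfies Definition~\ref{def:prior} by construction and is therefore a GGSE.

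The key claim to establish is that the sequence of perturbed equilibria can be selected so that, in the limit, the player's behavior (i) coincides with $\Nash{\Player}$ on every information set reached under $\Nash{-\Player}$ with positive probability, call this set $R$, and (ii) is a best response to the trembled opponent on the remaining information sets $U$. Given (i) and (ii), the desired inequality follows from a monotonicity observation about $V(\Infoset{-\Player}) := \min_{\Strategy{-\Player}} \CounterfactualValue{\Player}{\BlueprintStrategy{\Player} \gets \Strategy{\Player}, \Strategy{-\Player}}(\Infoset{-\Player})$: for any fixed opponent continuation, replacing a previously unconstrained $U$-behavior by a best response cannot decrease the player's counterfactual value at any history, so minimizing over the opponent yields $V^G(\Infoset{-\Player}) \geq V(\Infoset{-\Player})$. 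Continuity of counterfactual values in the strategy profile then transfers this inequality from the perturbed games to the limit profile $\Strategy{}^G$.

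To obtain property (i), I would use a two-level fixed-point construction: additionally restrict the player to strategies within $\delta$ of $\Nash{\Player}$ on $R$, apply Kakutani to the doubly-restricted zero-sum game, then let $\delta \to 0$ before $\epsilon \to 0$. The main obstacle is technical and lives at the interface of $R$ and $U$: a player information set can span histories coming from several opponent root information sets, so a descendant player information set might be reached under $\Nash$ through one branch but not through another. The proof must verify that the partition into $R$ and $U$ is well-defined at the information-set level (since $R$ is defined by $\Nash{-\Player}$-reach of some history in the set, this is automatic) and that modifying behavior only on $U$ does not disturb $V(\Infoset{-\Player})$ at $\Nash$-reached root opponent information sets; the latter reduces to the same monotonicity argument applied locally. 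A secondary issue is the tie-breaking inside the best-response correspondence on $R$, which is handled by upper hemicontinuity of best responses combined with the $\delta$-pinning described above.
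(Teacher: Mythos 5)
Your overall route matches the paper's: partition the player's subgame information sets into those reached with positive probability under $\Nash{-\Player}$ and those that are not, keep $\Nash{\Player}$ on the former, optimize against the prior-induced tremble on the latter, and then argue that the counterfactual best-response value at each root information set $\Infoset{-\Player} \in \Infosets{-\Player}(\PublicState)$ cannot drop. The Kakutani/perturbed-game scaffolding is a reasonable way to make existence of the limit precise, although the order of limits you propose (taking $\delta \to 0$ before $\epsilon \to 0$) yields, for each fixed $\epsilon$, a profile that is an equilibrium only of the doubly restricted game and not of the $\epsilon$-constrained gadget game, so it is not immediate that the resulting limit is a GGSE in the sense of \cref{def:prior}; the paper sidesteps this by constructing the candidate strategy directly rather than as a fixed point.

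The genuine gap is in your monotonicity step. You claim that replacing the behavior at an unreached information set by a best response ``cannot decrease the player's counterfactual value at any history.'' A best response at an information set $\Infoset{\Player}$ maximizes the reach-weighted aggregate of values over the histories in $\Infoset{\Player}$, not the value at each history separately; when $\Infoset{\Player}$ contains histories descending from two different opponent root information sets $\Infoset{-\Player}'$ and $\Infoset{-\Player}''$, the best response against the prior trades the two off, and $\min_{\Strategy{-\Player}}\CounterfactualValue{\Player}{\cdot}(\Infoset{-\Player}')$ --- which depends only on the part of $\Infoset{\Player}$ descending from $\Infoset{-\Player}'$ --- can strictly decrease relative to $\Nash{\Player}$. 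You explicitly identify this straddling configuration as the main obstacle but then dismiss it by saying it ``reduces to the same monotonicity argument applied locally,'' which is circular, since that is precisely the case where the local argument fails. The paper's proof confronts this point differently: it works with the reach-weighted sum over the opponent's root information sets, uses perfect recall to decompose the opponent's minimization across those roots, and couples the prior $\StrategyPrior{}$ to the counterfactual reach vector under consideration so that the maximization the GGSE performs at unreached information sets is exactly the quantity appearing in the bound. Without that coupling, or some other handling of information sets that span several opponent roots, your inequality $V^G(\Infoset{-\Player}) \geq V(\Infoset{-\Player})$ does not follow from parts (i) and (ii).
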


Thanks to \cref{thm:ggse}, it is enough to search the space of gadget game sequential equilibria without the risk of excluding equilibria that would result in lower exploitability.

\section{Choice of Prior}
While GGSE ensures rationality in the subgame, the performance of the resulting strategy is influenced by the choice of the prior $\StrategyPrior{}$. We first define the ideal target, which is the optimal strategy player $\Player$ could play in the subgame, without changing strategy in the past.

\begin{definition}
\label{def:optimality}
    Let $\Subgame$ be a subgame of game $\FOSGame$. Also, let $\BeforeStrategy{\Player}$ be a blueprint strategy, but only in the information sets $\Infosets{\Player}'$, which preceded some information set from $\Infosets{\Player}(\PublicState)$. The \emph{optimal continuation strategy} $\OptimalContinuationStrategy{\Player}$ satisfies
    \begin{equation}
        \OptimalContinuationStrategy{\Player} = \argmax_{\Strategy{\Player}} \min_{\Strategy{-\Player}} \Utility{\Player}^{(\Strategy{\Player} \gets \BeforeStrategy{\Player}, \Strategy{-\Player} )}(\InitHistory)
    \end{equation}
\end{definition}

There could be multiple optimal continuation strategies, and ideally, we would like to find a strategy that is part of some optimal continuation strategy $\OptimalContinuationStrategy{\Player}$. Finding such a strategy naively requires solving the whole game with a fixed strategy of player $\Player$ before reaching the subgame. A more scalable approach is to use a trained value function outside of the path to the subgame, which is called ``full gadget'' by \citet{milec2024continual}. However, this approach requires a value function, which is difficult to train and grows linearly with the length of the game, making it less tractable than the resolving or max-margin gadget games. 

\begin{restatable}[]{observation}{nonexistence}
\label{thm:nonexistence} 
    There exists a game $\FOSGame$, a subgame $\Subgame{}$ of $\FOSGame$, and blueprint strategy $\BlueprintStrategy{}$ of player $\Player$, with corresponding resolving or maxmargin gadget game $\GadgetGame{}$, in which no Nash equilibrium is part of the optimal continuation strategy $\OptimalContinuationStrategy{\Player}$
\end{restatable}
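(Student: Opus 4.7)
The plan is to exhibit explicit counterexamples for the two gadget types separately. For the max-margin gadget, the matching-pennies example in Figure~\ref{fig:seq} already suffices: with blueprint $\BlueprintStrategy{1}=[0.5,0.5,0]$, player~1 has no decision preceding the subgame, so $\BeforeStrategy{1}$ is vacuous and the optimal continuation $\OptimalContinuationStrategy{1}$ coincides with the unique Nash equilibrium of the full game, $[0,0.5,0.5]$. The max-margin gadget of Figure~\ref{fig:seq_maxmargin} has unique Nash $[0,0.75,0.25]$, as argued in the text. Since $[0,0.75,0.25]\neq[0,0.5,0.5]$, no Nash equilibrium of the max-margin gadget is part of any optimal continuation strategy.

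The same example does not work for the resolving gadget because $[0,0.5,0.5]$ is itself a Nash equilibrium of Figure~\ref{fig:seq_resolve}: player~2 strictly prefers to terminate in both branches, and any player-1 strategy with $f=0$ is then an equilibrium. To build a resolving counterexample, I would prefix the matching-pennies game with a preliminary player-1 decision and tune the blueprint to play this prefix suboptimally, while arranging the payoffs inside the subgame so that the blueprint's counterfactual best-response values $\CounterfactualValue{1}{(\BlueprintStrategy{1},\BRBlueprintStrategy{2})}(\Infoset{2})$ at the two opponent info sets become badly unbalanced. The optimal continuation then rebalances the CBVs, strictly lowering the inflated one at some info set $\Infoset{2}'$ in exchange for a larger weighted gain at the other. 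But every Nash $\Strategy{1}^G$ of the resolving gadget satisfies
\[
\CounterfactualValue{1}{(\BlueprintStrategy{1}\gets\Strategy{1}^G,\BRStrategy{2})}(\Infoset{2}') \ \geq\ \CounterfactualValue{1}{(\BlueprintStrategy{1},\BRBlueprintStrategy{2})}(\Infoset{2}'),
\]
which the optimal continuation violates by construction. Hence no resolving Nash can coincide with the continuation.

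The main obstacle is calibrating the resolving counterexample so that (i) the blueprint's CBV at $\Infoset{2}'$ strictly exceeds what the optimal continuation yields there, and (ii) the rebalanced continuation strictly improves the opponent-weighted total against the best response. A clean recipe is to pick a blueprint that is locally overfit at $\Infoset{2}'$ at the cost of the other info set, with the prefix action wired so that $\BeforeStrategy{1}$ amplifies the opponent's best-response reach on the sacrificed info set; the continuation then reaps a strictly positive net gain by sacrificing $\Infoset{2}'$, yielding the required violation. Once the numbers are fixed, both the safety-inequality violation and the strict improvement of the continuation are a direct verification, and combined with the max-margin case above they establish the observation for both gadget types.
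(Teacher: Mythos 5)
Your max-margin half is complete and correct: in the Figure~\ref{fig:seq} game player 1 has no decisions before the subgame, so $\OptimalContinuationStrategy{1}$ is the unique maximin strategy $[0,0.5,0.5]$, while the max-margin gadget's unique equilibrium is $[0,0.75,0.25]$; these differ, so the claim holds for max-margin. You are also right that this example fails for resolving and right about the mechanism that must be exploited there — every resolving Nash satisfies $\CounterfactualValue{1}{(\BlueprintStrategy{1}\gets\Strategy{1}^G,\BRStrategy{2})}(\Infoset{2}')\geq\CounterfactualValue{1}{(\BlueprintStrategy{1},\BRBlueprintStrategy{2})}(\Infoset{2}')$, so one needs a blueprint whose counterfactual best-response value at some opponent infoset exceeds what any optimal continuation can deliver there.

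The gap is that you never actually produce the resolving counterexample: no game, no payoffs, no blueprint, and the two calibration conditions (i) and (ii) are left as a ``direct verification'' of numbers you have not fixed. Since the observation is an existence claim, the proof is the explicit instance, and deferring it means the resolving case is unproven. The paper closes this with a much simpler construction than your recipe, and it needs no prefix decision by player 1 at all: take sequential Rock--Paper--Scissors where player 2 moves first and remembers its own move (so its root infosets in the subgame are singletons), with blueprint $\BlueprintStrategy{1}(\Infoset{1},R)=1$. The blueprint's counterfactual value at the singleton infoset where player 2 played $S$ is $+1$, and \emph{only} pure Rock achieves $\geq 1$ there, so pure Rock is the unique Nash of the resolving gadget (and of the max-margin gadget as well), yet it is fully exploitable while $\OptimalContinuationStrategy{1}$ is uniform. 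This single example covers both gadget types at once; I would recommend either adopting it or finishing your own construction with concrete payoffs and the two verifications written out.
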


\cref{thm:nonexistence} demonstrates that when using the resolving or max-margin gadget game, the optimal continuation strategy may not be reconstructible. 

In practice, many prior works have used unsafe resolving, which assumes the opponent followed the blueprint in the past~\citep{gilpin2006poker,gilpin2007abstraction,ganzfried2015endgame,reachmaxmargin2017}. Although it often yields stronger performance in practice than prior safe subgame solving techniques, it runs the risk of producing strategies that are more exploitable than the blueprint. We propose using the blueprint as the prior, which combines the best of both worlds, as the strategy will always be safe, but it will leverage the information from the blueprint. 

Specifically, for a resolving Player 1, we set the prior of auxiliary information set $\AuxiliaryInfoset{2}$ and action $\Action{}$, which lead to information set $\Infoset{2}$ in the root of the subgame to be proportional to the reach of $\Infoset{2}$.
\begin{equation*}
    \StrategyPrior{2}(\AuxiliaryInfoset{2}, \Action{2}) \propto \PlayerReach{2}{\BlueprintStrategy{2}}(\Infoset{2})
\end{equation*}

The blueprint may play some actions with probability 0, so in our experiments, we clip the value to $\StrategyPrior{2}(\AuxiliaryInfoset{2}, \Action{2}) \geq  10^{-3}$. 

Using the opponent's part of the blueprint in subgame solving was explored in the poker AI Deepstack, where it was used as a warm-start strategy in the root of the gadget game \citep{deepstack2017}. Contrary to our approach, they used a blueprint only in the first iteration, so their approach did not generally converge to GGSE.
\section{Superiority of the Resolving Gadget Game}
There has been a long-standing consensus that the max-margin gadget game is an improvement over the resolving gadget game. However, the authors of the AI poker system Deepstack noted that resolving gadget game performed better in early experiments \citep{deepstack2017}. Similarly, the Obscuro AI agent for dark chess solves both the resolving and max-margin gadget games, and then heuristically chooses one of those two~\citep{obscuro2025}.

Previously, these discrepancies were often attributed to the use of imperfect value functions in nested subgame solving. The logic was that if the value function is inaccurate, the safety constraints from \cref{thm:gadgetvalue} might be impossible to satisfy, leading to more exploitable strategies.

We offer another explanation that shows the max-margin gadget game can be outperformed by resolving even in endgame situations or when using a perfect value function. We show that the max-margin gadget game with a poor blueprint may eliminate some equilibria that are still obtainable by the resolving gadget game. Moreover, the max-margin gadget game may remove all equilibria that are part of the optimal continuation strategy $\OptimalContinuationStrategy{\Player}$.

\begin{restatable}[]{observation}{maxmarginworse}
\label{thm:maxmarginworse} 
    There exists a game $\FOSGame$, a subgame $\Subgame{}$ of $\FOSGame$, and a blueprint strategy $\BlueprintStrategy{\Player}$ of player $\Player$, in which some Nash equilibria of the corresponding resolving gadget game are less exploitable than any Nash equilibrium in the corresponding max-margin gadget game.
\end{restatable}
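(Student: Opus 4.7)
The plan is to exhibit a concrete counterexample, reusing precisely the game of \cref{fig:seq_ex} (the modified Matching Pennies with a Forfeit action) together with the blueprint $\BlueprintStrategy{1} = [0.5, 0.5, 0]$ for actions $(F, H, T)$ already analyzed in the Example section. The counterfactual best-response values against this blueprint are $\CounterfactualValue{1}{\BlueprintStrategy{1}}(H)=0$ and $\CounterfactualValue{1}{\BlueprintStrategy{1}}(T)=-0.5$, so both gadget games of \cref{fig:seq_resolve,fig:seq_maxmargin} are fully specified and the full-game Nash value is $0.5$ with unique equilibrium strategy $\Strategy{1}^*=[0, 0.5, 0.5]$.

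First, I would compute the unique max-margin equilibrium. Writing $\Strategy{1}=[p_F,p_H,p_T]$, the shifted values of the two auxiliary choices of the opponent are $a=-p_F+p_H$ and $b=-0.5p_F+0.5p_H+1.5p_T$; Player~1 solves $\max\min(a,b)$ on the simplex. Equating $a=b$ gives $p_H = p_F+3p_T$, and maximizing $a$ subject to $p_F+p_H+p_T=1$ forces $p_F=0$, yielding the unique max-margin NE $\Strategy{1}^{MM}=[0, 0.75, 0.25]$. Its exploitability in the full game is computed directly: against this strategy, Player~2's best response is $T$, giving Player~1 utility $0.25$, so the exploitability equals $0.5-0.25=0.25>0$.

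Next, I would verify that the true equilibrium $\Strategy{1}^*=[0,0.5,0.5]$ of the original game is itself a Nash equilibrium of the \emph{resolving} gadget game. Under this strategy, the ``continue'' subtree in each opponent branch yields Player~1 value $0.5$, whereas the ``terminate'' actions yield $0$ (left) and $-0.5$ (right); the minimizing opponent strictly prefers to terminate in both auxiliary information sets, and given that the opponent terminates, every Player~1 strategy is a best response. Hence $\Strategy{1}^*$ is an NE of the resolving gadget, and it has exploitability $0$ in the full game, strictly less than $0.25$.

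Combining these two computations yields the required separation: every Nash equilibrium of the max-margin gadget game has exploitability at least $0.25$, whereas a Nash equilibrium of the resolving gadget game achieves exploitability $0$. The only genuinely delicate step is the uniqueness argument for the max-margin equilibrium, which I expect to be the main obstacle: one must rule out boundary equilibria where Player~2 strictly prefers one auxiliary action, and show that any such Player~2 behavior forces Player~1 into a strategy that is not a best response. This is handled by observing that if Player~2 strictly prefers one branch in an equilibrium, Player~1's best response degenerates to a pure action that then makes the other branch strictly more attractive for Player~2, a contradiction; hence the equilibrium must equalize $a=b$, and the calculation above pins it down uniquely.
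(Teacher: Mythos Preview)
Your proof is correct, but it takes a different route from the paper's. The paper proves \cref{thm:maxmarginworse} via a fresh three-state example (\cref{fig:maxmargin_worse_example}) in which both players have nontrivial information sets spanning multiple chance outcomes; there the max-margin gadget forces $\Strategy{1}(\Infoset{1}^2,H)=\tfrac12$ (because the opponent's auxiliary choice concentrates on $\Infoset{2}^2$, reducing that branch to ordinary Matching Pennies), while the resolving gadget still admits the original-game equilibrium. You instead recycle the two-branch Forfeit game of \cref{fig:seq_ex}, which the paper itself flags in the Example section as ``hinting'' at this result but does not use in the formal proof. Your route is shorter and self-contained: the uniqueness of the max-margin strategy $[0,0.75,0.25]$ follows cleanly from the maximin value $0.75$ (since $a\ge 0.75$ and $b\ge 0.75$ together force $p_F=0$ and $p_H=0.75$), and the verification that $[0,0.5,0.5]$ is a resolving-gadget NE is a one-line dominance check. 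What the paper's example buys is a demonstration of the phenomenon without relying on a strictly dominated action and with genuine overlapping information sets on both sides, which is closer to the structure of realistic subgames; your example buys economy and reuses already-established computations.
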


The reason for this performance degradation is that in the max-margin gadget game, the opponent chooses at the root which information set it should play into based on how much improvement the player can get there, which introduces a bias to any Nash equilibrium strategy. In contrast, the resolving gadget game does not exhibit this bias toward certain parts of the game based on the blueprint. Therefore, when using an informed prior, such as the blueprint, the resolving gadget game often outperforms the max-margin. 

\section{Converging to Gadget Game Sequential Equilibria}
A key advantage of GGSE is that it can be computed using standard tabular algorithms with only minor modifications. We describe how to modify the sequence-form linear program and counterfactual regret minimization to converge to GGSE. Both of these approaches are based on the idea of assuming the opponent makes small mistakes in the auxiliary information sets $\AuxiliaryInfosets{}$
\subsection{Sequence-Form Linear Program}
Consider the SQF for finding Player 1's part of the Nash equilibrium \citep{koller1996}:
\begin{subequations}
\label{eq:sqf}
\begin{align}
    \max_{\SQFStrategy{1}, \SQFValue{2}} &\SQFInitRealization{2}^\Transposition \SQFValue{2} \label{eq:sqf_obj} \\
    \SQFRealizationConstraints{1} \SQFStrategy{1} &= \SQFInitRealization{1} \label{eq:sqf_primals} \\
    \SQFUtility{}^\Transposition  \SQFStrategy{1}  - \SQFRealizationConstraints{2}^\Transposition \SQFValue{2}  &\geq \boldsymbol{0} \label{eq:sqf_duals}  \\
    \SQFStrategy{1} &\geq \boldsymbol{0} \label{eq:sqf_probs}
\end{align}
\end{subequations}

Now consider a gadget game $\GadgetGame{}$ and a perturbation vector $\SQFPrior \in \mathbb{R}^{|\Infosets{2}|}$ that is zero everywhere except for sequences that end with action in the auxiliary information set $\Action{2} \in \Actions{2}(\AuxiliaryInfoset{2}) $, where it contains the prior $\StrategyPrior{2}(\AuxiliaryInfoset{2}, \Action{2})$. The following linear program approximates GGSE for a sufficiently small $\PriorEpsilon$:
\begin{subequations}
\label{eq:subgame_sqf}
\begin{align}
    \max_{\SQFStrategy{1}, \SQFValue{2}} & (\SQFInitRealization{2}^\Transposition  + \PriorEpsilon \SQFPrior^\Transposition ) \SQFValue{2} \label{eq:subgame_sqf_obj}  \\
    \SQFRealizationConstraints{1} \SQFStrategy{1} &= \SQFInitRealization{1} \label{eq:subgame_sqf_primals}  \\
    \SQFUtility{}^\Transposition  \SQFStrategy{1}  - \SQFRealizationConstraints{2}^\Transposition \SQFValue{2}  &\geq \boldsymbol{0} \label{eq:subgame_sqf_duals} \\
    \SQFStrategy{1} &\geq \boldsymbol{0}  \label{eq:subgame_sqf_probs}
\end{align}
\end{subequations}
 
This formulation resembles the LP used to find one-sided quasi-perfect equilibria \citep{farina2021osqpe}. It similarly does not require an arbitrarily small $\PriorEpsilon$, but it suffices to solve for a small $\PriorEpsilon$. If the solution remains optimal when $\PriorEpsilon$ is set to 0, then the solution is the desired equilibrium. Otherwise, the $\PriorEpsilon$ decreases, and the program is solved again. However, this LP presents two key distinctions over one-sided quasi-perfect equilibria. First, the trembles are assumed only in auxiliary information sets $\AuxiliaryInfosets{2}$ rather than every decision point. Second, within the same decision points, two actions may be played with different minimal probabilities, depending on the prior $\StrategyPrior{}$.

\subsection{Counterfactual Regret Minimization}
\label{sec:cfr_seq}
The majority of successful subgame-solving applications have used counterfactual regret minimization (CFR), which iteratively minimizes counterfactual regret at each decision node. 
To approximate GGSE with CFR, we adopt the approach of \citet{farina2017regret}, which modifies the Regret Matching within CFR framework by introducing the concept of basis matrix $\PerturbationBasis_{\Infoset{\Player}} \in \RealNumbers^{|\Actions{\Player}(\Infoset{\Player})| \times |\Actions{\Player}(\Infoset{\Player})|}$. Each entry in the matrix is non-negative, and each row and column sum to 1. This matrix serves as a transformation from the non-perturbed strategy to the perturbed one, and then transforms the regrets from the perturbed space to the non-perturbed one.
\begin{align}
    \Strategy{\Player}'(\Infoset{\Player}) = \PerturbationBasis_{\Infoset{\Player}} \Strategy{\Player}(\Infoset{\Player})\\
    \Regret{\Player}'(\Infoset{\Player}) = \PerturbationBasis_{\Infoset{\Player}}^\Transposition \Regret{\Player}(\Infoset{\Player})
\end{align} 

We use this basis in auxiliary information sets $\AuxiliaryInfoset{2}$:
\begin{align}
    \PerturbationBasis_{\AuxiliaryInfoset{2}} = (1 - \PriorEpsilon)\boldsymbol{I} + \PriorEpsilon \boldsymbol{P} 
\end{align}
$\boldsymbol{I}$ is an identity and $\boldsymbol{P}$ is a symmetric matrix where a column corresponding to action $\Action{2} \in \Actions{2}(\AuxiliaryInfoset{2})$ contains $\StrategyPrior{2}(\AuxiliaryInfoset{2}, \Action{2})$.

This change does not significantly increase per-iteration time, but converging to an equilibrium may require more iterations based on the $\PriorEpsilon$ and $\boldsymbol{P}$, which we show in \cref{sec:experiments}.

When using either the CFR or SQF, the resulting equilibrium is a GGSE, but it need not be consistent with the prior $\StrategyPrior{}$. The GGSE is consistent with the prior only if the equilibrium itself does not play into any part of the subgame. The solutions of the max-margin gadget game wuill always be inconsistent, but those of the resolving gadget game can be consistent (exactly when all the counterfactual values are lower than the optimal ones).

\section{Experiments}
\label{sec:experiments}

\begin{figure}[]
    \centering
        \begin{subfigure}[b]{0.23\textwidth}
            \centering
            \includegraphics[width=\textwidth]{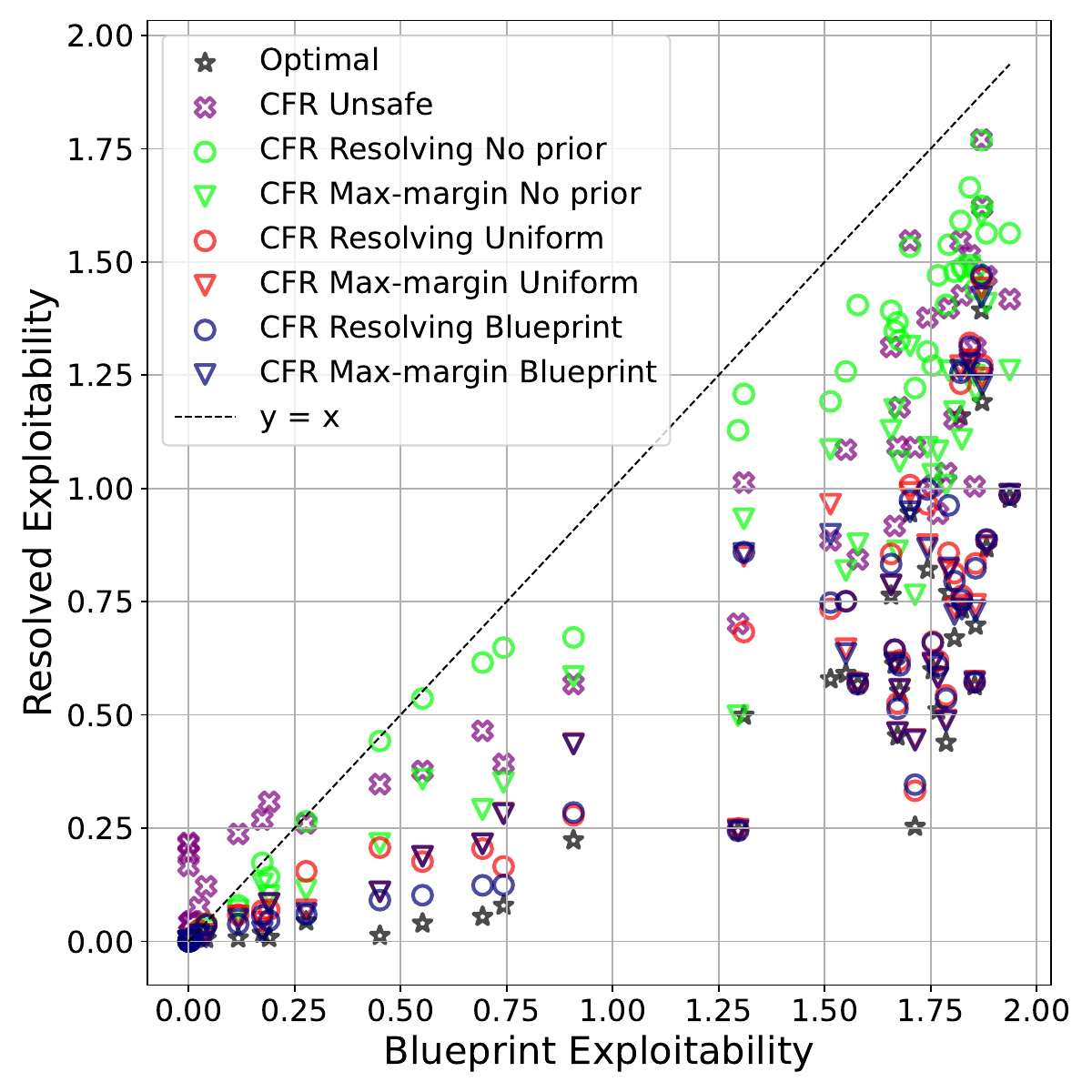} 
            \caption{Goofspiel 5, depth 1}
            \label{fig:expl_cfr_goof5_d1}
        \end{subfigure}%
        \hfill
        \begin{subfigure}[b]{0.23\textwidth}
            \centering
            \includegraphics[width=\textwidth]{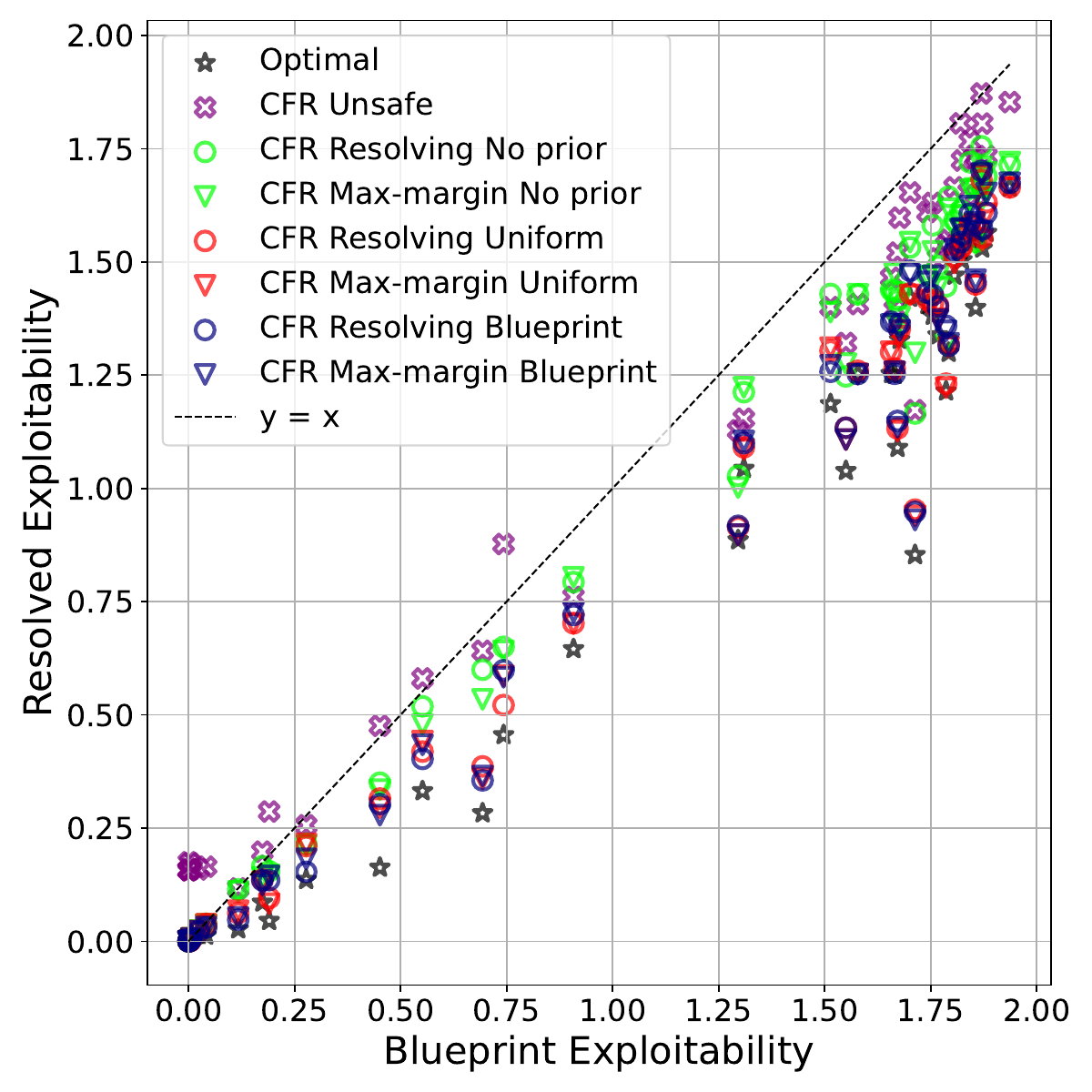} 
            \caption{Goofspiel 5, depth 2}
            \label{fig:expl_cfr_goof5_d2}
        \end{subfigure}
        \begin{subfigure}[b]{0.23\textwidth}
            \centering
            \includegraphics[width=\textwidth]{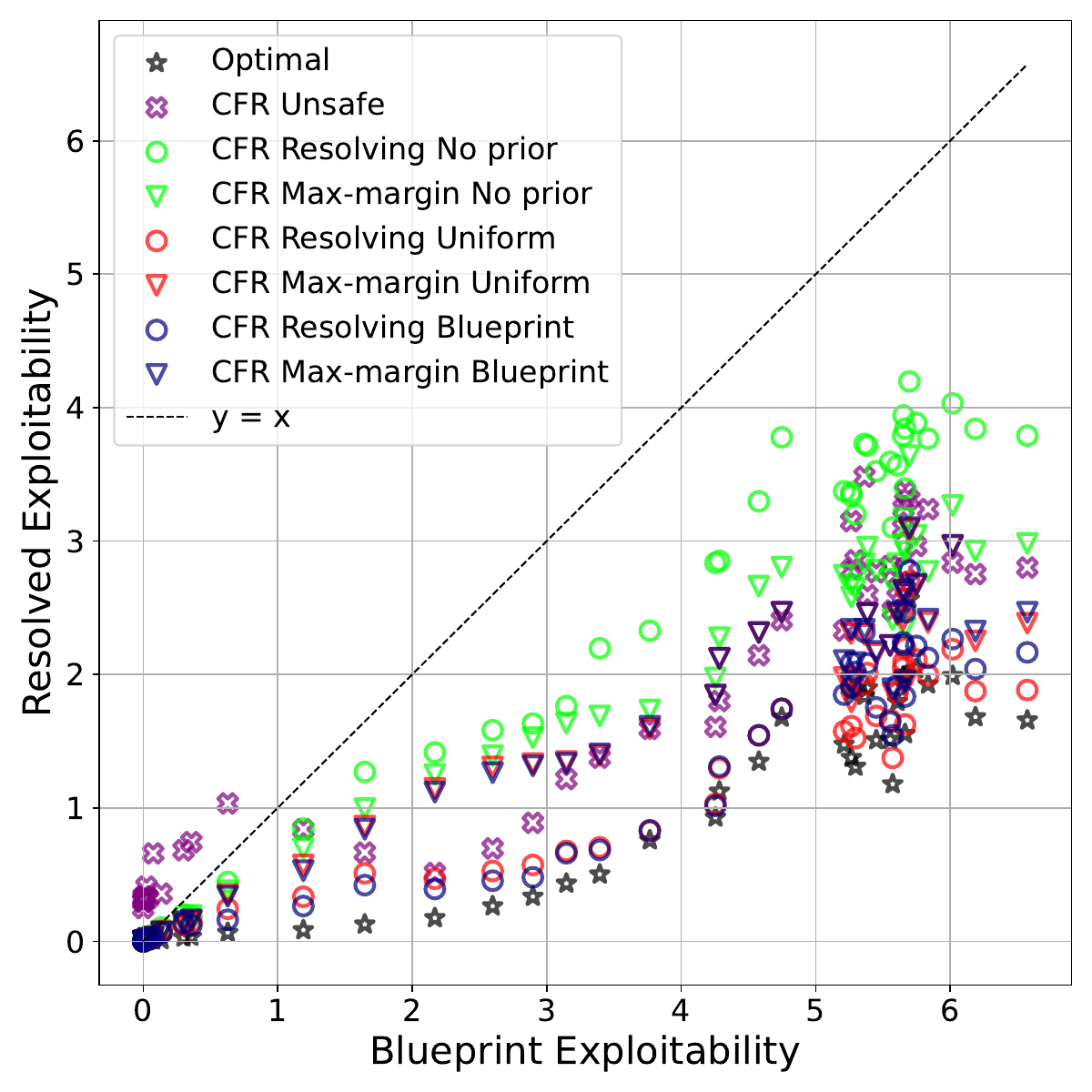} 
            \caption{Leduc hold'em after flop}
            \label{fig:expl_cfr_leduc}
        \end{subfigure}%
        \hfill
        \begin{subfigure}[b]{0.23\textwidth}
            \centering
            \includegraphics[width=\textwidth]{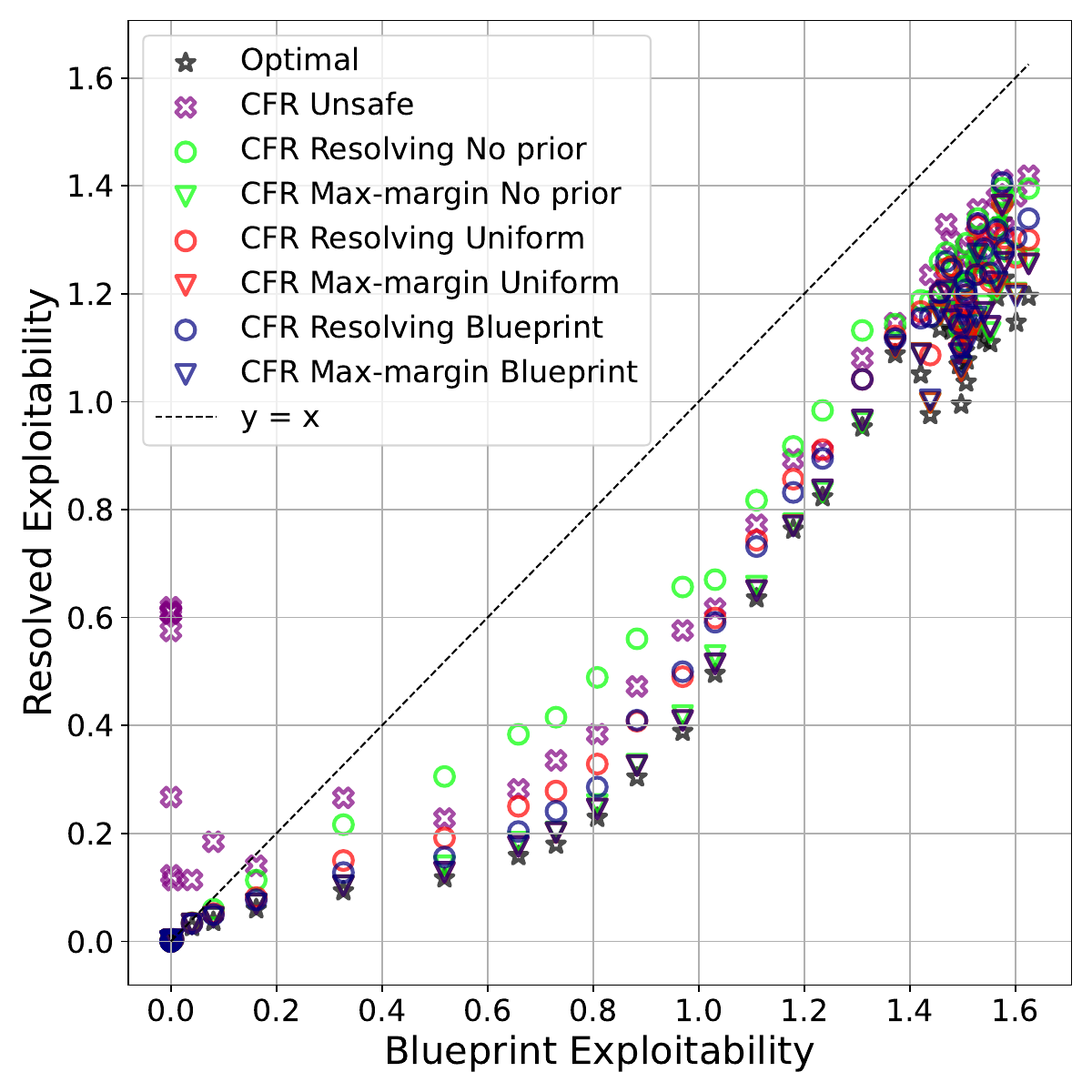} 
            \caption{Liar's dice 1,4, depth 1}
            \label{fig:expl_cfr_liars_d1}
        \end{subfigure}
    \caption{Exploitability of the resolved strategy from each subgame based on blueprint's exploitability when using CFR as solver.}
    \label{fig:resolved_exploitability_cfr}
\end{figure}

We evaluated the proposed algorithms for finding GGSE across three standard benchmark games: imperfect-information Goofspiel 5, Leduc hold'em, and Liar's dice 1,4. In each of those games, we generated 25 blueprints from different checkpoints of CFR. These serve as realistic approximations of blueprint strategies typically produced by abstraction-based methods or Monte Carlo CFR. We also generated 20 blueprints, in which the behavioral strategy in each information set was drawn independently from the Dirichlet distribution $\text{Dir}(\boldsymbol{\alpha})$, with $\alpha_i=1$.

We solve subgames at specific decision points. In Goofspiel and Liar's dice, we solve every subgame following a fixed number of moves, and in Leduc hold'em, we solve all subgames after the public card is revealed. For each game and blueprint, we computed different types of subgame equilibria using the CFR modification discussed in \cref{sec:cfr_seq}, with CFR+ updates \citep{cfrplus2014}. We compare three different prior settings: no prior, uniform prior, or blueprint prior. For comparison, we also include the optimal continuation strategy and unsafe solving, which assumes both players play a fixed strategy before the subgame.

The relationship between the exploitability of the resolved strategy and the exploitability of the blueprint is shown in \cref{fig:resolved_exploitability_cfr}. In \cref{app:experiments}, we provide statistical analysis for our results and the same results when using SQF as a solver. We also show the convergence curves in \cref{fig:convergence} for one of the blueprints to demonstrate the computational overhead in convergence when using a prior. We highlight the following results from this experiment.
\begin{itemize}
    \item Incorporating the prior does not increase the exploitability and often decreases it. The gains are more significant when using the resolving gadget game.
    \item In many cases, the max-margin solutions outperform the resolving ones when no prior is used, in line with prior work. However, when using the prior, in Leduc hold'em, the solutions of the resolving gadget game outperform the max-margin ones, confirming that max-margin does not have to be superior to resolving.
    \item Although unsafe solving performs well when the blueprint is poor, its performance degrades as the blueprint improves. In contrast, when using the resolving gadget game with a blueprint prior, the solution achieves performance comparable to unsafe solving while being safe, effectively combining the strengths of both approaches.
    \item Converging to GGSE with CFR requires more iterations. Even with fewer iterations, the performance does not deteriorate compared to not using a prior at all, suggesting that using the prior does not come with any downside.
\end{itemize}

\begin{figure}[]
    \centering
        \begin{subfigure}[b]{0.24\textwidth}
            \centering
            \includegraphics[width=\textwidth]{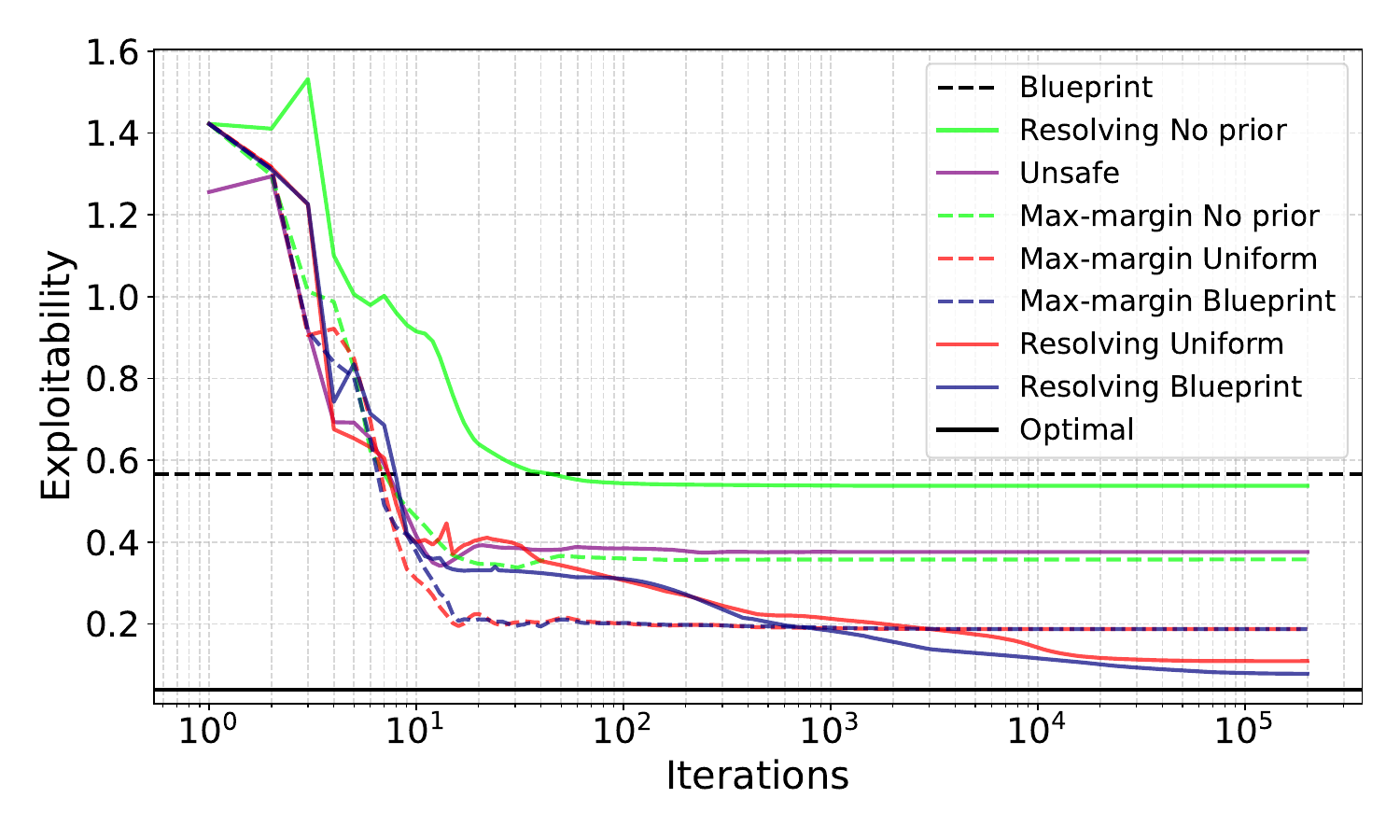} 
            \caption{Goofspiel 5, depth 1}
            \label{fig:conv_goof}
        \end{subfigure}%
        \hfill
        \begin{subfigure}[b]{0.24\textwidth}
            \centering
            \includegraphics[width=\textwidth]{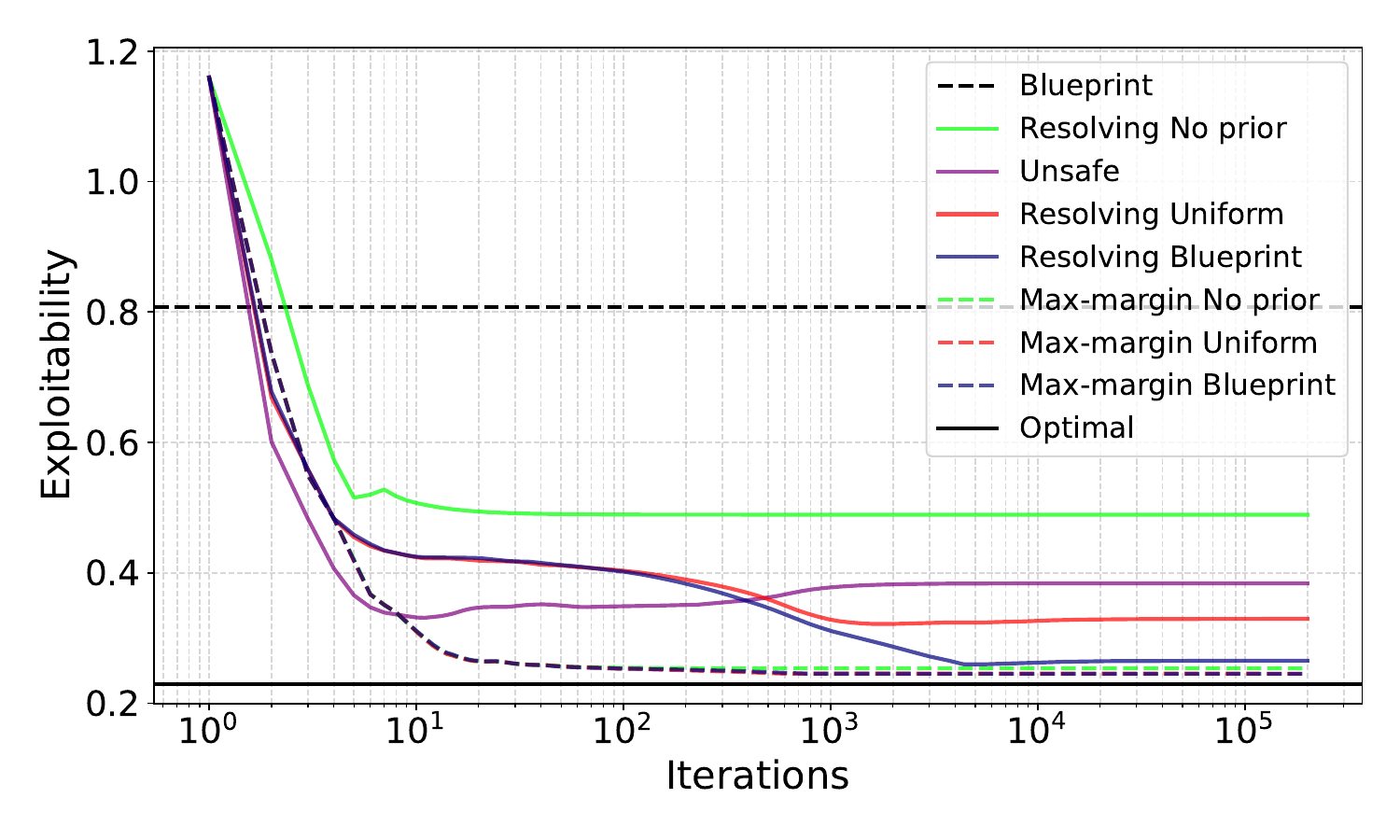} 
            \caption{Liar's dice 1,4, depth 1}
            \label{fig:conv_liars}
        \end{subfigure}
    \caption{CFR Convergence curves of subgame solving techniques.}
    \label{fig:convergence}
\end{figure}

\subsection{Nested Depth-Limited Subgame Solving}
\label{sec:nested}
In large games, it is intractable to solve the whole game from the current decision point all the way to terminal states. The common approach to address this is to use nested depth-limited subgame solving, also known as midgame solving~\citep{brown2018depth,pluribus2019,sepot2024}. This approach trains a value function that is used after a fixed number of moves. In our experiments, we used a value function based on multi-valued states, which, after the depth limit, abstracts the entire rest of the game into a single decision by both players, in which they choose a strategy for the remainder of the game~\citep{pluribus2019,valuefunctions2023}.

We trained 10 different value functions in imperfect-information Goofspiel 7 and Liar's dice 2,4 using the approach of \citet{sepot2024}. Then we matched different types of subgame solving at each encountered decision point. Head-to-head performance does not directly correspond to exploitability, because neither technique aims to exploit the opponent; rather, each approximates the Nash equilibrium of the entire game. \cref{fig:heads} shows the win-rate of each subgame solving technique against the worst matchup out of all the other techniques, which serves as a very limited approximation of exploitability. We provide the full head-to-head win-rate for each pair of techniques in \cref{app:experiments}.

The results confirm that using the prior is beneficial even in head-to-head play. In Goofspiel, the unsafe solving performs well, while the solutions of the max-margin gadget game perform poorly, regardless of the prior. The situation in Liar's dice is the opposite. In both games, the solutions from the resolving gadget game with a blueprint prior consistently perform well across opponents.
\begin{figure}
    \centering
    \includegraphics[width=0.97\linewidth]{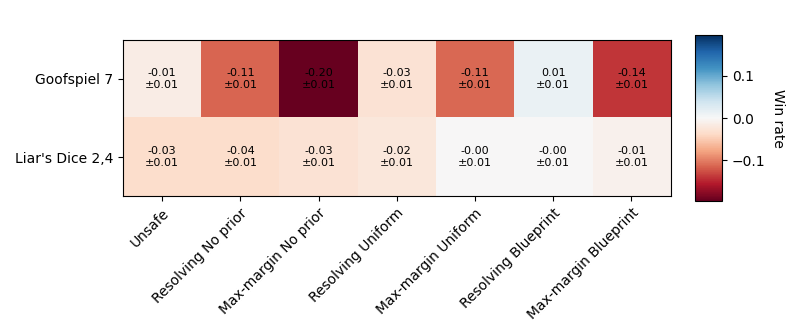}
    \caption{Head-to-head win-rate of each subgame solving technique against its 'worst-matchup' opponent out of the other techniques. }
    \label{fig:heads}
\end{figure}
\section{Conclusions and Future Research}
We studied gadget games, a subgame solving technique for imperfect-information games. While gadget games provide essential safety guarantees, they often contain infinitely many Nash equilibria. We showed that those equilibria may seem equivalent in the context of the gadget game, but when used in the original game, they lead to vastly different exploitability. 

We proposed a new equilibrium refinement, specifically tailored to gadget games, \textit{gadget game sequential equilibrium}, that improves upon vanilla Nash equilibria. We provided modifications to both the sequence-form linear program and counterfactual regret minimization to compute those equilibria refinements with negligible additional computational overhead. 

We provided a novel explanation for the empirical superiority of the resolving gadget game over max-margin. We showed that the max-margin gadget game introduces an inherent bias into the solution, potentially excluding good equilibria that remain solutions of the resolving gadget game. We empirically verified that this occurs regularly in some games.

We compared different subgame-solving methods across multiple benchmarks and verified that converging to gadget game sequential equilibria consistently performs as well or better than both unsafe solving and the usual unrefined Nash equilibria to which CFR and the sequence form LP converge in gadget games. 
Thanks to improved theoretical understanding and empirical performance, this work establishes gadget game sequential equilibrium as the preferred solution concept for subgame solving in imperfect-information games.  

Even though we established that there are games where neither the resolving nor the max-margin gadget game have the optimal continuation strategy as a solution, there is potential future work to either develop a new gadget game or to modify the existing ones so that they can reconstruct an optimal continuation strategy without the linear growth in size of the game as ``full gadget''.
Another future research direction is to verify whether gadget game sequential equilibria present similar performance gains even when using the reach max-margin gadget game and/or when conducting knowledge-limited subgame solving. Also, there is potential to apply gadget game sequential equilibria to subgame solving in large recreational games like Texas hold'em and dark chess as well as to real-world game applications.

\section*{Acknowledgments}
This work is supported by National Science Foundation grant RI-2312342 , the Vannevar Bush Faculty Fellowship ONR N00014-23-1-2876, the Czech Science Foundation GA25-18353S, the Grant Agency of the Czech Technical University in Prague (SGS23/184/OHK3/3T/13). The access to the computational infrastructure of the OP VVV funded project CZ.02.1.01/0.0/0.0/16\_019/0000765 ``Research Center for Informatics'' is also gratefully acknowledged.  Any opinions, findings, and conclusions or recommendations expressed in this material are those of the author(s) and do not necessarily reflect the views of the funding agencies.
\bibliographystyle{named}
\bibliography{ijcai26}

@article{valuefunctions2023,
  title     = {Value functions for depth-limited solving in zero-sum imperfect-information games},
  author    = {Kova\v{r}ík, Vojt\v{e}ch and Seitz, Dominik and Lisý, Viliam and Rudolf, Jan and Sun, Shuo and Ha, Karel},
  journal   = {Artificial Intelligence},
  volume    = {314},
  pages     = {103805},
  year      = {2023},
  publisher = {Elsevier}
}

@article{libratus2018,
author = {Noam Brown  and Tuomas Sandholm },
title = {Superhuman {AI} for heads-up no-limit poker: Libratus beats top professionals},
journal = {Science},
volume = {359},
number = {6374},
pages = {418-424},
year = {2018},
doi = {10.1126/science.aao1733},
URL = {https://www.science.org/doi/abs/10.1126/science.aao1733},
eprint = {https://www.science.org/doi/pdf/10.1126/science.aao1733},
abstract = {Pitting artificial intelligence (AI) against top human players demonstrates just how far AI has come. Brown and Sandholm built a poker-playing AI called Libratus that decisively beat four leading human professionals in the two-player variant of poker called heads-up no-limit Texas hold'em (HUNL). Over nearly 3 weeks, Libratus played 120,000 hands of HUNL against the human professionals, using a three-pronged approach that included precomputing an overall strategy, adapting the strategy to actual gameplay, and learning from its opponent. Science, this issue p. 418 An artificial intelligence program called Libratus played 120,000 hands of a two-player variant of poker and beat four leading human professionals. No-limit Texas hold'em is the most popular form of poker. Despite artificial intelligence (AI) successes in perfect-information games, the private information and massive game tree have made no-limit poker difficult to tackle. We present Libratus, an AI that, in a 120,000-hand competition, defeated four top human specialist professionals in heads-up no-limit Texas hold'em, the leading benchmark and long-standing challenge problem in imperfect-information game solving. Our game-theoretic approach features application-independent techniques: an algorithm for computing a blueprint for the overall strategy, an algorithm that fleshes out the details of the strategy for subgames that are reached during play, and a self-improver algorithm that fixes potential weaknesses that opponents have identified in the blueprint strategy.}}

@article{pluribus2019,
  title={Superhuman {AI} for multiplayer poker},
  author={Brown, Noam and Sandholm, Tuomas},
  journal={Science},
  volume={365},
  number={6456},
  pages={885--890},
  year={2019},
  publisher={American Association for the Advancement of Science}
}

@article{deepstack2017,
author = {Matej Morav\v{c}ík  and Martin Schmid  and Neil Burch  and Viliam Lisý  and Dustin Morrill  and Nolan Bard  and Trevor Davis  and Kevin Waugh  and Michael Johanson  and Michael Bowling },
title = {DeepStack: Expert-level artificial intelligence in heads-up no-limit poker},
journal = {Science},
volume = {356},
number = {6337},
pages = {508-513},
year = {2017},
doi = {10.1126/science.aam6960},
URL = {https://www.science.org/doi/abs/10.1126/science.aam6960},
eprint = {https://www.science.org/doi/pdf/10.1126/science.aam6960},
abstract = {Computer code based on continual problem re-solving beats human professional poker players at a two-player variant of poker. Computers can beat humans at games as complex as chess or go. In these and similar games, both players have access to the same information, as displayed on the board. Although computers have the ultimate poker face, it has been tricky to teach them to be good at poker, where players cannot see their opponents' cards. Morav\v{c}ík et al. built a code dubbed DeepStack that managed to beat professional poker players at a two-player poker variant called heads-up no-limit Texas hold'em. Instead of devising its strategy beforehand, DeepStack recalculated it at each step, taking into account the current state of the game. The principles behind DeepStack may enable advances in solving real-world problems that involve information asymmetry. Science, this issue p. 508 Artificial intelligence has seen several breakthroughs in recent years, with games often serving as milestones. A common feature of these games is that players have perfect information. Poker, the quintessential game of imperfect information, is a long-standing challenge problem in artificial intelligence. We introduce DeepStack, an algorithm for imperfect-information settings. It combines recursive reasoning to handle information asymmetry, decomposition to focus computation on the relevant decision, and a form of intuition that is automatically learned from self-play using deep learning. In a study involving 44,000 hands of poker, DeepStack defeated, with statistical significance, professional poker players in heads-up no-limit Texas hold'em. The approach is theoretically sound and is shown to produce strategies that are more difficult to exploit than prior approaches.}}

@article{studentofgames2023,
author = {Martin Schmid  and Matej Morav\v{c}ík  and Neil Burch  and Rudolf Kadlec  and Josh Davidson  and Kevin Waugh  and Nolan Bard  and Finbarr Timbers  and Marc Lanctot  and G. Zacharias Holland  and Elnaz Davoodi  and Alden Christianson  and Michael Bowling },
title = {Student of Games: A unified learning algorithm for both perfect and imperfect information games},
journal = {Science Advances},
volume = {9},
number = {46},
pages = {eadg3256},
year = {2023},
doi = {10.1126/sciadv.adg3256},
URL = {https://www.science.org/doi/abs/10.1126/sciadv.adg3256},
eprint = {https://www.science.org/doi/pdf/10.1126/sciadv.adg3256},
abstract = {Games have a long history as benchmarks for progress in artificial intelligence. Approaches using search and learning produced strong performance across many perfect information games, and approaches using game-theoretic reasoning and learning demonstrated strong performance for specific imperfect information poker variants. We introduce Student of Games, a general-purpose algorithm that unifies previous approaches, combining guided search, self-play learning, and game-theoretic reasoning. Student of Games achieves strong empirical performance in large perfect and imperfect information games—an important step toward truly general algorithms for arbitrary environments. We prove that Student of Games is sound, converging to perfect play as available computation and approximation capacity increases. Student of Games reaches strong performance in chess and Go, beats the strongest openly available agent in heads-up no-limit Texas hold'em poker, and defeats the state-of-the-art agent in Scotland Yard, an imperfect information game that illustrates the value of guided search, learning, and game-theoretic reasoning. Student of Games combines search, learning, and game-theoretic reasoning to play chess, go, poker, and Scotland Yard.}}

@article{alphazero2018,
author = {David Silver  and Thomas Hubert  and Julian Schrittwieser  and Ioannis Antonoglou  and Matthew Lai  and Arthur Guez  and Marc Lanctot  and Laurent Sifre  and Dharshan Kumaran  and Thore Graepel  and Timothy Lillicrap  and Karen Simonyan  and Demis Hassabis },
title = {A general reinforcement learning algorithm that masters chess, shogi, and Go through self-play},
journal = {Science},
volume = {362},
number = {6419},
pages = {1140-1144},
year = {2018},
doi = {10.1126/science.aar6404},
URL = {https://www.science.org/doi/abs/10.1126/science.aar6404},
eprint = {https://www.science.org/doi/pdf/10.1126/science.aar6404},
abstract = {Computers can beat humans at increasingly complex games, including chess and Go. However, these programs are typically constructed for a particular game, exploiting its properties, such as the symmetries of the board on which it is played. Silver et al. developed a program called AlphaZero, which taught itself to play Go, chess, and shogi (a Japanese version of chess) (see the Editorial, and the Perspective by Campbell). AlphaZero managed to beat state-of-the-art programs specializing in these three games. The ability of AlphaZero to adapt to various game rules is a notable step toward achieving a general game-playing system. Science, this issue p. 1140; see also pp. 1087 and 1118 AlphaZero teaches itself to play three different board games and beats state-of-the-art programs in each. The game of chess is the longest-studied domain in the history of artificial intelligence. The strongest programs are based on a combination of sophisticated search techniques, domain-specific adaptations, and handcrafted evaluation functions that have been refined by human experts over several decades. By contrast, the AlphaGo Zero program recently achieved superhuman performance in the game of Go by reinforcement learning from self-play. In this paper, we generalize this approach into a single AlphaZero algorithm that can achieve superhuman performance in many challenging games. Starting from random play and given no domain knowledge except the game rules, AlphaZero convincingly defeated a world champion program in the games of chess and shogi (Japanese chess), as well as Go.}}

@inproceedings{cfrd2014,
author = {Burch, Neil and Johanson, Michael and Bowling, Michael},
title = {Solving imperfect information games using decomposition},
year = {2014},
abstract = {Decomposition, i.e., independently analyzing possible subgames, has proven to be an essential principle for effective decision-making in perfect information games. However, in imperfect information games, decomposition has proven to be problematic. To date, all proposed techniques for decomposition in imperfect information games have abandoned theoretical guarantees. This work presents the first technique for decomposing an imperfect information game into subgames that can be solved independently, while retaining optimality guarantees on the full-game solution. We can use this technique to construct theoretically justified algorithms that make better use of information available at run-time, overcome memory or disk limitations at run-time, or make a time/space trade-off to overcome memory or disk limitations while solving a game. In particular, we present an algorithm for subgame solving which guarantees performance in the whole game, in contrast to existing methods which may have unbounded error. In addition, we present an offline game solving algorithm, CFR-D, which can produce a Nash equilibrium for a game that is larger than available storage.},
booktitle = {Proceedings of the Twenty-Eighth AAAI Conference on Artificial Intelligence},
pages = {602-608},
numpages = {7},
location = {Qu\'{e}bec City, Qu\'{e}bec, Canada},
series = {AAAI'14}
}

@inproceedings{reachmaxmargin2017,
 author = {Brown, Noam and Sandholm, Tuomas},
 booktitle = {Advances in Neural Information Processing Systems}, 
 pages = {}, 
 title = {Safe and Nested Subgame Solving for Imperfect-Information Games},
 url = {https://proceedings.neurips.cc/paper_files/paper/2017/file/7fe1f8abaad094e0b5cb1b01d712f708-Paper.pdf},
 volume = {30},
 year = {2017}
}

@inproceedings{maxmargin2016,
  title={Refining subgames in large imperfect information games},
  author={Morav\v{c}ík, Matej and Schmid, Martin and Ha, Karel and Hladik, Milan and Gaukrodger, Stephen},
  booktitle={Proceedings of the AAAI Conference on Artificial Intelligence},
  volume={30},
  number={1},
  year={2016}
}

@article{obscuro2025,
  title={General search techniques without common knowledge for imperfect-information games, and application to superhuman {F}og-of-{W}ar chess},
  author={Zhang, Brian Hu and Sandholm, Tuomas},
  journal={arXiv preprint arXiv:2506.01242},
  year={2025}
}

@book{shoham2008multiagent,
  title={Multiagent systems: Algorithmic, game-theoretic, and logical foundations},
  author={Shoham, Yoav and Leyton-Brown, Kevin},
  year={2008},
  publisher={Cambridge University Press}
}

@inproceedings{farina2021osqpe,
 author = {Farina, Gabriele and Sandholm, Tuomas},
 booktitle = {Advances in Neural Information Processing Systems}, 
 pages = {8845--8856}, 
 title = {Equilibrium Refinement for the Age of Machines: The One-Sided Quasi-Perfect Equilibrium},
 url = {https://proceedings.neurips.cc/paper_files/paper/2021/file/4a3050ae2c77da4f9c90e2e58e8e520f-Paper.pdf},
 volume = {34},
 year = {2021}
}

@article{nash1950,
author = {Nash, John F.},
title = {Equilibrium points in n-person games},
journal = {Proceedings of the National Academy of Sciences},
volume = {36},
number = {1},
pages = {48-49},
year = {1950},
doi = {10.1073/pnas.36.1.48},
URL = {https://www.pnas.org/doi/abs/10.1073/pnas.36.1.48},
eprint = {https://www.pnas.org/doi/pdf/10.1073/pnas.36.1.48}}

@book{gtessentials1952,
url = {https://doi.org/10.1515/9781400881727},
title = {Contributions to the Theory of Games, Volume I},
author = {Harold W. Kuhn and Albert William Tucker},
publisher = {Princeton University Press}, 
doi = {doi:10.1515/9781400881727},
isbn = {9781400881727},
year = {1951},
lastchecked = {2025-05-09}
}

@article{kovarik2022fosg,
  title={Rethinking formal models of partially observable multiagent decision making},
  author={Kova{\v{r}}{\'\i}k, Vojt{\v{e}}ch and Schmid, Martin and Burch, Neil and Bowling, Michael and Lis{ý}, Viliam},
  journal={Artificial Intelligence},
  volume={303},
  pages={103645},
  year={2022},
  publisher={Elsevier}
}

@article{zhang2021subgame,
  title={Subgame solving without common knowledge},
  author={Zhang, Brian Hu and Sandholm, Tuomas},
  journal={Advances in Neural Information Processing Systems},
  volume={34},
  pages={23993--24004},
  year={2021}
}

@inproceedings{liu2023opponent,
  title={Opponent-limited online search for imperfect information games},
  author={Liu, Weiming and Fu, Haobo and Fu, Qiang and Wei, Yang},
  booktitle={International Conference on Machine Learning},
  pages={21567--21585},
  year={2023},
  organization={PMLR}
}

@inproceedings{sepot2024,
  title     = {Look-ahead Search on Top of Policy Networks in Imperfect Information Games},
  author    = {Kubí\v{c}ek, Ond\v{r}ej and Burch, Neil and Lisý, Viliam},
  booktitle = {Proceedings of the Thirty-Third International Joint Conference on
               Artificial Intelligence, {IJCAI-24}}, 
  year      = {2024},
  month     = {8}, 
  doi       = {10.24963/ijcai.2024/480},
  url       = {https://doi.org/10.24963/ijcai.2024/480},
}

@inproceedings{farina2017regret,
  title={Regret minimization in behaviorally-constrained zero-sum games},
  author={Farina, Gabriele and Kroer, Christian and Sandholm, Tuomas},
  booktitle={International Conference on Machine Learning},
  pages={1107--1116},
  year={2017},
  organization={PMLR}
}

@book{van2012refinements,
  title={Refinements of the Nash equilibrium concept},
  author={Van Damme, Eric},
  volume={219},
  year={2012},
  publisher={Springer Science \& Business Media}
}

@article{kreps1982sequential,
  title={Sequential equilibria},
  author={Kreps, David M. and Wilson, Robert},
  journal={Econometrica: Journal of the Econometric Society},
  pages={863--894},
  year={1982},
  publisher={JSTOR}
}

@article{van1984relation,
  title={A relation between perfect equilibria in extensive form games and proper equilibria in normal form games},
  author={Van Damme, Eric},
  journal={International Journal of Game Theory},
  volume={13},
  number={1},
  pages={1--13},
  year={1984},
  publisher={Springer}
}

@incollection{bielefeld1988reexamination,
  title={Reexamination of the perfectness concept for equilibrium points in extensive games},
  author={Selten, Reinhard},
  booktitle={Models of strategic rationality},
  pages={1--31},
  year={1988},
  publisher={Springer}
}

@inproceedings{milec2024continual,
author = {Milec, David and Kub\'{\i}\v{c}ek, Ond\v{r}ej and Lis\'{y}, Viliam},
title = {Continual Depth-limited Responses for Computing Counter-strategies in Sequential Games},
year = {2024},
isbn = {9798400704864},  
abstract = {In zero-sum games, the optimal strategy is well-defined by the Nash equilibrium. However, it is overly conservative when playing against suboptimal opponents and it can not exploit their weaknesses. Limited look-ahead game solving in imperfect-information games allows superhuman play in massive real-world games such as Poker, Liar's Dice, and Scotland Yard. However, since they approximate Nash equilibrium, they tend to only win slightly against weak opponents. We propose theoretically sound methods combining limited look-ahead solving with an opponent model, in order to 1) approximate a best response in large games or 2) compute a robust response with control over the robustness of the response. Both methods can compute the response in real time to previously unseen strategies. We present theoretical guarantees of our methods. We show that existing robust response methods do not work combined with limited look-ahead solving of the shelf, and we propose a novel solution for the issue. Our algorithm performs significantly better than multiple baselines in smaller games and outperforms state-of-the-art methods against SlumBot.},
booktitle = {Proceedings of the 23rd International Conference on Autonomous Agents and Multiagent Systems},
pages = {2393–2395},
numpages = {3},
keywords = {approximating best response, depth limited solving, imperfect information, large games, opponent exploitation, robust response},
location = {Auckland, New Zealand},
series = {AAMAS '24}
}

@article{koller1996,
title = {Efficient Computation of Equilibria for Extensive Two-Person Games},
journal = {Games and Economic Behavior},
volume = {14},
number = {2},
pages = {247-259},
year = {1996},
issn = {0899-8256},
doi = {https://doi.org/10.1006/game.1996.0051},
url = {https://www.sciencedirect.com/science/article/pii/S0899825696900512},
author = {Daphne Koller and Nimrod Megiddo and Bernhard {von Stengel}},
abstract = {The Nash equilibria of a two-person, non-zero-sum game are the solutions of a certain linear complementarity problem (LCP). In order to use this for solving a game in extensive form, the game must first be converted to a strategic description such as the normal form. The classical normal form, however, is often exponentially large in the size of the game tree. If the game has perfect recall, a linear-sized strategic description is the sequence form. For the resulting small LCP, we show that an equilibrium is found efficiently by Lemke's algorithm, a generalization of the Lemke–Howson method.Journal of Economic LiteratureClassification Number: C72.}
}

@article{brown2018depth,
  title={Depth-limited solving for imperfect-information games},
  author={Brown, Noam and Sandholm, Tuomas and Amos, Brandon},
  journal={Advances in neural information processing systems},
  volume={31},
  year={2018}
}

@article{cfrplus2014,
  title={Solving large imperfect information games using CFR+},
  author={Tammelin, Oskari},
  journal={arXiv preprint arXiv:1407.5042},
  year={2014}
}

@inproceedings{gilpin2006poker,
author = {Gilpin, Andrew and Sandholm, Tuomas},
title = {A competitive Texas Hold'em poker player via automated abstraction and real-time equilibrium computation},
year = {2006},
isbn = {9781577352815}, 
abstract = {We present a game theory-based heads-up Texas Hold'em poker player, GS1. To overcome the computational obstacles stemming from Texas Hold'em's gigantic game tree, the player employs our automated abstraction techniques to reduce the complexity of the strategy computations. Texas Hold'em consists of four betting rounds. Our player solves a large linear program (offline) to compute strategies for the abstracted first and second rounds. After the second betting round, our player updates the probability of each possible hand based on the observed betting actions in the first two rounds as well as the revealed cards. Using these updated probabilities, our player computes in real-time an equilibrium approximation for the last two abstracted rounds. We demonstrate that our player, which incorporates very little poker-specific knowledge, is competitive with leading poker-playing programs which incorporate extensive domain knowledge, as well as with advanced human players.},
booktitle = {Proceedings of the 21st National Conference on Artificial Intelligence - Volume 2},
pages = {1007–1013},
numpages = {7},
location = {Boston, Massachusetts},
series = {AAAI'06}
}

@inproceedings{ganzfried2015endgame,
author = {Ganzfried, Sam and Sandholm, Tuomas},
title = {Endgame Solving in Large Imperfect-Information Games},
year = {2015},
isbn = {9781450334136}, 
abstract = {The leading approach for computing strong game-theoretic strategies in large imperfect-information games is to first solve an abstracted version of the game offline, then perform a table lookup during game play. We consider a modification to this approach where we solve the portion of the game that we have actually reached in real time to a greater degree of accuracy than in the initial computation. We call this approach endgame solving. Theoretically, we show that endgame solving can produce highly exploitable strategies in some games; however, we show that it can guarantee a low exploitability in certain games where the opponent is given sufficient exploitative power within the endgame. Furthermore, despite the lack of a general worst-case guarantee, we describe many benefits of endgame solving. We present an efficient algorithm for performing endgame solving in large imperfect-information games, and present a new variance-reduction technique for evaluating the performance of an agent that uses endgame solving. Experiments on no-limit Texas Hold'em show that our algorithm leads to significantly stronger performance against the strongest agents from the 2013 AAAI Annual Computer Poker Competition.},
booktitle = {Proceedings of the 2015 International Conference on Autonomous Agents and Multiagent Systems},
pages = {37–45},
numpages = {9},
keywords = {game solving, game theory, imperfect information},
location = {Istanbul, Turkey},
series = {AAMAS '15}
}

@inproceedings{gilpin2007abstraction,
author = {Gilpin, Andrew and Sandholm, Tuomas},
title = {Better automated abstraction techniques for imperfect information games, with application to Texas Hold'em poker},
year = {2007},
isbn = {9788190426275},
publisher = {Association for Computing Machinery}, 
url = {https://doi.org/10.1145/1329125.1329358},
doi = {10.1145/1329125.1329358},
abstract = {We present new approximation methods for computing game-theoretic strategies for sequential games of imperfect information. At a high level, we contribute two new ideas. First, we introduce a new state-space abstraction algorithm. In each round of the game, there is a limit to the number of strategically different situations that an equilibrium-finding algorithm can handle. Given this constraint, we use clustering to discover similar positions, and we compute the abstraction via an integer program that minimizes the expected error at each stage of the game. Second, we present a method for computing the leaf payoffs for a truncated version of the game by simulating the actions in the remaining portion of the game. This allows the equilibrium-finding algorithm to take into account the entire game tree while having to explicitly solve only a truncated version. Experiments show that each of our two new techniques improves performance dramatically in Texas Hold'em poker. The techniques lead to a drastic improvement over prior approaches for automatically generating agents, and our agent plays competitively even against the best agents overall.},
booktitle = {Proceedings of the 6th International Joint Conference on Autonomous Agents and Multiagent Systems},
articleno = {192},
numpages = {8},
keywords = {abstraction, automated abstraction, computational game theory, equilibrium computation, poker},
location = {Honolulu, Hawaii},
series = {AAMAS '07}
}
\begin{appendixcontent}
\appendix
\crefalias{section}{appendix}
\section{Theorems and Observations}
\label{app:proofs}
\gadgetvalue*
{\allowdisplaybreaks
\begin{proof}



We will give the proof from the perspective of Player 1. The proof for Player 2 is identical. We will use $\WeightedReward{1}{\Strategy{}}(\History, \Action{}) := \Reach{\Strategy{}}(\History) \prod_{\Player \in \Players}\Strategy{\Player}(\Infoset{\Player}(\History), \Action{\Player}) \Rewards{1}(\History, \Action{})$, where $\Action{\Player} $ is the Player's $\Player$ part of action $\Action{}$.

The proof shows that when replacing the strategy from a subgame, the exploitability is influenced only by this subgame. Then it shows that if counterfactual best-response values increase in the root information sets, the exploitability decreases. 

Consider a strategy $\Strategy{2}'$, which plays $\BRStrategy{2}$ everywhere except the parts within subgame $\Subgame{}$, where it plays $\BRBlueprintStrategy{2}$. 




\begin{align}
    &\Utility{1}^{(\Strategy{1}, \BRStrategy{-1})}(\InitHistory) = \sum_{\History \HistoryExtend \History' \Action{}'}  \WeightedReward{1}{(\Strategy{1}, \BRStrategy{-1})} (\History', \Action{}') \\
    & = \sum_{\substack{\History \HistoryExtend \History' \Action{}' \\ \History \in \Subgame{}}} \WeightedReward{1}{(\Strategy{1}, \BRStrategy{-1})} (\History', \Action{}') +\sum_{\substack{\History \HistoryExtend \History' \Action{}' \\ \History \notin \Subgame{}}} \WeightedReward{1}{(\Strategy{1}, \BRStrategy{-1})} (\History', \Action{}') \\
    & = \sum_{\substack{\History \HistoryExtend \History' \Action{}' \\ \History \in \Subgame{}}}\WeightedReward{1}{(\Strategy{1}, \BRStrategy{-1})}  (\History', \Action{}') + \sum_{\substack{\History \HistoryExtend \History' \Action{}' \\ \History \notin \Subgame{}}} \WeightedReward{1}{(\BlueprintStrategy{1},  \BRStrategy{-1})} (\History', \Action{}') \\
    & = \sum_{\substack{\History \HistoryExtend \History' \Action{}' \\ \History \in \Subgame{}}}\WeightedReward{1}{(\Strategy{1}, \BRStrategy{-1})}  (\History', \Action{}') - \sum_{\substack{\History \HistoryExtend \History' \Action{}' \\ \History \in \Subgame{}}} \WeightedReward{1}{(\BlueprintStrategy{1},  \Strategy{-1}')}(\History', \Action{}')\\
    &+ \sum_{\substack{\History \HistoryExtend \History' \Action{}' \\ \History \in \Subgame{}}} \WeightedReward{1}{(\BlueprintStrategy{1},  \Strategy{-1}')}(\History', \Action{}') + \sum_{\substack{\History \HistoryExtend \History' \Action{}' \\ \History \notin \Subgame{}}} \WeightedReward{1}{(\BlueprintStrategy{1},  \BRStrategy{-1})}(\History', \Action{}') \\
    & \geq \sum_{\substack{\History \HistoryExtend \History' \Action{}' \\ \History \in \Subgame{}}}\WeightedReward{1}{(\Strategy{1}, \BRStrategy{-1})}  (\History', \Action{}') - \sum_{\substack{\History \HistoryExtend \History' \Action{}' \\ \History \in \Subgame{}}} \WeightedReward{1}{(\BlueprintStrategy{1},  \Strategy{-1}')}(\History', \Action{}')\\
    &+  \sum_{\substack{\History \HistoryExtend \History' \Action{}' \\ \History \in \Subgame{}}} \WeightedReward{1}{(\BlueprintStrategy{1},  \BRBlueprintStrategy{-1})}(\History', \Action{}')\sum_{\substack{\History \HistoryExtend \History' \Action{}' \\ \History \notin \Subgame{}}} \WeightedReward{1}{(\BlueprintStrategy{1},  \BRBlueprintStrategy{-1})}(\History', \Action{}') \\
    & = \sum_{\substack{\History \HistoryExtend \History' \Action{}' \\ \History \in \Subgame{}}}\WeightedReward{1}{(\Strategy{1}, \BRStrategy{-1})}  (\History', \Action{}') - \sum_{\substack{\History \HistoryExtend \History' \Action{}' \\ \History \in \Subgame{}}} \WeightedReward{1}{(\BlueprintStrategy{1},  \Strategy{-1}')}(\History', \Action{}')\\
    & + \Utility{1}^{(\BlueprintStrategy{1}, \BRBlueprintStrategy{-1})}(\InitHistory) \\ 
    & = \sum_{\Infoset{2} \in \Infosets{2}(\PublicState)} \sum_{\History \in \Histories(\Infoset{2})} \PlayerReach{1}{\BlueprintStrategy{1}}(\History) \PlayerReach{2}{\BRStrategy{2}}(\History) \PlayerReach{\ChancePlayer}{\Strategy{\ChancePlayer}}(\History) \\
    &\bigg(\Utility{1}^{(\Strategy{1}, \BRStrategy{-1})}(\History) - \Utility{1}^{(\BlueprintStrategy{1}, \BRBlueprintStrategy{-1})}(\History) \bigg)  + \Utility{1}^{(\BlueprintStrategy{1}, \BRBlueprintStrategy{-1})}(\InitHistory) \\
    & = \sum_{\Infoset{2} \in \Infosets{2}(\PublicState)} \PlayerReach{2}{\BRStrategy{2}}(\Infoset{2}) \sum_{\History \in \Histories(\Infoset{2})}\PlayerReach{1}{\BlueprintStrategy{1}}(\History) \PlayerReach{\ChancePlayer}{\Strategy{\ChancePlayer}}(\History)\\
    &\bigg(\Utility{1}^{(\Strategy{1}, \BRStrategy{-1})}(\History) - \Utility{1}^{(\BlueprintStrategy{1}, \BRBlueprintStrategy{-1})}(\History) \bigg)  + \Utility{1}^{(\BlueprintStrategy{1}, \BRBlueprintStrategy{-1})}(\InitHistory) \\
    & = \sum_{\Infoset{2} \in \Infosets{2}(\PublicState)} \PlayerReach{2}{\BRStrategy{2}}(\Infoset{2}) \bigg(\CounterfactualValue{1}{(\Strategy{1}, \BRStrategy{-1})}(\Infoset{2}) - \CounterfactualValue{1}{(\BlueprintStrategy{1}, \BRBlueprintStrategy{-1})}(\Infoset{2}) \bigg)\\
    &+ \Utility{1}^{(\BlueprintStrategy{1}, \BRBlueprintStrategy{-1})}(\InitHistory)
\end{align}

 As a result, if $\PlayerReach{2}{\Strategy{2}^{'BR}}(\Infoset{2}) > 0$, the Player 2 plays into the information set $\Infoset{-\Player}$ with the best response. This results in an improvement 
 \begin{align}
     &\Utility{1}^{(\Strategy{1}, \BRStrategy{-1})}(\InitHistory) - \Utility{1}^{(\BlueprintStrategy{1}, \BRBlueprintStrategy{-1})}(\InitHistory) \\
     \geq &\sum_{\Infoset{2} \in \Infosets{2}(\PublicState)} \PlayerReach{2}{\BRStrategy{2}}(\Infoset{2}) \bigg(\CounterfactualValue{1}{(\Strategy{1}, \BRStrategy{-1})}(\Infoset{2}) - \CounterfactualValue{1}{(\BlueprintStrategy{1}, \BRBlueprintStrategy{-1})}(\Infoset{2}) \bigg)
 \end{align} 
\end{proof}
}

\gadgetequilibria*

\begin{proof}

\begin{figure}
    \centering
    \begin{tikzpicture}[
    level 1/.style={sibling distance=3cm, level distance=1cm}, 
    level 2/.style={sibling distance=1.4cm, level distance=1cm},
    level 3/.style={sibling distance=1cm, level distance=1cm},
    p2node/.style={ 
      regular polygon,
      regular polygon sides=3,
      rotate=180, 
      draw,
      fill=blue!40,
      inner sep=1pt,
      minimum size=15pt
    },
    p1node/.style={
      regular polygon,
      regular polygon sides=3, 
      draw,
      fill=red!40,
      inner sep=1pt,
      minimum size=15pt
    },
    chancenode/.style={circle, draw, fill=gray!30, minimum size=5mm, inner sep=1pt},
    terminal/.style={inner sep=2pt},
    payoff/.style={
      font=\normalsize
    },
    edge_label/.style={midway, fill=white, inner sep=1pt}
]

\node[chancenode] (S) {} 
    child {
        node[p1node] (P1-L) {}
        child { 
            node[p2node] (P2-L1) {}
            child {
                node[payoff] (T1) {2}
                edge from parent
                node[edge_label,above left] {H}
            }
            child {
                node[payoff] (T2) {1}
                edge from parent
                node[edge_label,above right] {T}
            } 
            edge from parent
            node[edge_label,above left] {H}
        }
        child { 
            node[p2node] (P2-L2) {}
            child {
                node[payoff] (T3) {0}
                edge from parent
                node[edge_label,above left] {H}
            }
            child {
                node[payoff] (T4) {2}
                edge from parent
                node[edge_label,above right] {T}
            } 
            edge from parent
            node[edge_label,above right] {T}
        }
        edge from parent
        node[edge_label, above left] {$\frac{1}{3}$}
    }
    child {
        node[p1node] (P1-M) {}
        child { 
            node[p2node] (P2-M1) {}
            child {
                node[payoff] (T5) {1}
                edge from parent
                node[edge_label,above left] {H}
            }
            child {
                node[payoff] (T6) {0}
                edge from parent
                node[edge_label,above right] {T}
            } 
            edge from parent
            node[edge_label, above left] {H}
        }
        child { 
            node[p2node] (P2-M2) {}
            child {
                node[payoff] (T7) {0}
                edge from parent
                node[edge_label,above left] {H}
            }
            child {
                node[payoff] (T8) {1}
                edge from parent
                node[edge_label,above right] {T}
            } 
            edge from parent
            node[edge_label, above right] {T}
        }
        edge from parent
        node[edge_label, left] {$\frac{1}{3}$}
    }
    child {
        node[p1node] (P1-R) {}
        child { 
            node[p2node] (P2-R1) {}
            child {
                node[payoff] (T9) {1}
                edge from parent
                node[edge_label,above left] {H}
            }
            child {
                node[payoff] (T10) {0}
                edge from parent
                node[edge_label,above right] {T}
            } 
            edge from parent
            node[edge_label,above left] {H}
        }
        child { 
            node[p2node] (P2-R2) {}
            child {
                node[payoff] (T11) {0}
                edge from parent
                node[edge_label,above left] {H}
            }
            child {
                node[payoff] (T12) {1}
                edge from parent
                node[edge_label,above right] {T}
            } 
            edge from parent
            node[edge_label,above right] {T}
        }
        edge from parent
        node[edge_label, above right] {$\frac{1}{3}$}
    }; 
  \draw [red, dashed, thick] (P1-M) -- (P1-R)
        node [midway, above left, yshift=1mm, fill=white, inner sep=1pt, font=\small] {$\Infoset{1}^2$};
        
  \draw [blue, dashed, thick] (P2-L1) -- (P2-L2)
        node [midway, above, yshift=1mm, fill=white, inner sep=1pt, font=\small] {$\Infoset{2}^1$};
  \draw [blue, dashed, thick] (P2-L2) -- (P2-M1)
        node [midway, above, yshift=1mm, fill=white, inner sep=1pt, font=\small] {$\Infoset{2}^1$};

  \draw [blue, dashed, thick] (P2-M1) -- (P2-M2)
        node [midway, above, yshift=1mm, fill=white, inner sep=1pt, font=\small] {$\Infoset{2}^1$};
        
  \draw [blue, dashed, thick] (P2-R1) -- (P2-R2)
        node [midway, above, yshift=1mm, fill=white, inner sep=1pt, font=\small] {$\Infoset{2}^2$};
        
\end{tikzpicture}
    \caption{An example of a game, where different equilibria in gadget games result in different utility}
    \label{fig:gadet_equilibruim_example}
\end{figure}

\begin{figure}
    \centering
    \begin{tikzpicture}[
    level 1/.style={sibling distance=3cm, level distance=1cm},
    level 2/.style={sibling distance=1cm, level distance=1cm}, 
    level 3/.style={sibling distance=1.4cm, level distance=1cm},
    level 4/.style={sibling distance=1cm, level distance=1cm},
    p2node/.style={ 
      regular polygon,
      regular polygon sides=3,
      rotate=180, 
      draw,
      fill=blue!40,
      inner sep=1pt,
      minimum size=15pt
    },
    p1node/.style={
      regular polygon,
      regular polygon sides=3, 
      draw,
      fill=red!40,
      inner sep=1pt,
      minimum size=15pt
    },
    chancenode/.style={circle, draw, fill=gray!30, minimum size=5mm, inner sep=1pt},
    terminal/.style={inner sep=2pt},
    payoff/.style={
      font=\normalsize
    },
    edge_label/.style={midway, fill=white, inner sep=1pt}
]

\node[chancenode] (S) {} 
    child {
        node[p2node] (P2-R-L) {}
        child[grow=-150]  { 
            node[payoff] (TER1) {0.25}
            edge from parent
            node[edge_label, above left] {T}
        }
        child[grow=-90]  {
            node[p1node] (P1-L) {}
            child { 
                node[p2node] (P2-L1) {}
                child {
                    node[payoff] (T1) {2}
                    edge from parent
                    node[edge_label,above left] {H}
                }
                child {
                    node[payoff] (T2) {1}
                    edge from parent
                    node[edge_label,above right] {T}
                } 
                edge from parent
                node[edge_label,above left] {H}
            }
            child { 
                node[p2node] (P2-L2) {}
                child {
                    node[payoff] (T3) {0}
                    edge from parent
                    node[edge_label,above left] {H}
                }
                child {
                    node[payoff] (T4) {2}
                    edge from parent
                    node[edge_label,above right] {T}
                } 
                edge from parent
                node[edge_label,above right] {T}
            }
            edge from parent
            node[edge_label, left] {C}
        }
        edge from parent
        node[edge_label, above left] {$\frac{1}{3}$}
    }
    child {
        node[p2node] (P2-R-M) {}
        child[grow=-150]  { 
            node[payoff] (TER2) {0.25}
            edge from parent
            node[edge_label, above left] {T}
        }
        child[grow=-90] {
            node[p1node] (P1-M) {}
            child { 
                node[p2node] (P2-M1) {}
                child {
                    node[payoff] (T5) {1}
                    edge from parent
                    node[edge_label,above left] {H}
                }
                child {
                    node[payoff] (T6) {0}
                    edge from parent
                    node[edge_label,above right] {T}
                } 
                edge from parent
                node[edge_label, above left] {H}
            }
            child { 
                node[p2node] (P2-M2) {}
                child {
                    node[payoff] (T7) {0}
                    edge from parent
                    node[edge_label,above left] {H}
                }
                child {
                    node[payoff] (T8) {1}
                    edge from parent
                    node[edge_label,above right] {T}
                } 
                edge from parent
                node[edge_label, above right] {T}
            }
            edge from parent
            node[edge_label, left] {C}
        }
        edge from parent
        node[edge_label, left] {$\frac{1}{3}$}
    }
    child {
        node[p2node] (P2-R-R) {}
        child[grow=-150]  { 
            node[payoff] (TER3) {0.5}
            edge from parent
            node[edge_label, above left,xshift=-1mm] {T}
        }
        child[grow=-90] {
            node[p1node] (P1-R) {}
            child { 
                node[p2node] (P2-R1) {}
                child {
                    node[payoff] (T9) {1}
                    edge from parent
                    node[edge_label,above left] {H}
                }
                child {
                    node[payoff] (T10) {0}
                    edge from parent
                    node[edge_label,above right] {T}
                } 
                edge from parent
                node[edge_label,above left] {H}
            }
            child { 
                node[p2node] (P2-R2) {}
                child {
                    node[payoff] (T11) {0}
                    edge from parent
                    node[edge_label,above left] {H}
                }
                child {
                    node[payoff] (T12) {1}
                    edge from parent
                    node[edge_label,above right] {T}
                } 
                edge from parent
                node[edge_label,above right] {T}
            }
            edge from parent
            node[edge_label, left] {C}
        }
        edge from parent
        node[edge_label, above right] {$\frac{1}{3}$}
    }; 
  \draw [red, dashed, thick] (P1-M) -- (P1-R)
        node [midway, above left, yshift=1mm, fill=white, inner sep=1pt, font=\small] {$\Infoset{1}^2$};
        
  \draw [blue, dashed, thick] (P2-L1) -- (P2-L2)
        node [midway, above, yshift=1mm, fill=white, inner sep=1pt, font=\small] {$\Infoset{2}^1$};
  \draw [blue, dashed, thick] (P2-L2) -- (P2-M1)
        node [midway, above, yshift=1mm, fill=white, inner sep=1pt, font=\small] {$\Infoset{2}^1$};

  \draw [blue, dashed, thick] (P2-M1) -- (P2-M2)
        node [midway, above, yshift=1mm, fill=white, inner sep=1pt, font=\small] {$\Infoset{2}^1$};
        
  \draw [blue, dashed, thick] (P2-R1) -- (P2-R2)
        node [midway, above, yshift=1mm, fill=white, inner sep=1pt, font=\small] {$\Infoset{2}^2$};
        
  \draw [blue, dashed, thick] (P2-R-L) -- (P2-R-M)
        node [midway, above right, yshift=1mm, fill=white, inner sep=0.5pt, font=\small] {$\Infoset{2}^{G, 1}$};
        
\end{tikzpicture}
    \caption{Resolving gadget game of subgame from \cref{fig:gadet_equilibruim_example}}
    \label{fig:gadet_equilibruim_resolving}
\end{figure}

\begin{figure}
    \centering
    \begin{tikzpicture}[
    level 1/.style={sibling distance=4.3cm, level distance=1cm},
    level 2/.style={sibling distance=3cm, level distance=1cm}, 
    level 3/.style={sibling distance=1.4cm, level distance=1cm},
    level 4/.style={sibling distance=0.8cm, level distance=1cm},
    p2node/.style={ 
      regular polygon,
      regular polygon sides=3,
      rotate=180, 
      draw,
      fill=blue!40,
      inner sep=1pt,
      minimum size=15pt
    },
    p1node/.style={
      regular polygon,
      regular polygon sides=3, 
      draw,
      fill=red!40,
      inner sep=1pt,
      minimum size=15pt
    },
    chancenode/.style={circle, draw, fill=gray!30, minimum size=5mm, inner sep=1pt},
    terminal/.style={inner sep=2pt},
    payoff/.style={
      font=\normalsize
    },
    edge_label/.style={midway, fill=white, inner sep=1pt}
]

\node[p2node] (S) {} 
    child {
        node[chancenode] (P2-R-L) {}
        child  {
            node[p1node] (P1-L) {}
            child { 
                node[p2node] (P2-L1) {}
                child {
                    node[payoff] (T1) {$\frac{7}{4}$}
                    edge from parent
                    node[edge_label,above left] {H}
                }
                child {
                    node[payoff] (T2) {$\frac{3}{4}$}
                    edge from parent
                    node[edge_label,above right] {T}
                } 
                edge from parent
                node[edge_label,above left] {H}
            }
            child { 
                node[p2node] (P2-L2) {}
                child {
                    node[payoff] (T3) {$\frac{-1}{4}$}
                    edge from parent
                    node[edge_label,above left] {H}
                }
                child {
                    node[payoff] (T4) {$\frac{7}{4}$}
                    edge from parent
                    node[edge_label,above right] {T}
                } 
                edge from parent
                node[edge_label,above right] {T}
            }
            edge from parent
            node[edge_label,  above left] {$\frac{1}{3}$}
        }
        child {
            node[p1node] (P1-M) {}
            child { 
                node[p2node] (P2-M1) {}
                child {
                    node[payoff] (T5) {$\frac{3}{4}$}
                    edge from parent
                    node[edge_label,above left] {H}
                }
                child {
                    node[payoff] (T6) {$\frac{-1}{4}$}
                    edge from parent
                    node[edge_label,above right] {T}
                } 
                edge from parent
                node[edge_label, above left] {H}
            }
            child { 
                node[p2node] (P2-M2) {}
                child {
                    node[payoff] (T7) {$\frac{-1}{4}$}
                    edge from parent
                    node[edge_label,above left] {H}
                }
                child {
                    node[payoff] (T8) {$\frac{3}{4}$}
                    edge from parent
                    node[edge_label,above right] {T}
                } 
                edge from parent
                node[edge_label, above right] {T}
            }
            edge from parent
            node[edge_label, above right] {$\frac{1}{3}$}
        }
        edge from parent
        node[edge_label, above left] {$\Infoset{2}^1$}
    }
    child {
        node[chancenode] (P2-R-R) {}
        child {
            node[p1node] (P1-R) {}
            child { 
                node[p2node] (P2-R1) {}
                child {
                    node[payoff] (T9) {$\frac{1}{2}$}
                    edge from parent
                    node[edge_label,above left] {H}
                }
                child {
                    node[payoff] (T10) {$\frac{-1}{2}$}
                    edge from parent
                    node[edge_label,above right] {T}
                } 
                edge from parent
                node[edge_label,above left] {H}
            }
            child { 
                node[p2node] (P2-R2) {}
                child {
                    node[payoff] (T11) {$\frac{-1}{2}$}
                    edge from parent
                    node[edge_label,above left] {H}
                }
                child {
                    node[payoff] (T12) {$\frac{1}{2}$}
                    edge from parent
                    node[edge_label,above right] {T}
                } 
                edge from parent
                node[edge_label,above right] {T}
            }
            edge from parent
            node[edge_label, left] {$\frac{1}{3}$}
        }
        edge from parent
        node[edge_label, above right] {$\Infoset{2}^2$}
    }; 
  \draw [red, dashed, thick] (P1-M) -- (P1-R)
        node [midway, above left, yshift=1mm, fill=white, inner sep=1pt, font=\small] {$\Infoset{1}^2$};
        
  \draw [blue, dashed, thick] (P2-L1) -- (P2-L2)
        node [midway, above, yshift=1mm, fill=white, inner sep=1pt, font=\small] {$\Infoset{2}^1$};
  \draw [blue, dashed, thick] (P2-L2) -- (P2-M1)
        node [midway, above, yshift=1mm, fill=white, inner sep=1pt, font=\small] {$\Infoset{2}^1$};

  \draw [blue, dashed, thick] (P2-M1) -- (P2-M2)
        node [midway, above, yshift=1mm, fill=white, inner sep=1pt, font=\small] {$\Infoset{2}^1$};
        
  \draw [blue, dashed, thick] (P2-R1) -- (P2-R2)
        node [midway, above, yshift=1mm, fill=white, inner sep=1pt, font=\small] {$\Infoset{2}^2$};
         
\end{tikzpicture}
    \caption{Max-margin gadget game of subgame from \cref{fig:gadet_equilibruim_example}}
    \label{fig:gadet_equilibruim_maxmargin}
\end{figure}

    Consider a game that starts with a chance node that uniformly selects one of the 3 states. The Player 1 cannot distinguish states 2 and 3, whereas Player 2 cannot distinguish states 1 and 2. States 2 and 3 are then followed by the Matching Pennies game. State 1 is followed by a modified Matching Pennies, where each outcome gives Player 1 one more point, except when Player 1 plays Tails, and Player 2 plays Heads. We show an extensive-form representation of this game in \cref{fig:gadet_equilibruim_example}.
    
    This games contains a single Nash equilibrium $\Nash{1}(\Infoset{1}^1, H) = \frac{2}{3}$, $\Nash{1}(\Infoset{1}^2, H) = \frac{1}{2}$, $\Nash{2}(\Infoset{2}^1, H) = \frac{1}{3}$, $\Nash{2}(\Infoset{2}^2, H) = \frac{2}{3}$ with value $\Utility{\Player}^{\Nash{}}(\InitHistory) = \frac{7}{9}$.

    Assume a blueprint $\BlueprintStrategy{1}$, which is the same as $\Nash{1}$ in $\Infoset{1}^1$, but differs in $\Infoset{1}^{2}$ in following way $\BlueprintStrategy{1}(\Infoset{1}^1, H) = \frac{2}{3}$ and $\BlueprintStrategy{1}(\Infoset{1}^2, H) = 0$. Assume a $\BRBlueprintStrategy{2} \in \BRSet{2}(\BlueprintStrategy{1})$, then the expected utility and exploitability is $\Utility{1}^{(\BlueprintStrategy{1}, \BRBlueprintStrategy{2})}(\InitHistory) = \frac{1}{3}$,  $\Exploitability(\BlueprintStrategy{1}) = \frac{7}{9} - \frac{1}{3} = \frac{4}{9}$

    Now assume Player 1 reached information set $\Infoset{1}^1$ and wants to use common-knowledge subgame solving. Since Player 2 cannot distinguish between states 1 and 2, the common-knowledge subgame must contain both. Similarly, Player 1 cannot distinguish between states 2 and 3. So the subgame contains all 3 states. The counterfactual best response values in the opponent's information sets are
    \begin{align}
        \CounterfactualValue{1}{\BlueprintStrategy{-2}}(\Infoset{2}^1) &= 0.25 \\
        \CounterfactualValue{1}{\BlueprintStrategy{-2}}(\Infoset{2}^2) &= 0.5        
    \end{align}

    The extensive-form representations of resolving and max-margin gadget games are represented in \cref{fig:gadet_equilibruim_resolving,fig:gadet_equilibruim_maxmargin}. Starting with the resolving gadget game. In every Nash equilibrium, Player 2 plays $\Strategy{2}(\Infoset{2}^{G, 1}, T) = 1.0$, as player 1 can always guarantee to get at least utility 1 by playing $H$ in the left-most state. Therefore, the strategy in $\Infoset{1}^{2}$ depends only on the right-most node, as it may be reached with non-zero probability. Since this is just a matching pennies game, the optimal strategy of Player 1 has to always be $\Strategy{1}(\Infoset{1}^2, H) = \frac{1}{2}$. Player 1 can play any strategy in $\Infoset{1}^1$, and it will still be part of a Nash equilibrium. This means that both $\BlueprintStrategy{1}$ and $\Nash{1}$ are Nash equilibria of the resolving gadget game. However, $\Exploitability(\BlueprintStrategy{1}) = \frac{4}{9}$, but $\Exploitability(\Nash{1}) = 0$.

    In the max-margin gadget game, Player 2 will always play to the right subtree, as it can guarantee a reward $0$ there. Similarly to resolving gadget, any Nash equilibrium in the max-margin gadget game needs to satisfy $\Strategy{1}(\Infoset{1}^2, H) = \frac{1}{2}$, but the strategy in $\Infoset{1}^1$ can be arbitrary. Again, both $\BlueprintStrategy{1}$ and $\Nash{1}$ are Nash equilibria of the max-margin gadget game, but they result in different exploitability in the original game.
\end{proof}

\ggse*
\begin{proof}
    Without loss of generality, we will prove this statement for player 1. We start by showing that for every Nash equilibrium, there is a gadget game sequential equilibrium that plays the same in information sets with non-zero probability, but may differ in those with zero reach probability.

    Let $\Subgame$ be a subgame, $\BlueprintStrategy{1}$ a blueprint strategy of Player 1, and $\Nash{}$ be a Nash equilibrium of the corresponding gadget game $\GadgetGame{1}$. The expected utility in the initial history of the gadget game $\InitHistory$ of the equilibrium in the gadget game is
    \begin{align} \
    \sum_{\History \in \Histories(\PublicState)} \PlayerReach{1}{\BlueprintStrategy{1}}(\History) \PlayerReach{\ChancePlayer}{}(\History) \PlayerReach{2}{\Nash{2}}(\History) \Utility{1}^{\Nash{}}(\History)
    \end{align}
    Every Nash equilibrium $\Nash{}$ is a mutual best response, this means that the following holds for Player 1 in the root information sets $\Infoset{1} \in \Infosets{1}(\PublicState)$ of the subgame $\Subgame$

    \begin{align}
        &\sum_{\History \in \Histories(\PublicState)} \PlayerReach{1}{\BlueprintStrategy{1}}(\History) \PlayerReach{\ChancePlayer}{}(\History) \PlayerReach{2}{\Nash{2}}(\History) \Utility{1}^{\Nash{}}(\History)  \\
        =&\max_{\Strategy{1}}\sum_{\History \in \Histories(\PublicState)} \PlayerReach{1}{\BlueprintStrategy{1}}(\History) \PlayerReach{\ChancePlayer}{}(\History) \PlayerReach{2}{\Nash{2}}(\History) \Utility{1}^{\Strategy{1}, \Nash{2}}(\History)  \\
        =&\max_{\Strategy{1}}\sum_{\Infoset{1} \in \Infosets{1}(\PublicState)} \PlayerReach{1}{\BlueprintStrategy{1}}(\Infoset{1}) \sum_{\Histories \in \Histories(\Infoset{1})} \PlayerReach{\ChancePlayer}{}(\History) \PlayerReach{2}{\Nash{2}}(\History) \Utility{1}^{\Strategy{1}, \Nash{2}}(\History) \\
        =&\max_{\Strategy{1}}\sum_{\Infoset{1} \in \Infosets{1}(\PublicState)} \PlayerReach{1}{\BlueprintStrategy{1}}(\Infoset{1})   \CounterfactualValue{1}{\Strategy{1}, \Nash{2}}(\Infoset{1}) \\
        =&\sum_{\Infoset{1} \in \Infosets{1}(\PublicState)} \PlayerReach{1}{\BlueprintStrategy{1}}(\Infoset{1}) \max_{\Strategy{1}} \CounterfactualValue{1}{\Strategy{1}, \Nash{2}}(\Infoset{1})
    \end{align}
    It is possible to swap the maximum and the sum in games with perfect recall.
    
    Let us split the root information sets into 2 sets $\Infosets{1}^1, \Infosets{1}^2$. The first $\Infosets{1}^1$ contains all of the information sets, that are reached with non-zero probability by the opponent $\sum_{\History \in \Histories(\Infoset{1})} \PlayerReach{2}{\Nash{2}} > 0$ and second $\Infosets{1}^2$ contains those reached with zero probability $\sum_{\History \in \Histories(\Infoset{1})} \PlayerReach{2}{\Nash{2}} = 0$.

    Now consider a GGSE with arbitrary prior $\StrategyPrior{2}$, in information sets $\Infosets{1}^1$, this prior has no effect, as $\PriorEpsilon \to 0^+$ because the reaches from the strategy outweigh the prior. This means that there is always a GGSE that plays the same in those parts. 
    
    Now we show that for each information set in $\Infoset{1}^2$, Player 1 improves the counterfactual value when converging to GGSE if Player 2 plays according to the prior. In each information set $\Infoset{1}^2 \in \Infosets{1}^2$, the reaches from the prior are $\PlayerReach{2}{\StrategyPrior{2}}(\Infoset{2}) := \StrategyPrior{2}(\AuxiliaryInfoset{2}, \Action{2}) \cdot \PriorEpsilon$ for corresponding auxiliary information set $\AuxiliaryInfoset{2}$ and action $\Action{2}$ that leads to $\Infoset{2}$. The GGSE maximizes the counterfactual values in each information set $\Infoset{1}^2$ if the other player would reach those information sets according to the prior
    \begin{align}
        \max_{\Strategy{1}} \sum_{\History \in \Histories(\Infoset{1})}\PlayerReach{\ChancePlayer}{}(\History) \PlayerReach{2}{\StrategyPrior{2}}(\History) \Utility{1}^{\Strategy{1}, \Nash{2}}(\Infoset{1}^2)
    \end{align}

    Finally, we will show that for each Nash equilibrium $\Nash{}$ and counterfactual reach $\PlayerReach{-1}{}(\History) = \PlayerReach{c}{} (\History) \PlayerReach{2}{}(\Infoset{2}(\History))$, there is a GGSE that have the same or better best response counterfactual value in each opponent's information set, assuming the same counterfactual reach. We will use $\PlayerReach{2}{}(\History) := \PlayerReach{2}{}(\Infoset{2}(\History))$ to improve clarity.

    A best response of Player 2, if we assume it has reached each information set in the root of the subgame $\Infoset{2} \in \Infosets{2}(\PublicState)$ with probability $\PlayerReach{2}{}(\Infoset{2})$ is
    \begin{align}
        &\min_{\Strategy{2}} \sum_{\History \in \Histories(\PublicState)} \PlayerReach{1}{\BlueprintStrategy{1}}(\History)\PlayerReach{\ChancePlayer}{}(\History) \PlayerReach{2}{}(\History) \Utility{1}^{\Nash{1}, \Strategy{2}} (\History) \\
        =& \min_{\Strategy{2}} \sum_{\Infoset{1} \in \Infosets{1}(\PublicState)} \PlayerReach{1}{\BlueprintStrategy{1}}(\Infoset{1}) \smashoperator{\sum_{\History \in \Histories(\Infoset{1})}}\PlayerReach{\ChancePlayer}{}(\History) \PlayerReach{2}{}(\History)\Utility{1}^{\Nash{1}, \Strategy{2}} (\History) \\
        \leq & \min_{\Strategy{2}} \sum_{\Infoset{1} \in \Infosets{1}(\PublicState)} \PlayerReach{1}{\BlueprintStrategy{1}}(\Infoset{1}) \smashoperator{\sum_{\History \in \Histories(\Infoset{1})}}\PlayerReach{\ChancePlayer}{}(\History) \PlayerReach{2}{}(\History)\Utility{1}^{\Strategy{1}^G, \Strategy{2}} (\History) \\
        =& \min_{\Strategy{2}} \sum_{\History \in \Histories(\PublicState)} \PlayerReach{1}{\BlueprintStrategy{1}}(\History)\PlayerReach{\ChancePlayer}{}(\History) \PlayerReach{2}{}(\History) \Utility{1}^{\Strategy{1}^G, \Strategy{2}} (\History)\\
        =& \min_{\Strategy{2}} \sum_{\Infoset{2} \in \Infosets{2}(\PublicState)} \PlayerReach{2}{}(\Infoset{2}) \smashoperator{\sum_{\History \in \Histories(\Infoset{2})}} \PlayerReach{1}{\BlueprintStrategy{1}}(\History)\PlayerReach{\ChancePlayer}{}(\History) \Utility{1}^{\Strategy{1}^G, \Strategy{2}} (\History) \\
        =&  \smashoperator{\sum_{\Infoset{2} \in \Infosets{2}(\PublicState)}} \PlayerReach{2}{}(\Infoset{2}) \min_{\Strategy{2}}\sum_{\History \in \Histories(\Infoset{2})} \PlayerReach{1}{\BlueprintStrategy{1}}(\History)\PlayerReach{\ChancePlayer}{}(\History) \Utility{1}^{\Strategy{1}^G, \Strategy{2}} (\History)\\
        =& \sum_{\Infoset{2} \in \Infosets{2}(\PublicState)} \PlayerReach{2}{}(\Infoset{2}) \min_{\Strategy{2}} \CounterfactualValue{1}{\BlueprintStrategy{1} \gets \Strategy{1}^G, \Strategy{2}} (\Infoset{2})
    \end{align}

    The inequality holds because in information sets from $\Infosets{1}^1$, the strategy is the same between the two equilibria, so the value will not change. In $\Infosets{1}^{2}$ the value can differ. Still, if we assume GGSE was computed with a prior corresponding to $\PlayerReach{2}{}$, then the value cannot get smaller, as the counterfactual value is maximal.

\end{proof}

\nonexistence*
\begin{proof}
Consider the modification of the game Rock-Paper-Scissors, where players move sequentially. Player 2 chooses rock, paper, or scissors first, and then Player 1, unbeknownst to Player 2, chooses between rock, paper, and scissors. We show an extensive-form representation of this game in \cref{fig:non_existence_example}. The equilibrium of this game is playing each action uniformly. Assume a blueprint $\BlueprintStrategy{1}(\Infoset{1}, R) = 1$.

Since Player 2 observed the action, it can distinguish all three possible histories in $\Infoset{1}$. The counterfactual values in each history are equal to the reward given by the blueprint.

The resolving and max-margin gadget games are shown in \cref{fig:non_existence_resolving,fig:non_existence_maxmargin}. Both the resolving and max-margin gadget games have a single Nash equilibrium for Player 1, which is the same as the blueprint, e.g., $\Nash{1}(\Infoset{1}, R) = 1$. However, this strategy is fully exploitable.
\begin{figure}
    \centering
    \begin{tikzpicture}[
    level 1/.style={sibling distance=2.5cm, level distance=1cm},
    level 2/.style={sibling distance=1cm, level distance=1cm},
    p2node/.style={ 
      regular polygon,
      regular polygon sides=3,
      rotate=180, 
      draw,
      fill=blue!40,
      inner sep=1pt,
      minimum size=15pt
    },
    p1node/.style={
      regular polygon,
      regular polygon sides=3, 
      draw,
      fill=red!40,
      inner sep=1pt,
      minimum size=15pt
    },
    payoff/.style={
      font=\normalsize
    }
  ]

  \node [p2node] (root) {}
    child {
      node [p1node] (p1_rock) {} 
      child { node [payoff] {0} edge from parent node [left] {R} }
      child { node [payoff] {1}  edge from parent node [right] {P} } 
      child { node [payoff] {-1} edge from parent node [right] {S} }
      edge from parent node [above left] {R} 
    }
    child {
      node [p1node] (p1_paper) {} 
      child { node [payoff] {-1}  edge from parent node [left] {R} }
      child { node [payoff] {0}  edge from parent node [right] {P} } 
      child { node [payoff] {1} edge from parent node [right] {S} }
      edge from parent node [left] {P} 
    }
    child {
      node [p1node] (p1_scissors) {} 
      child { node [payoff] {1}  edge from parent node [left] {R} }
      child { node [payoff] {-1}  edge from parent node [right] {P} } 
      child { node [payoff] {0} edge from parent node [right] {S} }
      edge from parent node [above right] {S} 
    };

  \draw [red, dashed, thick] (p1_rock) -- (p1_paper)
        node [midway, above, yshift=1mm, fill=white, inner sep=1pt, font=\small] {$\Infoset{1}$};
  \draw [red, dashed, thick] (p1_paper) -- (p1_scissors)
        node [midway, above, yshift=1mm, fill=white, inner sep=1pt, font=\small] {$\Infoset{1}$};

\end{tikzpicture}
    \caption{An example of a game where neither gadget game can reconstruct the optimal continuation strategy. It is a known game, Rock-Paper-Scissors, in an extensive-form game where Player 2 selects rock, paper, or scissors first, then Player 1 selects as second without observing the choice of Player 2}
    \label{fig:non_existence_example}
\end{figure}
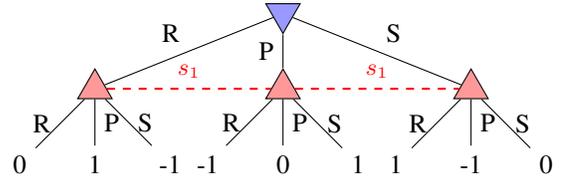

\begin{figure}
    \centering
    \begin{tikzpicture}[
    level 1/.style={sibling distance=2.5cm, level distance=1cm},
    level 2/.style={sibling distance=1cm, level distance=1cm},
    level 3/.style={sibling distance=1cm, level distance=1cm},
    p2node/.style={ 
      regular polygon,
      regular polygon sides=3,
      rotate=180, 
      draw,
      fill=blue!40,
      inner sep=1pt,
      minimum size=15pt
    },
    p1node/.style={
      regular polygon,
      regular polygon sides=3, 
      draw,
      fill=red!40,
      inner sep=1pt,
      minimum size=15pt
    },
    chancenode/.style={circle, draw, fill=gray!30, minimum size=5mm, inner sep=1pt},
    terminal/.style={inner sep=2pt},
    payoff/.style={
      font=\normalsize
    },
    edge_label/.style={midway, fill=white, inner sep=1pt}
]

\node[chancenode] (S) {} 
    child {
        node[p2node] (P2-L) {}
        child[grow=-150] { 
            node[payoff] (T1) {0}
            edge from parent
            node[edge_label, above left] {T}
        }
        child[grow=-90] { 
            node[p1node] (P1-L) {}
            child {
                node[payoff] (T2) {0}
                edge from parent
                node[edge_label, left] {R}
            }
            child {
                node[payoff] (T3) {1}
                edge from parent
                node[edge_label, right] {P}
            }
            child {
                node[payoff] (T4) {-1}
                edge from parent
                node[edge_label, right] {S}
            }
            edge from parent
            node[edge_label, right] {P}
        }
        edge from parent
        node[edge_label, above left] {1.0}
    }
    child {
        node[p2node] (P2-R) {}
        child[grow=-150] { 
            node[payoff] (T5) {-1}
            edge from parent
            node[edge_label, above left] {T}
        }
        child[grow=-90] { 
            node[p1node] (P1-M) {}
            child {
                node[payoff] (T6) {-1}
                edge from parent
                node[edge_label, left] {R}
            }
            child {
                node[payoff] (T7) {0}
                edge from parent
                node[edge_label, right] {P}
            }
            child {
                node[payoff] (T8) {1}
                edge from parent
                node[edge_label, right] {S}
            }
            edge from parent
            node[edge_label, right] {P}
        }
        edge from parent
        node[edge_label, above right] {1.0}
    }
    child {
        node[p2node] (P2-R) {}
        child[grow=-150] { 
            node[payoff] (T9) {1}
            edge from parent
            node[edge_label, above left] {T}
        }
        child[grow=-90] { 
            node[p1node] (P1-R) {}
            child {
                node[payoff] (T10) {1}
                edge from parent
                node[edge_label, left] {R}
            }
            child {
                node[payoff] (T11) {-1}
                edge from parent
                node[edge_label, right] {P}
            }
            child {
                node[payoff] (T12) {0}
                edge from parent
                node[edge_label, right] {S}
            }
            edge from parent
            node[edge_label, right] {P}
        }
        edge from parent
        node[edge_label, above right] {1.0}
    };
  \draw [red, dashed, thick] (P1-R) -- (P1-M)
        node [midway, above left, yshift=2mm, fill=white, inner sep=1pt, font=\small] {$\Infoset{1}$};
  \draw [red, dashed, thick] (P1-M) -- (P1-L)
        node [midway, above left, yshift=2mm, fill=white, inner sep=1pt, font=\small] {$\Infoset{1}$};

\end{tikzpicture}
    \caption{Resolving gadget game of rock-paper-scissors shown in \cref{fig:non_existence_example}}
    \label{fig:non_existence_resolving}
\end{figure}
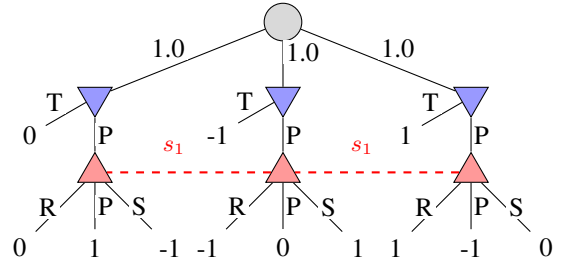

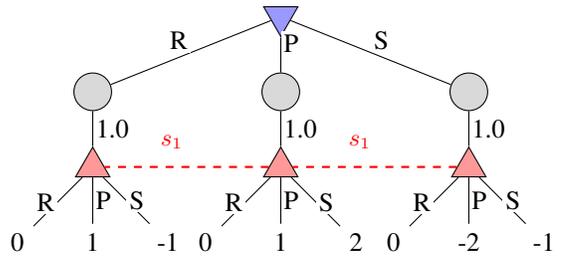
\begin{figure}
    \centering
    \begin{tikzpicture}[
    level 1/.style={sibling distance=2.5cm, level distance=1cm},
    level 2/.style={sibling distance=1cm, level distance=1cm},
    level 3/.style={sibling distance=1cm, level distance=1cm},
    p2node/.style={ 
      regular polygon,
      regular polygon sides=3,
      rotate=180, 
      draw,
      fill=blue!40,
      inner sep=1pt,
      minimum size=15pt
    },
    p1node/.style={
      regular polygon,
      regular polygon sides=3, 
      draw,
      fill=red!40,
      inner sep=1pt,
      minimum size=15pt
    },
    chancenode/.style={circle, draw, fill=gray!30, minimum size=5mm, inner sep=1pt},
    terminal/.style={inner sep=2pt},
    payoff/.style={
      font=\normalsize
    },
    edge_label/.style={midway, fill=white, inner sep=1pt}
]

\node[p2node] (S) {} 
    child {
        node[chancenode] (P2-L) {}
        child { 
            node[p1node] (P1-L) {}
            child {
                node[payoff] (T2) {0}
                edge from parent
                node[edge_label, left] {R}
            }
            child {
                node[payoff] (T3) {1}
                edge from parent
                node[edge_label, right] {P}
            }
            child {
                node[payoff] (T4) {-1}
                edge from parent
                node[edge_label, right] {S}
            }
            edge from parent
            node[edge_label, right] {1.0}
        }
        edge from parent
        node[edge_label, above left] {R}
    }
    child {
        node[chancenode] (P2-M) {}
        child{ 
            node[p1node] (P1-M) {}
            child {
                node[payoff] (T6) {0}
                edge from parent
                node[edge_label, left] {R}
            }
            child {
                node[payoff] (T7) {1}
                edge from parent
                node[edge_label, right] {P}
            }
            child {
                node[payoff] (T8) {2}
                edge from parent
                node[edge_label, right] {S}
            }
            edge from parent
            node[edge_label, right] {1.0}
        }
        edge from parent
        node[edge_label, above right] {P}
    }
    child {
        node[chancenode] (P2-R) {}
        child { 
            node[p1node] (P1-R) {}
            child {
                node[payoff] (T10) {0}
                edge from parent
                node[edge_label, left] {R}
            }
            child {
                node[payoff] (T11) {-2}
                edge from parent
                node[edge_label, right] {P}
            }
            child {
                node[payoff] (T12) {-1}
                edge from parent
                node[edge_label, right] {S}
            }
            edge from parent
            node[edge_label, right] {1.0}
        }
        edge from parent
        node[edge_label, above right] {S}
    };
  \draw [red, dashed, thick] (P1-R) -- (P1-M)
        node [midway, above left, yshift=2mm, fill=white, inner sep=1pt, font=\small] {$\Infoset{1}$};
  \draw [red, dashed, thick] (P1-M) -- (P1-L)
        node [midway, above left, yshift=2mm, fill=white, inner sep=1pt, font=\small] {$\Infoset{1}$};

\end{tikzpicture}
    \caption{Max-margin gadget game of rock-paper-scissors shown in \cref{fig:non_existence_example}}
    \label{fig:non_existence_maxmargin}
\end{figure}
\end{proof}

\maxmarginworse*
\begin{proof}

    Assume a game that starts with a chance node that uniformly decides one of 3 states. Player 1 cannot distinguish between the second and third states, while Player 2 cannot distinguish between the first and second states. Each of the states is followed by a matching pennies game with modified utilities. In the first state, Player 1 gets a reward of 2 if both players play heads; in the second state, Player 1 gets a reward of 3 if both play tails; in the third state, it is regular matching pennies. The sequence-form representation of this game is in \cref{fig:maxmargin_worse_example}.

    This game has a single Nash equilibrium, where Player 1 plays $\Nash{1}(\Infoset{1}^{1}, H) = 0$, $\Nash{1}(\Infoset{1}^2, H) = \frac{1}{4}$
    Assume following blueprint strategy of Player 1 $\BlueprintStrategy{1}(\Infoset{1}^{1}, H) = 1.0$, $\BlueprintStrategy{1}(\Infoset{1}^2, H) = \frac{1}{4}$.

    The Player 2 counterfactual best response values against this blueprint are $\CounterfactualValue{1}{\BlueprintStrategy{-2}}
    (\Infoset{2}^1) = \frac{1}{8}$, $\CounterfactualValue{1}{\BlueprintStrategy{-2}}(\Infoset{2}^2) = \frac{1}{4}$.

    The corresponding resolving and max-margin gadget games are in \cref{fig:maxmargin_worse_resolving,fig:maxmargin_worse_maxmargin}. In every Nash equilibrium of the max-margin gadget game, Player 2 plays with probability 1 in the information set $\Infoset{2}^2$. Since the right part of the game tree is just Matching pennies with shifted utilities, Player 1's part of Nash will always play $\Strategy{1}(\Infoset{1}^2, H) = \frac{1}{2}$. However, no equilibrium in the original game would play such a strategy. This means that there is no Nash equilibrium in the max-margin gadget game, which would be the Nash equilibrium in the original game.

    On the other hand, any strategy in the resolving gadget game, which ensures greater counterfactual values in the subgame than the blueprint ones, is a Nash equilibrium. In this case, Player 1's part of the Nash equilibrium in the original game satisfies these bounds, so it is also a Nash equilibrium of the resolving gadget game. This means that the resolving gadget game can reconstruct the Nash equilibrium of the original game, while max-margin cannot.
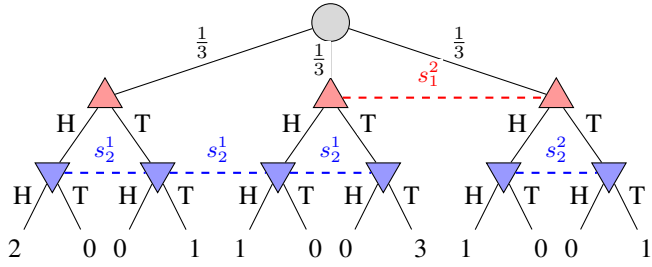
\begin{figure}
    \centering
    \begin{tikzpicture}[
    level 1/.style={sibling distance=3cm, level distance=1cm}, 
    level 2/.style={sibling distance=1.4cm, level distance=1cm},
    level 3/.style={sibling distance=1cm, level distance=1cm},
    p2node/.style={ 
      regular polygon,
      regular polygon sides=3,
      rotate=180, 
      draw,
      fill=blue!40,
      inner sep=1pt,
      minimum size=15pt
    },
    p1node/.style={
      regular polygon,
      regular polygon sides=3, 
      draw,
      fill=red!40,
      inner sep=1pt,
      minimum size=15pt
    },
    chancenode/.style={circle, draw, fill=gray!30, minimum size=5mm, inner sep=1pt},
    terminal/.style={inner sep=2pt},
    payoff/.style={
      font=\normalsize
    },
    edge_label/.style={midway, fill=white, inner sep=1pt}
]

\node[chancenode] (S) {} 
    child {
        node[p1node] (P1-L) {}
        child { 
            node[p2node] (P2-L1) {}
            child {
                node[payoff] (T1) {2}
                edge from parent
                node[edge_label,above left] {H}
            }
            child {
                node[payoff] (T2) {0}
                edge from parent
                node[edge_label,above right] {T}
            } 
            edge from parent
            node[edge_label,above left] {H}
        }
        child { 
            node[p2node] (P2-L2) {}
            child {
                node[payoff] (T3) {0}
                edge from parent
                node[edge_label,above left] {H}
            }
            child {
                node[payoff] (T4) {1}
                edge from parent
                node[edge_label,above right] {T}
            } 
            edge from parent
            node[edge_label,above right] {T}
        }
        edge from parent
        node[edge_label, above left] {$\frac{1}{3}$}
    }
    child {
        node[p1node] (P1-M) {}
        child { 
            node[p2node] (P2-M1) {}
            child {
                node[payoff] (T5) {1}
                edge from parent
                node[edge_label,above left] {H}
            }
            child {
                node[payoff] (T6) {0}
                edge from parent
                node[edge_label,above right] {T}
            } 
            edge from parent
            node[edge_label, above left] {H}
        }
        child { 
            node[p2node] (P2-M2) {}
            child {
                node[payoff] (T7) {0}
                edge from parent
                node[edge_label,above left] {H}
            }
            child {
                node[payoff] (T8) {3}
                edge from parent
                node[edge_label,above right] {T}
            } 
            edge from parent
            node[edge_label, above right] {T}
        }
        edge from parent
        node[edge_label, left] {$\frac{1}{3}$}
    }
    child {
        node[p1node] (P1-R) {}
        child { 
            node[p2node] (P2-R1) {}
            child {
                node[payoff] (T9) {1}
                edge from parent
                node[edge_label,above left] {H}
            }
            child {
                node[payoff] (T10) {0}
                edge from parent
                node[edge_label,above right] {T}
            } 
            edge from parent
            node[edge_label,above left] {H}
        }
        child { 
            node[p2node] (P2-R2) {}
            child {
                node[payoff] (T11) {0}
                edge from parent
                node[edge_label,above left] {H}
            }
            child {
                node[payoff] (T12) {1}
                edge from parent
                node[edge_label,above right] {T}
            } 
            edge from parent
            node[edge_label,above right] {T}
        }
        edge from parent
        node[edge_label, above right] {$\frac{1}{3}$}
    }; 
  \draw [red, dashed, thick] (P1-M) -- (P1-R)
        node [midway, above left, yshift=1mm, fill=white, inner sep=1pt, font=\small] {$\Infoset{1}^2$};
        
  \draw [blue, dashed, thick] (P2-L1) -- (P2-L2)
        node [midway, above, yshift=1mm, fill=white, inner sep=1pt, font=\small] {$\Infoset{2}^1$};
  \draw [blue, dashed, thick] (P2-L2) -- (P2-M1)
        node [midway, above, yshift=1mm, fill=white, inner sep=1pt, font=\small] {$\Infoset{2}^1$};

  \draw [blue, dashed, thick] (P2-M1) -- (P2-M2)
        node [midway, above, yshift=1mm, fill=white, inner sep=1pt, font=\small] {$\Infoset{2}^1$};
        
  \draw [blue, dashed, thick] (P2-R1) -- (P2-R2)
        node [midway, above, yshift=1mm, fill=white, inner sep=1pt, font=\small] {$\Infoset{2}^2$};
        
\end{tikzpicture}
    \caption{An example of a game, where neither Nash equilibrium of the max-margin gadget game outperforms some equilibrium in resolving gadget game}
    \label{fig:maxmargin_worse_example}
\end{figure}
\begin{figure}
    \centering
    \begin{tikzpicture}[
    level 1/.style={sibling distance=3cm, level distance=1cm},
    level 2/.style={sibling distance=1cm, level distance=1cm}, 
    level 3/.style={sibling distance=1.4cm, level distance=1cm},
    level 4/.style={sibling distance=1cm, level distance=1cm},
    p2node/.style={ 
      regular polygon,
      regular polygon sides=3,
      rotate=180, 
      draw,
      fill=blue!40,
      inner sep=1pt,
      minimum size=15pt
    },
    p1node/.style={
      regular polygon,
      regular polygon sides=3, 
      draw,
      fill=red!40,
      inner sep=1pt,
      minimum size=15pt
    },
    chancenode/.style={circle, draw, fill=gray!30, minimum size=5mm, inner sep=1pt},
    terminal/.style={inner sep=2pt},
    payoff/.style={
      font=\normalsize
    },
    edge_label/.style={midway, fill=white, inner sep=1pt}
]

\node[chancenode] (S) {} 
    child {
        node[p2node] (P2-R-L) {}
        child[grow=-150]  { 
            node[payoff] (TER1) {$\frac{1}{8}$}
            edge from parent
            node[edge_label, above left] {T}
        }
        child[grow=-90]  {
            node[p1node] (P1-L) {}
            child { 
                node[p2node] (P2-L1) {}
                child {
                    node[payoff] (T1) {2}
                    edge from parent
                    node[edge_label,above left] {H}
                }
                child {
                    node[payoff] (T2) {0}
                    edge from parent
                    node[edge_label,above right] {T}
                } 
                edge from parent
                node[edge_label,above left] {H}
            }
            child { 
                node[p2node] (P2-L2) {}
                child {
                    node[payoff] (T3) {0}
                    edge from parent
                    node[edge_label,above left] {H}
                }
                child {
                    node[payoff] (T4) {1}
                    edge from parent
                    node[edge_label,above right] {T}
                } 
                edge from parent
                node[edge_label,above right] {T}
            }
            edge from parent
            node[edge_label, left] {C}
        }
        edge from parent
        node[edge_label, above left] {$\frac{1}{3}$}
    }
    child {
        node[p2node] (P2-R-M) {}
        child[grow=-150]  { 
            node[payoff] (TER2) {$\frac{1}{8}$}
            edge from parent
            node[edge_label, above left] {T}
        }
        child[grow=-90] {
            node[p1node] (P1-M) {}
            child { 
                node[p2node] (P2-M1) {}
                child {
                    node[payoff] (T5) {1}
                    edge from parent
                    node[edge_label,above left] {H}
                }
                child {
                    node[payoff] (T6) {0}
                    edge from parent
                    node[edge_label,above right] {T}
                } 
                edge from parent
                node[edge_label, above left] {H}
            }
            child { 
                node[p2node] (P2-M2) {}
                child {
                    node[payoff] (T7) {0}
                    edge from parent
                    node[edge_label,above left] {H}
                }
                child {
                    node[payoff] (T8) {3}
                    edge from parent
                    node[edge_label,above right] {T}
                } 
                edge from parent
                node[edge_label, above right] {T}
            }
            edge from parent
            node[edge_label, left] {C}
        }
        edge from parent
        node[edge_label, left] {$\frac{1}{3}$}
    }
    child {
        node[p2node] (P2-R-R) {}
        child[grow=-150]  { 
            node[payoff] (TER3) {$\frac{1}{4}$}
            edge from parent
            node[edge_label, above left,xshift=-1mm] {T}
        }
        child[grow=-90] {
            node[p1node] (P1-R) {}
            child { 
                node[p2node] (P2-R1) {}
                child {
                    node[payoff] (T9) {1}
                    edge from parent
                    node[edge_label,above left] {H}
                }
                child {
                    node[payoff] (T10) {0}
                    edge from parent
                    node[edge_label,above right] {T}
                } 
                edge from parent
                node[edge_label,above left] {H}
            }
            child { 
                node[p2node] (P2-R2) {}
                child {
                    node[payoff] (T11) {0}
                    edge from parent
                    node[edge_label,above left] {H}
                }
                child {
                    node[payoff] (T12) {1}
                    edge from parent
                    node[edge_label,above right] {T}
                } 
                edge from parent
                node[edge_label,above right] {T}
            }
            edge from parent
            node[edge_label, left] {C}
        }
        edge from parent
        node[edge_label, above right] {$\frac{1}{3}$}
    }; 
  \draw [red, dashed, thick] (P1-M) -- (P1-R)
        node [midway, above left, yshift=1mm, fill=white, inner sep=1pt, font=\small] {$\Infoset{1}^2$};
        
  \draw [blue, dashed, thick] (P2-L1) -- (P2-L2)
        node [midway, above, yshift=1mm, fill=white, inner sep=1pt, font=\small] {$\Infoset{2}^1$};
  \draw [blue, dashed, thick] (P2-L2) -- (P2-M1)
        node [midway, above, yshift=1mm, fill=white, inner sep=1pt, font=\small] {$\Infoset{2}^1$};

  \draw [blue, dashed, thick] (P2-M1) -- (P2-M2)
        node [midway, above, yshift=1mm, fill=white, inner sep=1pt, font=\small] {$\Infoset{2}^1$};
        
  \draw [blue, dashed, thick] (P2-R1) -- (P2-R2)
        node [midway, above, yshift=1mm, fill=white, inner sep=1pt, font=\small] {$\Infoset{2}^2$};
        
  \draw [blue, dashed, thick] (P2-R-L) -- (P2-R-M)
        node [midway, above right, yshift=1mm, fill=white, inner sep=0.5pt, font=\small] {$\Infoset{2}^{G, 1}$};
        
\end{tikzpicture}
    \caption{Resolving gadget game of subgame from \cref{fig:maxmargin_worse_example}}
    \label{fig:maxmargin_worse_resolving}
\end{figure}
\begin{figure}
    \centering
    \begin{tikzpicture}[
    level 1/.style={sibling distance=4.3cm, level distance=1cm},
    level 2/.style={sibling distance=3cm, level distance=1cm}, 
    level 3/.style={sibling distance=1.4cm, level distance=1cm},
    level 4/.style={sibling distance=0.8cm, level distance=1cm},
    p2node/.style={ 
      regular polygon,
      regular polygon sides=3,
      rotate=180, 
      draw,
      fill=blue!40,
      inner sep=1pt,
      minimum size=15pt
    },
    p1node/.style={
      regular polygon,
      regular polygon sides=3, 
      draw,
      fill=red!40,
      inner sep=1pt,
      minimum size=15pt
    },
    chancenode/.style={circle, draw, fill=gray!30, minimum size=5mm, inner sep=1pt},
    terminal/.style={inner sep=2pt},
    payoff/.style={
      font=\normalsize
    },
    edge_label/.style={midway, fill=white, inner sep=1pt}
]

\node[p2node] (S) {} 
    child {
        node[chancenode] (P2-R-L) {}
        child  {
            node[p1node] (P1-L) {}
            child { 
                node[p2node] (P2-L1) {}
                child {
                    node[payoff] (T1) {$\frac{15}{8}$}
                    edge from parent
                    node[edge_label,above left] {H}
                }
                child {
                    node[payoff] (T2) {$\frac{-1}{8}$}
                    edge from parent
                    node[edge_label,above right] {T}
                } 
                edge from parent
                node[edge_label,above left] {H}
            }
            child { 
                node[p2node] (P2-L2) {}
                child {
                    node[payoff] (T3) {$\frac{-1}{8}$}
                    edge from parent
                    node[edge_label,above left] {H}
                }
                child {
                    node[payoff] (T4) {$\frac{7}{8}$}
                    edge from parent
                    node[edge_label,above right] {T}
                } 
                edge from parent
                node[edge_label,above right] {T}
            }
            edge from parent
            node[edge_label,  above left] {$\frac{1}{3}$}
        }
        child {
            node[p1node] (P1-M) {}
            child { 
                node[p2node] (P2-M1) {}
                child {
                    node[payoff] (T5) {$\frac{7}{8}$}
                    edge from parent
                    node[edge_label,above left] {H}
                }
                child {
                    node[payoff] (T6) {$\frac{-1}{8}$}
                    edge from parent
                    node[edge_label,above right] {T}
                } 
                edge from parent
                node[edge_label, above left] {H}
            }
            child { 
                node[p2node] (P2-M2) {}
                child {
                    node[payoff] (T7) {$\frac{-1}{8}$}
                    edge from parent
                    node[edge_label,above left] {H}
                }
                child {
                    node[payoff] (T8) {$\frac{23}{8}$}
                    edge from parent
                    node[edge_label,above right] {T}
                } 
                edge from parent
                node[edge_label, above right] {T}
            }
            edge from parent
            node[edge_label, above right] {$\frac{1}{3}$}
        }
        edge from parent
        node[edge_label, above left] {$\Infoset{2}^1$}
    }
    child {
        node[chancenode] (P2-R-R) {}
        child {
            node[p1node] (P1-R) {}
            child { 
                node[p2node] (P2-R1) {}
                child {
                    node[payoff] (T9) {$\frac{3}{4}$}
                    edge from parent
                    node[edge_label,above left] {H}
                }
                child {
                    node[payoff] (T10) {$\frac{-1}{4}$}
                    edge from parent
                    node[edge_label,above right] {T}
                } 
                edge from parent
                node[edge_label,above left] {H}
            }
            child { 
                node[p2node] (P2-R2) {}
                child {
                    node[payoff] (T11) {$\frac{-1}{4}$}
                    edge from parent
                    node[edge_label,above left] {H}
                }
                child {
                    node[payoff] (T12) {$\frac{3}{4}$}
                    edge from parent
                    node[edge_label,above right] {T}
                } 
                edge from parent
                node[edge_label,above right] {T}
            }
            edge from parent
            node[edge_label, left] {$\frac{1}{3}$}
        }
        edge from parent
        node[edge_label, above right] {$\Infoset{2}^2$}
    }; 
  \draw [red, dashed, thick] (P1-M) -- (P1-R)
        node [midway, above left, yshift=1mm, fill=white, inner sep=1pt, font=\small] {$\Infoset{1}^2$};
        
  \draw [blue, dashed, thick] (P2-L1) -- (P2-L2)
        node [midway, above, yshift=1mm, fill=white, inner sep=1pt, font=\small] {$\Infoset{2}^1$};
  \draw [blue, dashed, thick] (P2-L2) -- (P2-M1)
        node [midway, above, yshift=1mm, fill=white, inner sep=1pt, font=\small] {$\Infoset{2}^1$};

  \draw [blue, dashed, thick] (P2-M1) -- (P2-M2)
        node [midway, above, yshift=1mm, fill=white, inner sep=1pt, font=\small] {$\Infoset{2}^1$};
        
  \draw [blue, dashed, thick] (P2-R1) -- (P2-R2)
        node [midway, above, yshift=1mm, fill=white, inner sep=1pt, font=\small] {$\Infoset{2}^2$};
         
\end{tikzpicture}
    \caption{Max-margin gadget game of subgame from \cref{fig:maxmargin_worse_example}}
    \label{fig:maxmargin_worse_maxmargin}
\end{figure}

\end{proof}
\section{Additional Experimental Details}
\label{app:experiments}
We provide some additional experimental details and results that were omitted from the main body of the paper.
\subsection{Statistical Test of Exploitability}
We have computed a t-test for all pairs of techniques used in \cref{sec:experiments} to support our claims. These results are in \cref{fig:ttest_goof5_d1,fig:ttest_goof5_d2,fig:ttest_leduc,fig:ttest_liars_d1}. High values indicate that the row technique is more likely to outperform the column technique.

We also computed the difference in exploitability between the unsafe resolving and resolving gadget game when using the blueprint as prior. The results in \cref{fig:diff_expl} confirm that using a resolving gadget with a blueprint prior always performs at least as well as unsafe, and is guaranteed to be safe.
\begin{figure}[]
    \centering  
    \includegraphics[width=\linewidth]{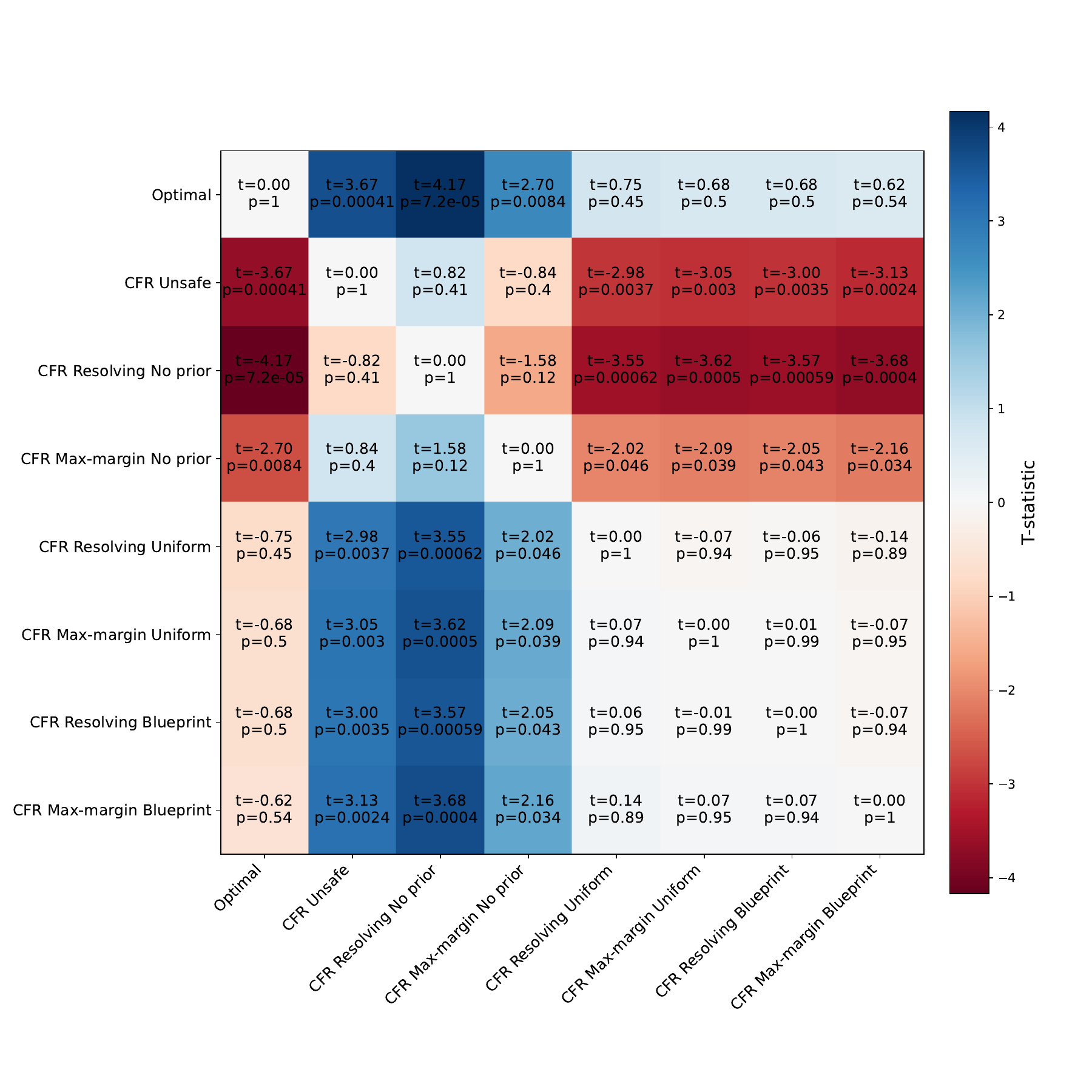}  
    \caption{The results of t-test for each combination of subgame solving techniques in Goofspiel 5 used in depth 1.} 
    \label{fig:ttest_goof5_d1} 
\end{figure}
\begin{figure}[]
    \centering  
    \includegraphics[width=\linewidth]{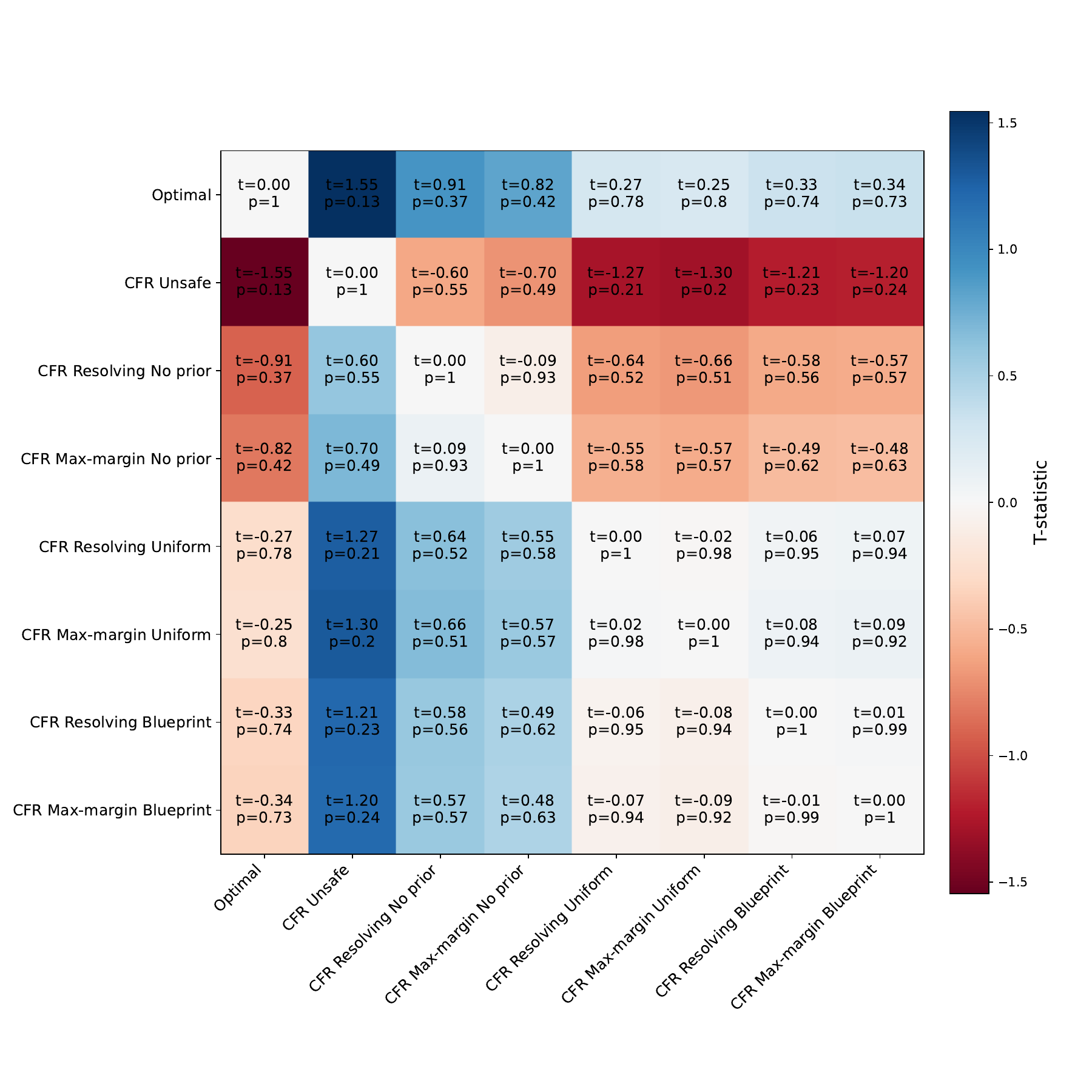}  
    \caption{The results of t-test for each combination of subgame solving techniques in Goofspiel 5 used in depth 2.} 
    \label{fig:ttest_goof5_d2} 
\end{figure}
\begin{figure}[]
    \centering  
    \includegraphics[width=\linewidth]{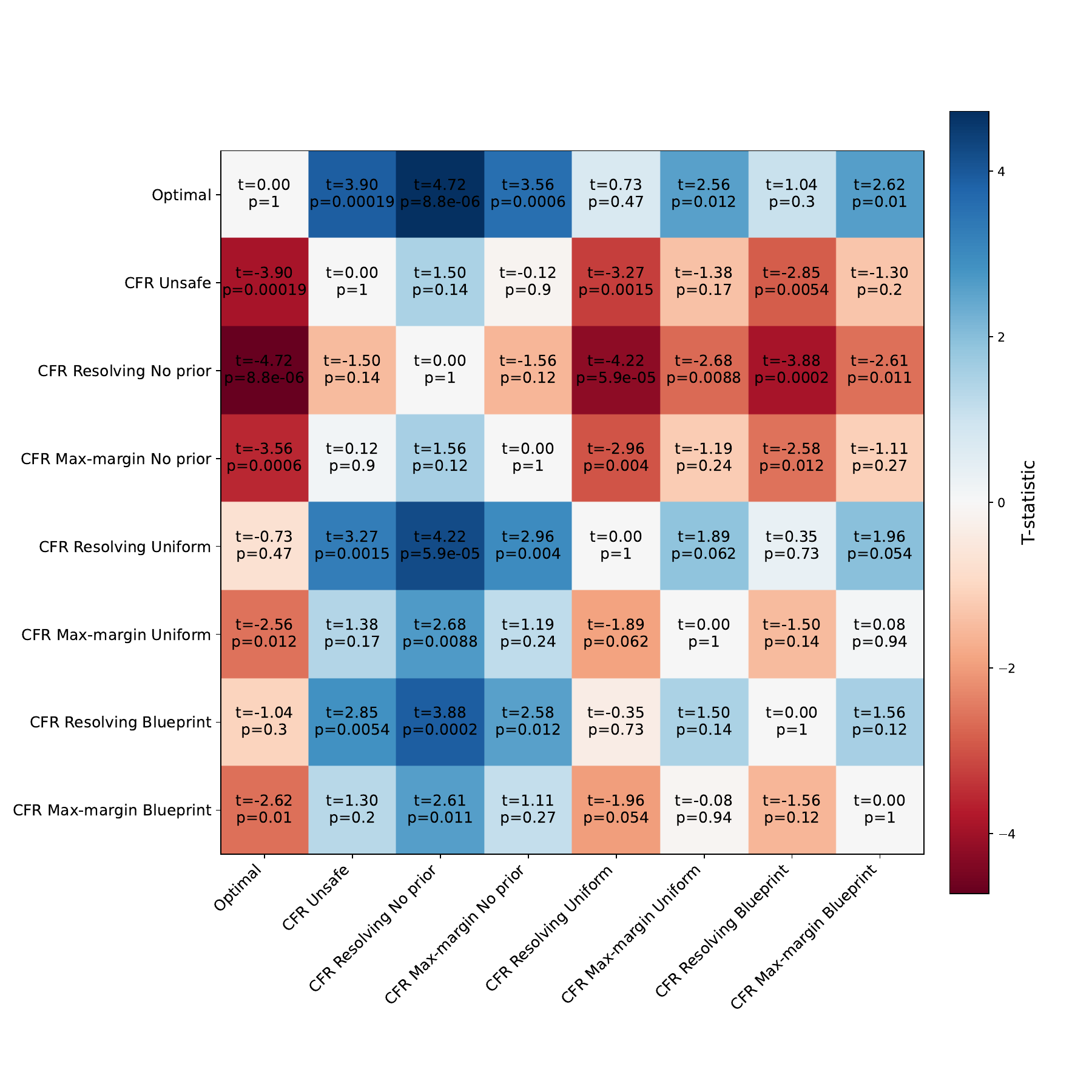}  
    \caption{The results of t-test for each combination of subgame solving techniques in Leduc after flop.} 
    \label{fig:ttest_leduc} 
\end{figure}
\begin{figure}[]
    \centering  
    \includegraphics[width=\linewidth]{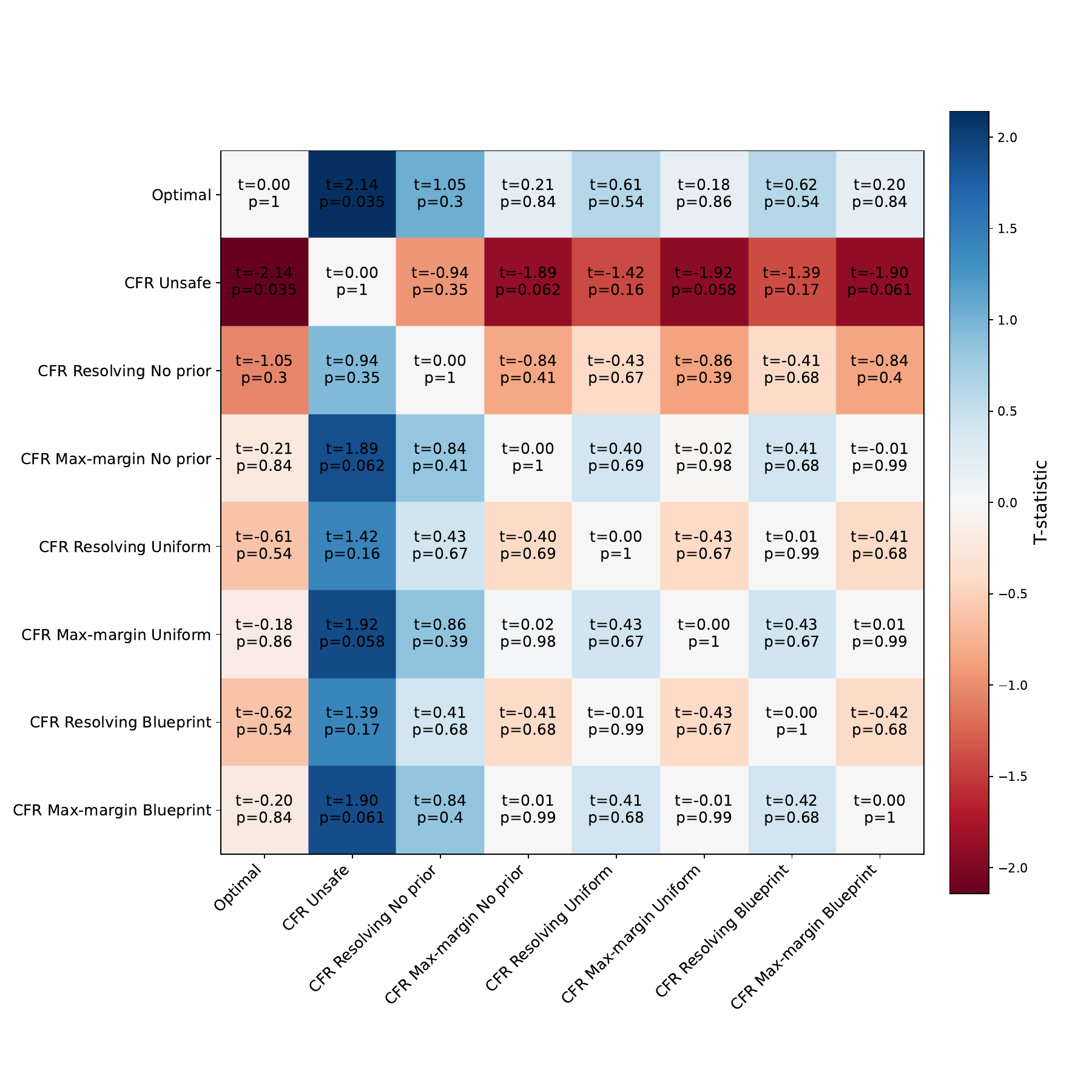}  
    \caption{The results of t-test for each combination of subgame solving techniques in Liar's dice 1,4 used in depth 1.} 
    \label{fig:ttest_liars_d1} 
\end{figure}
\begin{figure}[]
    \centering
        \begin{subfigure}[b]{0.24\textwidth}
            \centering
            \includegraphics[width=\textwidth]{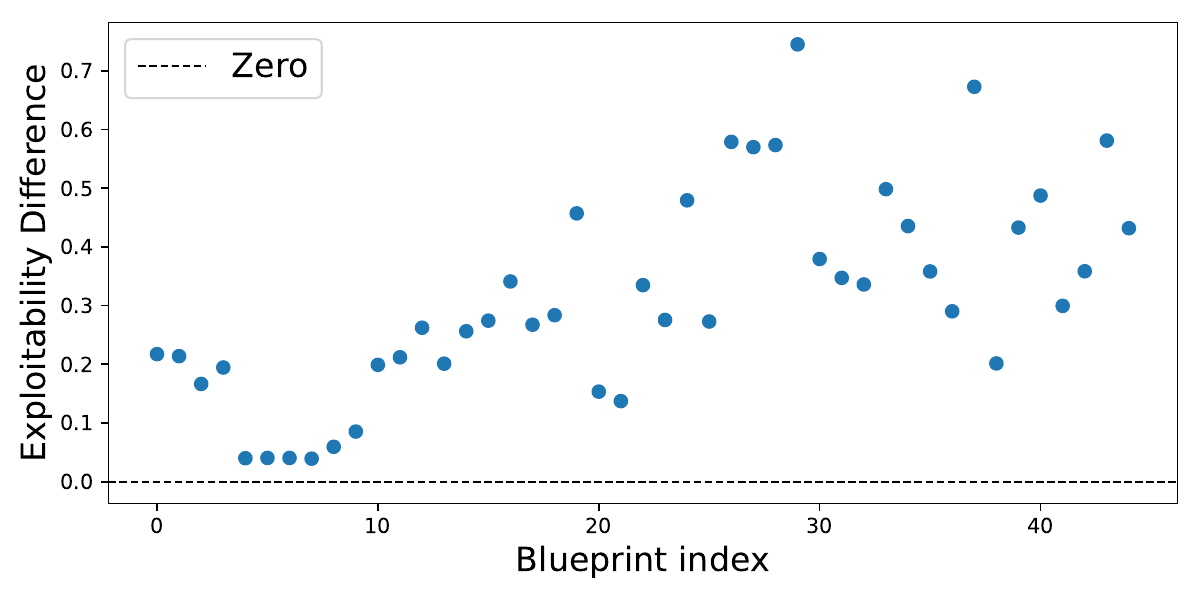} 
            \caption{Goofspiel 5, depth 1}
            \label{fig:diff_expl_goof5_d1}
        \end{subfigure}%
        \hfill
        \begin{subfigure}[b]{0.24\textwidth}
            \centering
            \includegraphics[width=\textwidth]{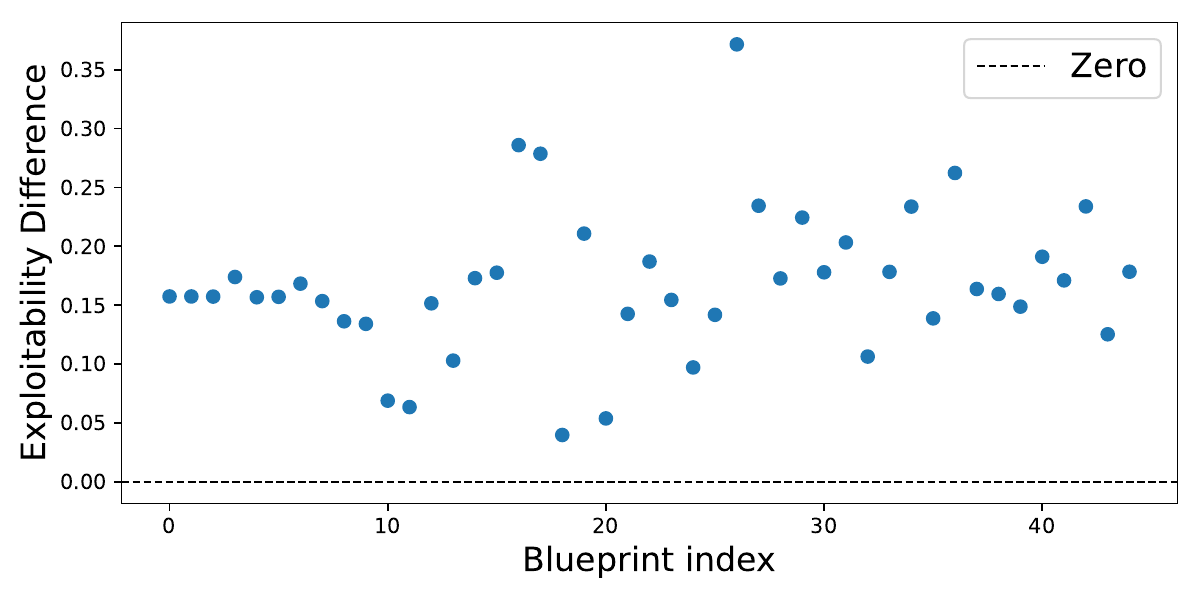} 
            \caption{Goofspiel 5, depth 2}
            \label{fig:diff_expl_goof5_d2}
        \end{subfigure}
        \begin{subfigure}[b]{0.24\textwidth}
            \centering
            \includegraphics[width=\textwidth]{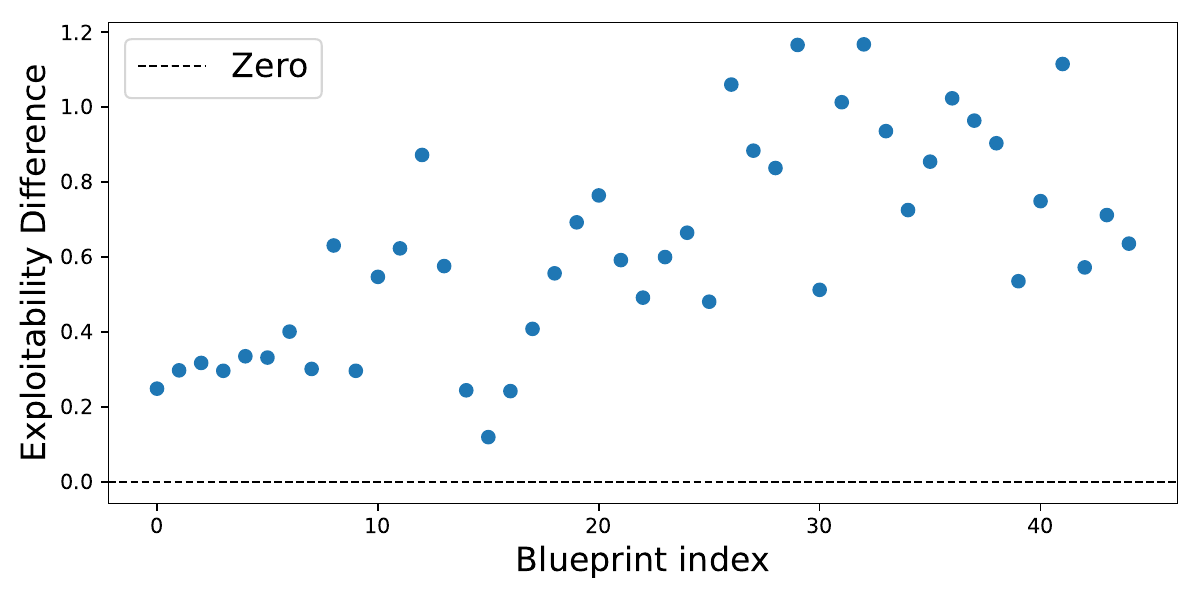} 
            \caption{Leduc hold'em after flop}
            \label{fig:diff_expl_leduc}
        \end{subfigure}%
        \hfill
        \begin{subfigure}[b]{0.24\textwidth}
            \centering
            \includegraphics[width=\textwidth]{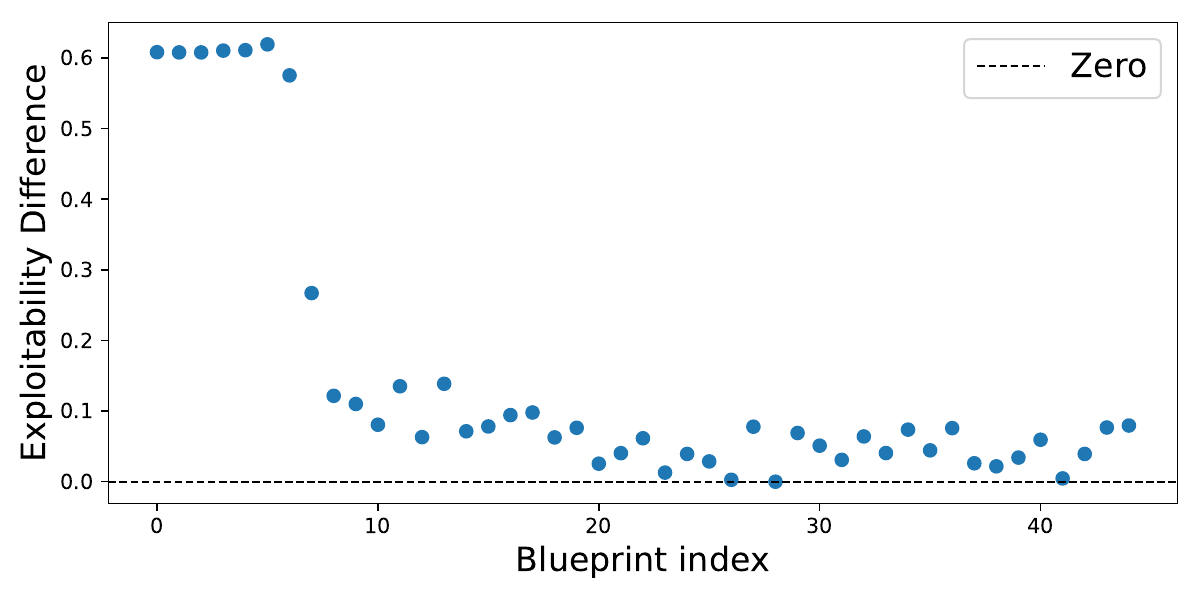} 
            \caption{Liar's dice, depth 1}
            \label{fig:diff_expl_liars_d1}
        \end{subfigure}
    \caption{Difference between exploitability of the strategy computed by unsafe solving and resolving gadget game with blueprint prior. The points are ordered based on the exploitability of the blueprint from left to right.}
    \label{fig:diff_expl}
\end{figure}
\subsection{Sequence-Form Linear Program}
We conducted the same experiment as in \cref{sec:experiments}, solving the subgames using different subgame-solving techniques. However, we have used SQF as the solver rather than CFR. The results in \cref{fig:resolved_exploitability_lp} show that the results of using SQF as a solver are similar to those of CFR.
\begin{figure}[]
    \centering
        \begin{subfigure}[b]{0.24\textwidth}
            \centering
            \includegraphics[width=\textwidth]{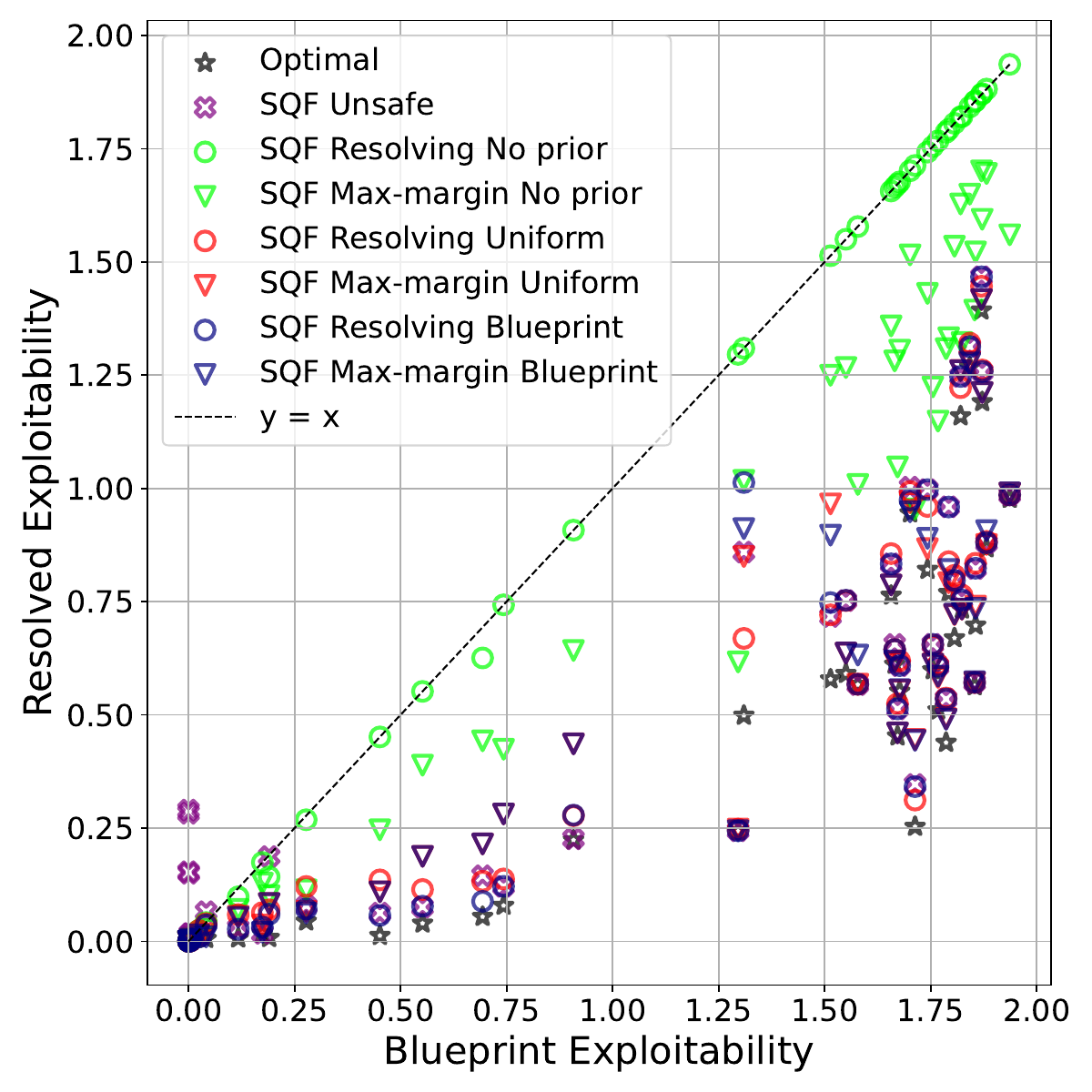} 
            \caption{Goofspiel 5, depth 1}
            \label{fig:expl_lp_goof5_d1}
        \end{subfigure}%
        \hfill
        \begin{subfigure}[b]{0.24\textwidth}
            \centering
            \includegraphics[width=\textwidth]{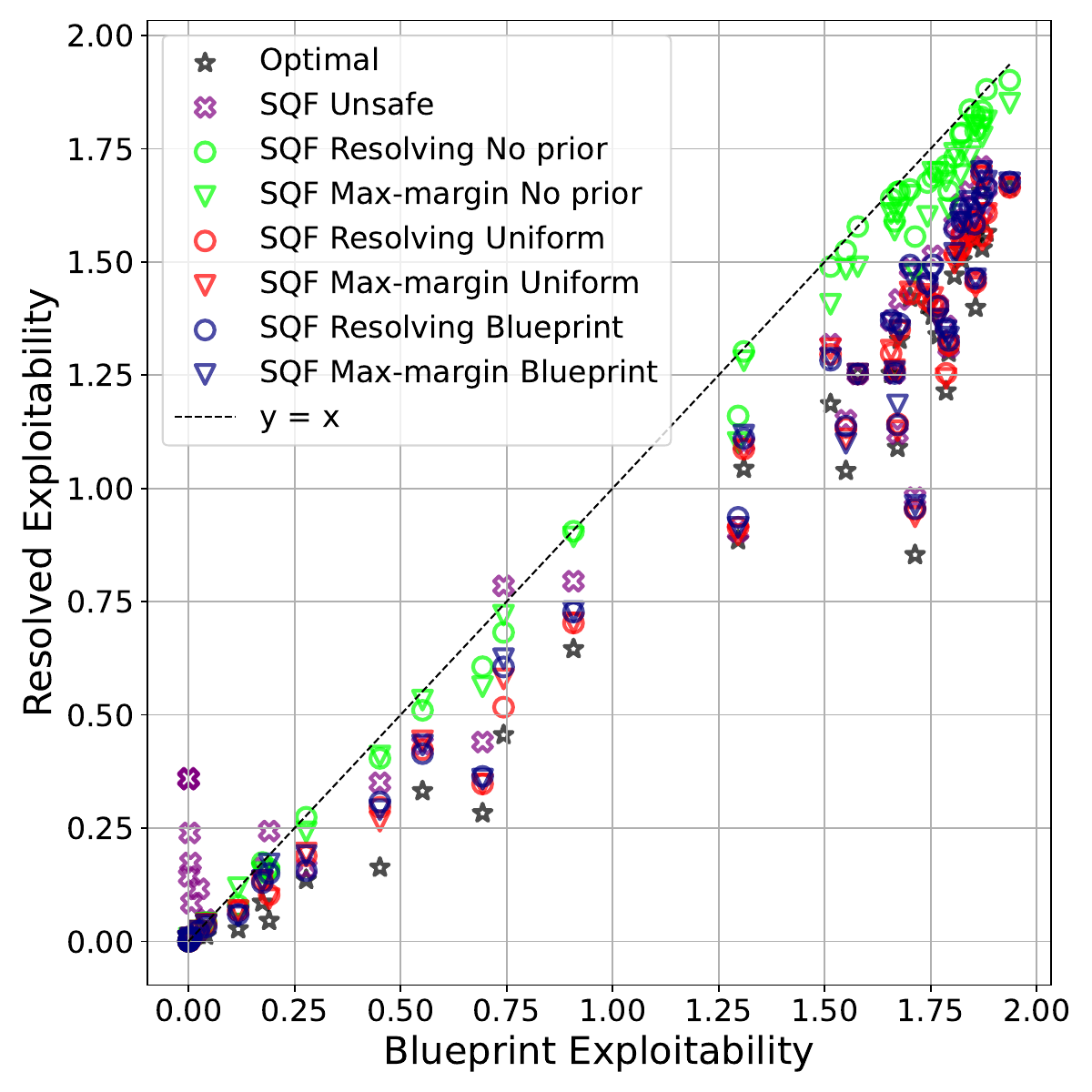} 
            \caption{Goofspiel 5, depth 2}
            \label{fig:expl_lp_goof5_d2}
        \end{subfigure}
        \begin{subfigure}[b]{0.24\textwidth}
            \centering
            \includegraphics[width=\textwidth]{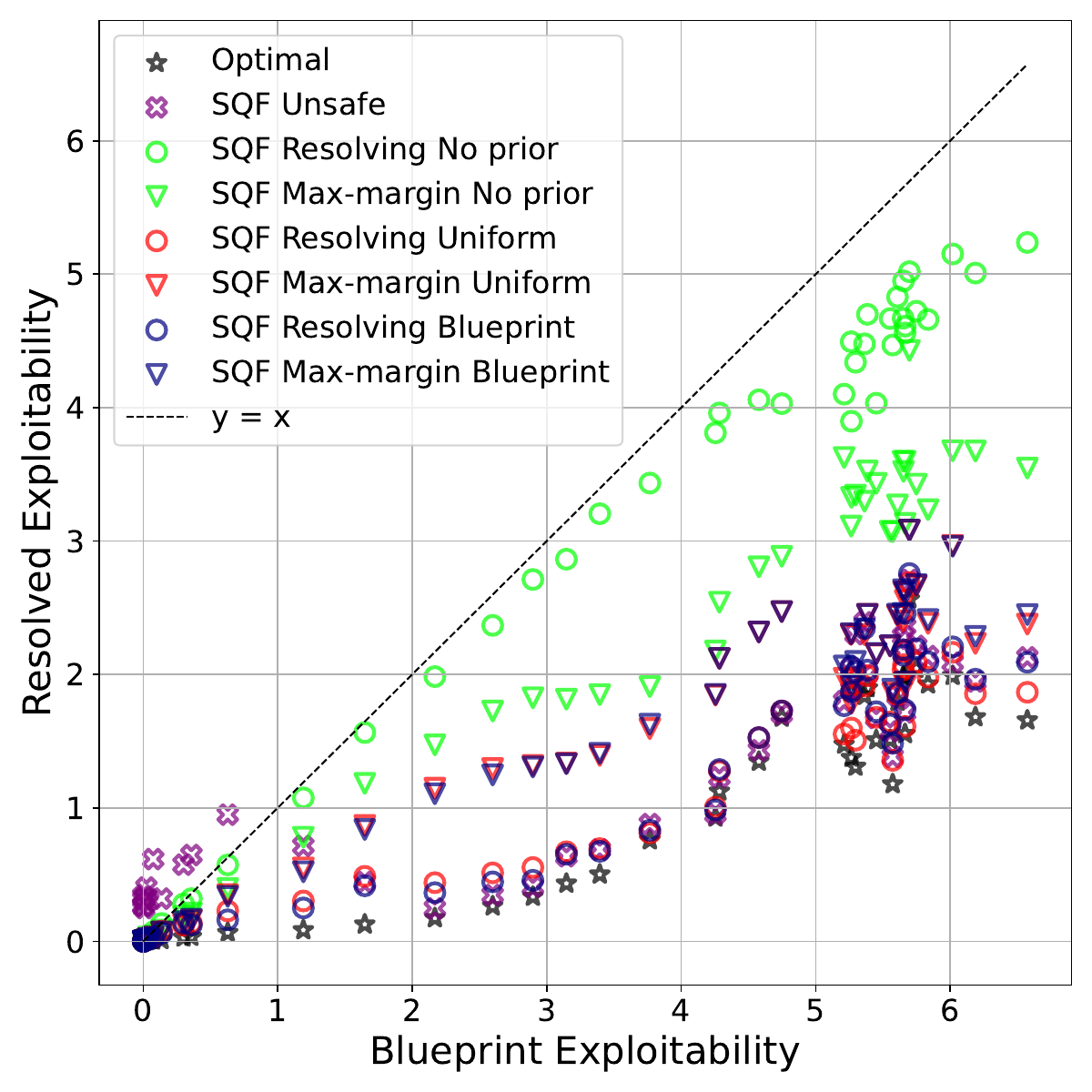} 
            \caption{Leduc hold'em after flop}
            \label{fig:expl_lp_leduc}
        \end{subfigure}%
        \hfill
        \begin{subfigure}[b]{0.24\textwidth}
            \centering
            \includegraphics[width=\textwidth]{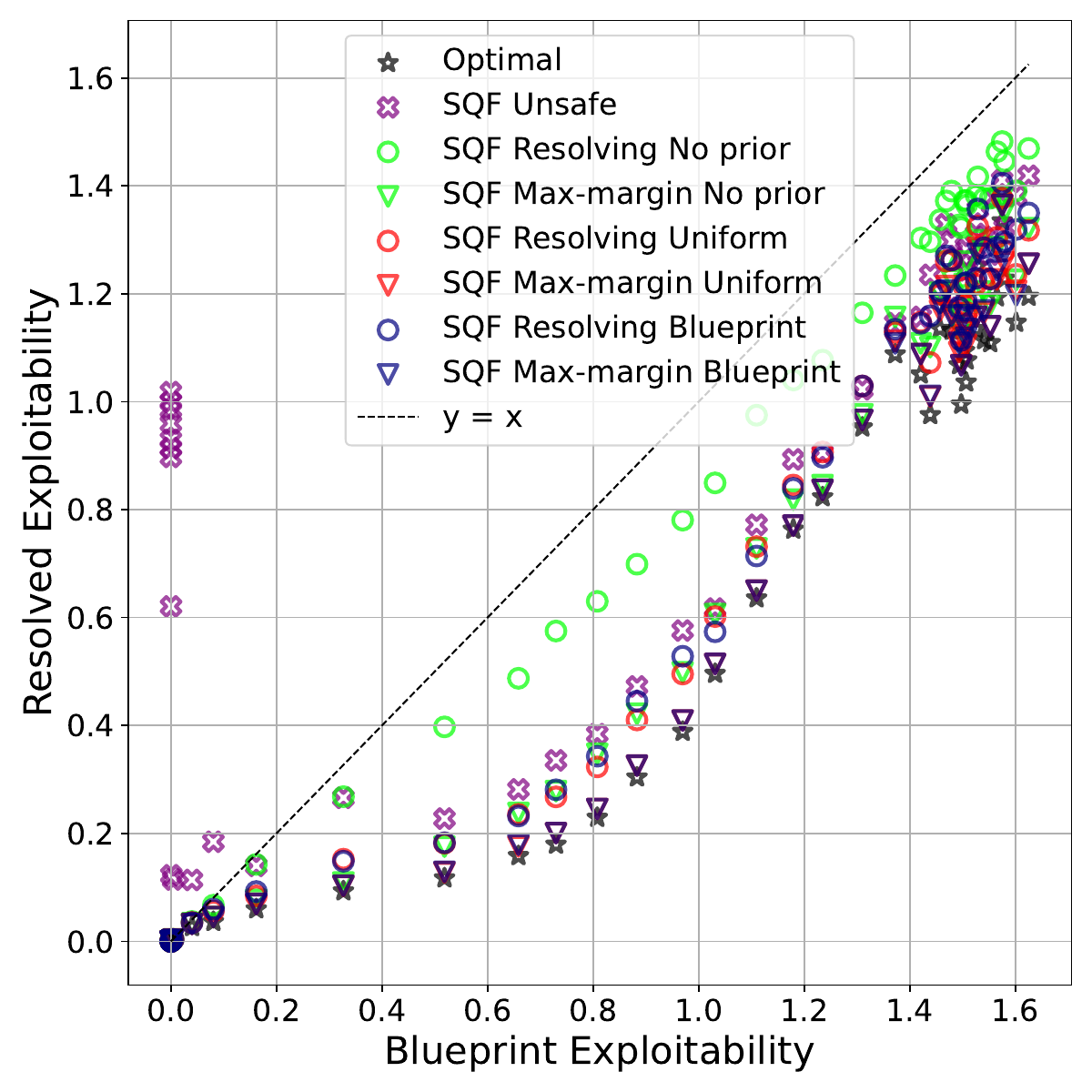} 
            \caption{Liar's dice, depth 1}
            \label{fig:expl_lp_liars_d1}
        \end{subfigure}
    \caption{Exploitability of the resolved strategy from each subgame in given depth based on blueprint's exploitability when using SQF as solver.}
    \label{fig:resolved_exploitability_lp}
\end{figure}
\subsection{Nested Subgame Solving}
In \cref{fig:heads_goof,fig:heads_liars} we provide the win-rates for each pair of subgame solving techniques, which were used in \cref{sec:nested}. The row technique corresponds to Player 1, while the column technique corresponds to Player 2. For the \cref{fig:heads}, we have averaged the results between Player 1 and Player 2 and used the minimum from each row in \cref{fig:heads_goof,fig:heads_liars}. In Liar's Dice, there is a small advantage to Player 2, so the value of the game is negative. However, we were not able to measure the exact value.

\begin{figure}
    \centering
    \includegraphics[width=\linewidth]{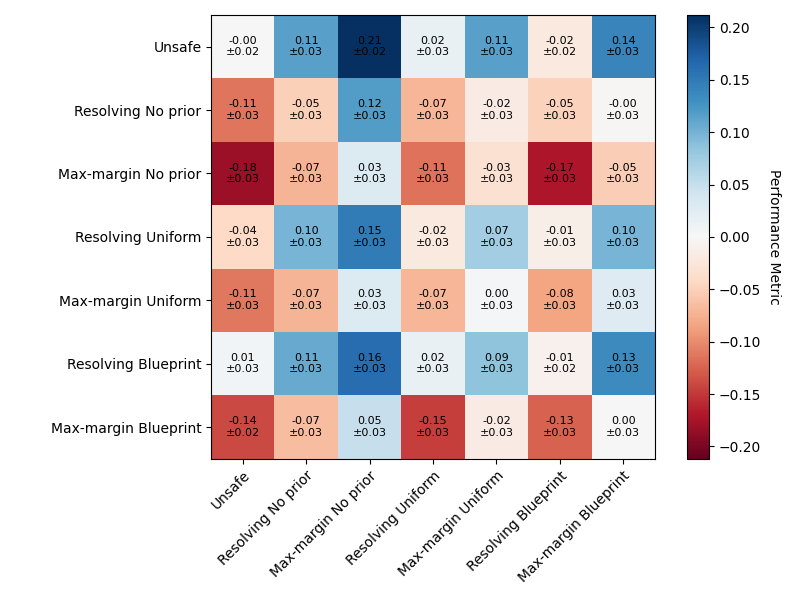}
    \caption{Head-to-head performance of different subgame solving techniques in Goofspiel 7. The win-rates are shown from the perspective of the row technique.}
    \label{fig:heads_goof}
\end{figure}
\begin{figure}
    \centering
    \includegraphics[width=\linewidth]{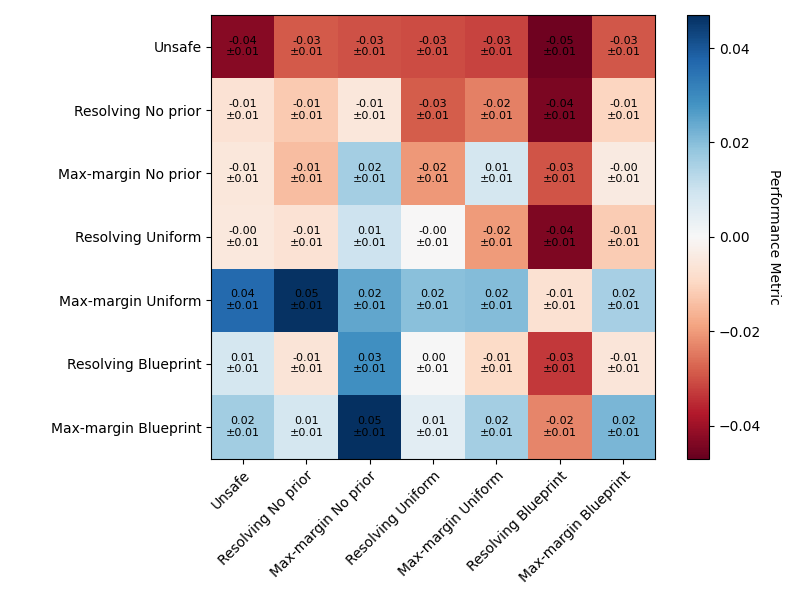}
    \caption{Head-to-head performance of different subgame solving techniques in Liar's Dice 2.4. The win-rates are shown from the perspective of the row technique.}
    \label{fig:heads_liars}
\end{figure}

\subsection{Hyperparameters and Other Details}
In \cref{tab:hyperparameters} we provide the hyperparameters used in the experiments. All experiments were run on a CPU cluster with AMD EPYC 7543, and each required at most 64 GB of RAM. For training the value functions in \cref{sec:nested}, we used a single Tesla A100 40GB.
\begin{table}[htbp]
    \centering
    \begin{tabular}{lll}
        \toprule \textbf{Experiment}& \textbf{Parameter} & \textbf{Value} \\
        \midrule
        \multirow{2}*{\parbox{2cm}{\textbf{Exploitability}}} 
            &CFR Iterations  & 5000 \\
            &Prior $\PriorEpsilon$ & $10^{-3}$\\
        \midrule
        \multirow{4}*{\parbox{2cm}{\textbf{Nested\\  solving}} } 
            &CFR Iterations  & 3000 \\
            & Prior $\PriorEpsilon$ & $2 \cdot 10^{-2}$\\
            & MVS strategies & 10 \\
            & Depth-limit & 1 \\
        \bottomrule
    \end{tabular}
    \caption{Hyperparameters.}
    \label{tab:hyperparameters}
\end{table}
\section{Game Rules}
We provide rules of all the games we have used in our experiments.

\subsection{Imperfect-Information Goofspiel}
Goofspiel N is played by two players, each starting with cards numbered 1 to N. The dealer has the same set of cards. Each turn, the dealer reveals a single card to both players, then both players secretly bet one of their cards. The dealer observes both cards and awards points to the player with the higher bid. In a case of the draw, the dealer discards the card. The players can play the same card at most once. In our experiments, the dealer's cards are dealt in a fixed order from the highest-valued card to the lowest.

\subsection{Leduc Hold'em}
Leduc hold'em is a toy variant of Texas hold'em poker. Only two players play the game with 6 cards in two suits: Spades, Hearts, and three ranks: Jack, Queen, King. The game starts with a mandatory bet of 1 coin from both players. Then each player is secretly dealt one card from the dealer. The game consists of two betting rounds, where each player can either fold, call, or raise. When the player folds, the game immediately ends, and the opponent receives all the coins in the pot. If the player calls, it adds coins to the pot, so there is the same number of coins from both players, and the game proceeds to the next betting round. If the player raises, it adds the same number of coins to the pot as in a call, and it also places either 2 more coins in the first round or 4 in the second round. In each turn, the players can raise at most twice. After the first round, the dealer deals a single public card, which both players observe. After the second round, the game ends, and the player with the higher combination of private and public cards wins and gets all the coins in the pot. The ranking of cards is that the player wins if they have the same card as the public one. If neither player has the same card as the public one, the higher-ranked card wins. The ranks are in this order from highest: King, Queen, Jack. If both players hold the same rank, the game ends in a draw, and the players split the pot.

\subsection{Liar's Dice}
Liar's dice N, M is played by two players, each holding N M-sided dice. The game starts by each player secretly rolling all of their dice. Then the players alternate in betting starting with player 1. The goal of the bet is to guess at least how many dice are rolled on a given side across all players, so each bet consists of two numbers: the side number and the number of dice rolled that side. Each subsequent bet has to either have a larger side number or more dice. Instead of a bet, a player can refute the opponent's bet. The game ends either when the largest bet is made or when the player rejects the bet. At the end, each player shows their dice, and if the bet was right, the player with that bet wins; otherwise, the refuting player wins. Moreover, the largest side number M serves as a universal number that can be treated like any other number.

\section{Sequence-Form Linear Program}
We provide further details on the sequence-form linear program. $\SQFUtility{} \in \RealNumbers^{|\Sequences{1}| \times |\Sequences{2}|}$ is a utility matrix, which for each combination of sequences by both players contains a utility weighted by chance. $\SQFInitRealization{\Player} \in \RealNumbers^{|\Infoset{\Player}|}$ is a vector that contains the initial reaches of each information set. It contains 1 in the position corresponding to information sets, where the player $\Player$ plays for the first time, and 0 elsewhere. $\SQFRealizationConstraints{\Player} \RealNumbers^{|\Infoset{\Player}| \times \Sequences{\Player}} $. The sequence-form linear program differs in factored observation stochastic games, because each node is associated with an information set, compared to extensive-form games, where the information set is defined only for the acting player 
\begin{subequations}
\label{eq:sqf2}
\begin{align}
    \max_{\SQFStrategy{1}, \SQFValue{2}} &\SQFInitRealization{2}^\Transposition \SQFValue{2} \label{eq:sqf_obj2} \\
    \SQFRealizationConstraints{1} \SQFStrategy{1} &= \SQFInitRealization{1} \label{eq:sqf_primals2} \\
    \SQFUtility{}^\Transposition  \SQFStrategy{1}  - \SQFRealizationConstraints{2}^\Transposition \SQFValue{2}  &\geq \boldsymbol{0} \label{eq:sqf_duals2}  \\
    \SQFStrategy{1} &\geq \boldsymbol{0} \label{eq:sqf_probs2}
\end{align}
\end{subequations}

\section{LLM Usage}
Throughout the work on this paper, we have used several LLMs. The authors double-checked every LLM output before it was used in the final version. We have used these LLMs for the following tasks.
\begin{itemize}
    \item Gemini 3 was used during the writing process to improve the quality of the written text.
    \item Sonnet 4.5 was used during the implementation of experiments as a coding assistant.
\end{itemize}
    
\end{appendixcontent}

\end{document}